\definecolor[named]{ACMBlue}{cmyk}{1,0.1,0,0.1}
\definecolor[named]{ACMYellow}{cmyk}{0,0.16,1,0}
\definecolor[named]{ACMOrange}{cmyk}{0,0.42,1,0.01}
\definecolor[named]{ACMRed}{cmyk}{0,0.90,0.86,0}
\definecolor[named]{ACMLightBlue}{cmyk}{0.49,0.01,0,0}
\definecolor[named]{ACMGreen}{cmyk}{0.20,0,1,0.19}
\definecolor[named]{ACMPurple}{cmyk}{0.55,1,0,0.15}
\definecolor[named]{ACMDarkBlue}{cmyk}{1,0.58,0,0.21}
\renewcommand{\emptyset}{\varnothing}
\newcommand{\code}[1]{\texttt{#1}}
\newcommand{\tool}[1]{\textsf{#1}}
\newcommand{\sfC}{\tool{C}}
\newcommand{\LLVM}{\tool{LLVM}\xspace}
\newcommand{\aprove}{\tool{AProVE}}
\newcommand{\clang}{\tool{Clang}\xspace} 
\newcommand{\PP}{\mathcal{P}}
\newcommand{\Blks}{\mathit{Blks}}
\newcommand{\Ids}{\mathcal{V}_{\PP}}
\newcommand{\Idsi}{\mathcal{V}_{\PP}^{\mathit{fr}}}
\newcommand{\Pos}{p}
\newcommand{\GG}{\mathcal{G}}
\newcommand{\PPos}{\mathit{Pos}}
\renewcommand{\I}{\mathcal{I}}
\newcommand{\bigast}{\mathop{\mathlarger{\mathlarger{\mathlarger{*}}}}}
\newcommand{\pointstosl}{\hookrightarrow}
\newcommand{\SL}{\mathit{SL}}
\newcommand{\FOL}{\mathit{FO}}
\newcommand{\CON}{\mathit{CON}}
\newcommand{\N}{\mathbb{N}}
\newcommand{\Z}{\mathbb{Z}}
\newcommand{\corrvar}{\leadsto}
\newcommand{\Prog}{\mathcal{P}}
\newcommand{\cto}{\to_{\mbox{\scriptsize{\LLVM}}}}
\newcommand{\ccto}{\to_{\mbox{\scriptsize{\emph{\LLVM}}}}}
\newcommand{\ERROR}{\mathit{ERR}}
\newcommand{\nxt}[1]{\overline{#1}}
\newcommand{\renamed}{\delta}
\newcommand{\conttop}[1]{\ensuremath{\widehat{#1}}} 
\newcommand{\codevar}[1]{\ensuremath{\mathtt{#1}}}
\newcommand{\codevarx}[1]{\protect{\ensuremath{\mathtt{#1}}}}
\newcommand{\rfunc}{\codevar{func}}
\newcommand{\sfunc}{\codevar{func}}
\newcommand{\repbyR}[1][]{\ensuremath{\in^{\mathrm{w}}_{#1}}} 
\newcommand{\repby}{\repbyR[]}
\newcommand{\pointsto}[1][]{\hookrightarrow_{#1}} 
\newcommand{\domain}{\mathop{domain}} 
\newcommand{\PT}{\mathit{PT}}
\newcommand{\MT}{\PT}
\newcommand{\KB}{\mathit{KB}}
\newcommand{\CS}{\mathit{CS}}
\newcommand{\LV}{\mathit{LV}}
\newcommand{\FR}{\mathit{FR}}
\newcommand{\LVi}{\mathit{LV}_{\!i}}
\newcommand{\CV}{\mathit{VI}}
\newcommand{\llangle}{\langle\!\langle}
\newcommand{\rrangle}{\rangle\!\rangle}
\newcommand{\stateformula}[1]{\langle{#1}\rangle}
\newcommand{\extendedformula}[1]{\llangle{#1}\rrangle}
\newcommand{\stateformulaSL}[1]{\langle{#1}\rangle_{\mathit{SL}}}
\newcommand{\alloc}[2]{\ensuremath{\llbracket{}#1,\,#2\rrbracket}}
\newcommand{\QFIA}{\mathit{QF\_IA}}
\newcommand{\AL}{\mathit{AL}}
\newcommand{\ALL}[1][]{\ensuremath{\mathit{AL}_{#1}^\ast}}
\newcommand{\Vsym}{\mathcal{V}_{\mathit{sym}}}
\newcommand{\functionmap}[0]{\rightarrow}
\newcommand{\partialfunctionmap}[0]{\rightharpoonup}
\newcommand{\sizeOf}{\mathit{size}}
\newcommand{\slassignment}{\mathit{as}}
\newcommand{\slmemory}{\mathit{mem}}
\newcommand{\updatefct}[3]{#1[#2 := #3]}
\newcommand{\GraphFNodeOne}{A}
\newcommand{\GraphFNodeTwo}{B}
\newcommand{\GraphFNodeSix}{C}
\newcommand{\GraphFNodeSeven}{D}
\newcommand{\GraphFNodeTen}{E}
\newcommand{\GraphFNodeEleven}{F}
\newcommand{\GraphFNodeTwelve}{G}
\newcommand{\GraphFNodeThirteen}{H}
\newcommand{\GraphFNodeFourteen}{I}
\newcommand{\GraphFNodeFifteen}{J}
\newcommand{\GraphFNodeEighteen}{K}
\newcommand{\GraphFNodeNineteen}{L}
\newcommand{\GraphFNodeTwenty}{M}
\newcommand{\GraphMainNodeCall}{V}
\newcommand{\GraphMainNodeCallAbstraction}{W}
\newcommand{\GraphMainNodeIntersectionUnsat}{X}
\newcommand{\GraphMainNodeIntersection}{Y}
\newcommand{\GraphMainNodeAfterStore}{Z}
\newcommand{\lostAL}{\mathit{removedAL}}
\newcommand{\seqs}
\newcommand{\implies}{\Rightarrow}
\newcommand{\dotsc}{\ldots}
\newcommand{\text}[1]{\mbox{#1}}
\newcommand{\eqref}[1]{(\ref{#1})}
\DeclareMathAlphabet
{\mathttit}{OT1}{lmtt}{l}{it}
\newlist{labeledenumerate}{enumerate}{10}
\setlist[labeledenumerate]{leftmargin=*,labelindent=1em,topsep=5pt}
\tikzstyle{eval-edge}=[-stealth]
\tikzstyle{gen-edge}=[-stealth]
\tikzstyle{int-gen-edge}=[-stealth]
\tikzstyle{ca-edge}=[-stealth]
\tikzstyle{fs-edge}=[-stealth]
\tikzstyle{omit-edge}=[-stealth, dotted]
\tikzset{
    invisible/.style={opacity=0},
    visible on/.style={alt={#1{}{invisible}}},
    alt/.code args={<#1>#2#3}{%
      \alt<#1>{\pgfkeysalso{#2}}{\pgfkeysalso{#3}} 
    },
  }
\tikzset{onslide/.code args={<#1>#2}{%
  \only<#1>{\pgfkeysalso{#2}} 
}}
\tikzset{
    middle zigzag/.style={
        decorate,
        decoration={
            middlezigzag,
            meta-segment length=#1,
            segment length=0.5cm,
			amplitude=1.5pt
        }
    },
    middle zigzag/.default=1cm
}
\crefname{section}{Sect.}{Sects.}
\Crefname{section}{Section}{Sections}
\crefname{algorithm}{Alg.}{Algs.}
\Crefname{algorithm}{Algorithm}{Algorithms}
\crefname{theorem}{Thm.}{Thms.}
\crefname{line}{line}{lines}
\renewcommand{\orcidID}[1]{{\href{https://orcid.org/#1}{\protect\raisebox{3.25pt}{\protect\includegraphics[alt={orcid}]{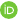}}}}}
\newcounter{auxctr}
\newcounter{cor-memory-safety}
\newcounter{thm-soundness-graph-construction-main-text}
\newcounter{thm-termination}
\begin{document}

\title*{\tool{AProVE}: Modular Termination Analysis of Memory-Manipulating \tool{C}
  Programs}
\author{Frank Emrich,\orcidID{0000-0002-8591-2740}\\ Jera
  Hensel,\orcidID{0000-0003-2852-9830} and\\J\"urgen Giesl\orcidID{0000-0003-0283-8520}}
\institute{Frank Emrich \at  The University of Edinburgh, Edinburgh, UK, \email{frank.emrich@ed.ac.uk}
\and Jera Hensel \at  RWTH Aachen University, Aachen, Germany, \email{hensel@cs.rwth-aachen.de}
\and J\"urgen Giesl \at  RWTH Aachen University, Aachen, Germany, \email{giesl@cs.rwth-aachen.de}}
%
%
\maketitle

\abstract{Termination
  analysis
  of \tool{C} programs is a challenging task. On the one
hand, the analysis needs to be precise enough to draw meaningful conclusions.
On the other hand, relevant programs in practice
are
large and require substantial abstraction. It is this inherent trade-off that
is the crux of the problem. In this work, we present \tool{AProVE}, a tool
that uses symbolic execution to analyze termination of memory-manipulating
\tool{C} programs. 
While traditionally, \tool{AProVE}'s focus was on the preciseness of the
analysis, we describe how we adapted our approach
towards a modular analysis. Due to this  adaption, our approach can now also handle recursive
programs. Moreover, we present
further performance improvements which we developed to make \tool{AProVE} scale to
large programs.}

\keywords{Termination analysis, \tool{C} programs, Recursion, Modularity, Memory safety}

\section{Introduction}
\label{sect:introduction}

\aprove{} \cite{AProVE-JAR} is a tool for termination and complexity analysis of many programming
languages including \sfC{}. Its approach for termination analysis of \sfC{}
programs focuses in particular on the connection between memory addresses and their contents.
To
avoid handling all intricacies of \sfC, we use the \tool{Clang} compiler \cite{Clang} to transform
programs into the platform-independent
intermediate representation of the \LLVM{} Compilation Framework
\cite{LLVM}.
As we presented in~\cite{LLVM-JAR}, in the first step,
our technique constructs a \emph{symbolic execution graph} (SEG)  which
over-approximates all possible program runs and models memory addresses and contents
explicitly. As a prerequisite for termination, \aprove{} shows the absence of
undefined behavior during the construction of the SEG. In this way, our approach also
proves memory safety of the program.
Afterwards, the strongly connected components (SCCs) of the graph are transformed into
integer transition systems (ITSs) whose termination implies termination of the
original \sfC{} program.
To analyze termination of the ITSs,
we apply
standard techniques which are implemented in a 
back-end that \tool{AProVE} 
also uses for termination analysis of
other programming languages. Here, the satisfiability checkers \tool{Z3} \cite{Z3}, \tool{Yices} \cite{Yices}, and
\tool{MiniSAT} \cite{MiniSAT} are applied 
to solve the search problems that arise
during the termination proofs.
Moreover, we also use the tool \tool{KoAT} \cite{TOPLAS16,KoAT-IJCAR22} in the back-end,
which can analyze both termination and complexity of ITSs, see \cite{TACAS25}.

Sometimes, the SEG does not contain over-approximating steps but it
models the program precisely. Then, non-termination of the ITS resulting
from an
SCC of the graph together with a path from the root of the graph to
the respective SCC implies non-termination of the program. In this case,
our approach can also prove non-termination of \sfC{} programs \cite{TACAS17,TACAS22} by
using the tools \tool{LoAT} \cite{IJCAR22,LoAT-CADE23} and 
\tool{T2} \cite{T2} to show non-termination of the corresponding ITS. (\aprove's
own back-end does not support the analysis of ITSs where runs may only begin with designated start terms.)
While integers were considered to be unbounded in \cite{LLVM-JAR}, we extended our
approach to handle bitvector arithmetic and also discussed the use of our approach for
complexity analysis of \sfC{} programs in \cite{JLAMP}.

\smallskip

 \begin{center}
   \begin{tikzpicture}[font=\scriptsize]
     \node[shape=rectangle,draw=black, minimum width=5mm]
       (C) {\tool{C}};
     \node[shape=rectangle,draw=black, minimum width=5mm]
       (LLVM) [right=1.1 of C] {\tool{LLVM}};
     \node[shape=ellipse,  draw=black, minimum height=4mm, minimum width=5mm]
       (SEG) [right=0.6 of LLVM]
       {\parbox{.8cm}{\centerline{Symbolic}\centerline{Execution}\centerline{Graph}}};
     \node[shape=rectangle,  draw=black, minimum height=4mm, minimum width=5mm]
       (ITS) [right=0.8 of SEG, yshift=3mm]
       {ITS};
     \node[shape=rectangle,rounded corners, draw=black, minimum height=4mm, minimum width=27mm]
       (Complexity) [right=0.8 of ITS]
       {Complexity};
     \node[shape=rectangle,rounded corners, draw=black, minimum height=4mm, minimum width=27mm]
       (Termination) [above=0.15 of Complexity]
       {Termination};
     \node[shape=rectangle,rounded corners, draw=black, minimum height=4mm, minimum width=27mm]
       (Non-Termination) [below=0.15 of Complexity]
       {Non-Termination};
     \node[shape=rectangle,rounded corners, draw=black, minimum height=4mm, minimum width=27mm]
       (Safety) [below=0.15 of Non-Termination]
       {Memory Safety};
     \path[thick,->] (C.east) edge node[above=-0.05]{\tool{Clang}} (LLVM.west);
     \path[thick,->] (LLVM.east) edge (SEG);
     \path[thick,->,bend left=4] (SEG) edge (ITS.west);
     \path[thick,->,bend left=18] (ITS) edge (Termination.west);
     \path[thick,->] (ITS) edge node[above right=.07 and
       -.25]{\mbox{\tiny \tool{KoAT}}}
     (Complexity.west);
     \path[thick,->,bend right=18] (ITS) edge node[above right=-.05 and
       -.25]{\mbox{\tiny \tool{LoAT},\tool{T2}}} ($(Non-Termination.west)+(0,0.1)$); 
     \path[thick,->,bend right=5] (SEG) edge ($(Non-Termination.west)-(0,0.1)$);
     \path[thick,->,bend right=10] (SEG) edge (Safety.west);
   \end{tikzpicture}
 \end{center}

\medskip
 
We showed how our approach  
supports programs with several functions in \cite{LLVM-JAR}, but up to now it could not analyze
functions in a modular way and it could not deal with recursion.\footnote{A paragraph with a
preliminary announcement of an extension of our approach to recursion was given in our
report for \emph{SV-COMP 2017} \cite{TACAS17}.}
For symbolic execution, the approach of~\cite{LLVM-JAR} used an abstraction that only
considered the values of program variables and the memory. 

In this work, we extend this approach to also support the abstraction of call
stacks, which allows us to re-use previous analyses of auxiliary functions in a
modular way. Moreover, in this way we can
analyze recursive programs as well.
Our technique of abstracting from the exact shape of the call stack in the
symbolic execution graph is based on our earlier approach for termination analysis of
\tool{Java Bytecode} (\tool{JBC})
in \cite{RTA11}. However, \cite{RTA11}
 is tailored to \tool{JBC} and thus has to support \tool{Java}'s object
 orientation and memory model. In contrast, the analysis in the current paper
supports features that are not
present in \tool{JBC}, like explicit allocation and deallocation of memory, as well
as pointer arithmetic. So the challenge for the extension of our approach for \sfC{}
termination analysis is to combine the byte-accurate representation of
the memory with the modular handling of (possibly
recursive) functions.

We recapitulate the abstract states of our symbolic execution in \cref{sect:domain}
and introduce our new approach to construct SEGs that handle functions in a modular way in
\cref{sect:seg}.
As mentioned before,  we also prove the absence of undefined behavior during this construction.
Afterwards, we
present the transformation into
ITSs whose termination implies termination of the \sfC{} program (\cref{sect:its}).
\cref{sect:eval} discusses our implementation and points out \aprove's
strengths and weaknesses, gives an overview on related work, and evaluates our
contributions empirically in comparison to other tools.
 App.\ \ref{app:semantics-sl} discusses details on the semantics of abstract states that we omitted from
the main part of the paper. Finally, App.\  \ref{app:proofs} contains all
proofs.

\paragraph{\aprove{} at \emph{SV-COMP}} In 2014, the \emph{Termination} category was added
to the demonstration track of
the \emph{International Competition on Software Verification
  (SV-COMP)}.\footnote{See
    \url{https://sv-comp.sosy-lab.org/}.} 
Back then, our tool was only able to prove termination for non-recursive programs.
One year later, \emph{Termination} became an official category. We implemented first support
to handle recursion, which already led to many successful termination proofs of small
recursive programs at \emph{SV-COMP} 2015.
In 2015 and 2016, we integrated the treatment of bitvector arithmetic
and overflows into our tool. Moreover, we developed
two different approaches to prove non-termination, where the first approach is reflected by \aprove{}'s first non-termination proofs
at \emph{SV-COMP} 2016, and the second by more powerful non-termination results at
\emph{SV-COMP} 2017.
In the following year, we generalized the techniques that \aprove{} uses for recursive
functions
in order to modularize the analysis also for 
non-recursive functions.
Furthermore, we integrated heuristics for the analysis of large programs.
Both extensions are described in the current paper and led to a significant number of new termination
proofs for recursive programs and for large programs with several functions.
Since \emph{SV-COMP} 2019, \aprove{} is able to produce non-termination witnesses and
to analyze termination of simple programs with recursive data structures. In \cite{CADE23}, we
extended this approach to the handling of more complex programs where termination
depends on the shape and the contents of recursive data structures.

Due to personal reasons, we were not able to submit our tool to \emph{SV-COMP} 2020
and \emph{SV-COMP} 2021, but we participated in \emph{SV-COMP} 2022 and \emph{SV-COMP} 2025 again.
In all these years, \aprove{} was always among the top three
(and often first or second)
in the ranking of the
\emph{Termination} category.

\paragraph{Limitations} As discussed in~\cite{LLVM-JAR}, some features of \LLVM are not
yet supported by our approach (e.g., we do not handle \code{undef}, floating point numbers,
or vectors).
Moreover, to ease the presentation, we do not regard
\code{struct}  types and
 we
again
disregard integer overflows and treat integer
types as unbounded in this paper.
For simplicity, we assume a 1 byte data alignment (i.e., values may be stored
at any address).
However, the handling of arbitrary alignment
is implemented in \aprove{}  and we refer to~\cite{LLVM-JAR} for
details. Finally, we do not consider \emph{disproving}
properties 
like memory safety or termination in this paper.

\section{Abstract Domain for Symbolic Execution}\label{sect:domain}

We use the following program from the \emph{Termination} category of \emph{SV-COMP} to
demonstrate our approach. Here, we  assume
\code{nondet\_int} to return a random integer.
The function \code{f} gets an integer pointer \code{p} as input. If the integer \code{*p}
is already negative, then the memory allocated by \code{p} is released and the integer is
returned. Otherwise, \code{f} recursively decrements the integer until it is negative
(i.e., until one reaches -1). The function \code{main} uses a 
non-deterministically chosen integer \code{i}. As long as this integer is positive, it is
copied to a new address \code{op}, and \code{f(op)} is added to the integer. Since
\code{f} always returns a negative number as its result, the \code{while}-loop of the
function \code{main} terminates. To ease readability, we use these two functions
as a minimal example which illustrates how our technique handles side effects and
explicit memory management in the context of recursion, and how it allows the re-use of
previous analyses. See \cref{sect:eval} for an evaluation of our approach
on more realistic (and more complex) functions.

\bigskip

\hspace*{-.6cm}
\begin{minipage}{\textwidth}
\begin{verbatim}
int f(int* p) {          int main() {
    if (*p < 0) {            int i = nondet_int(); 
       int pv = *p;          while (i > 0) {
       free(p);                int* op = malloc(sizeof(int));
       return pv; }            *op = i;
    (*p)--;                    i += f(op);
    return f(p);             }
}                        }
\end{verbatim}
\end{minipage}

\bigskip

Fig.\ \ref{fig:llvm-code-f} gives the \LLVM{} code corresponding\footnote{The  \LLVM{}
  code in Fig.\ \ref{fig:llvm-code-f} is equivalent to the 
        code produced by the \clang compiler~\cite{Clang}. 
	However, to simplify the presentation, we modified the \LLVM code by using
        \code{i8} instead of \code{i32} integers. 
	\aprove{} can also prove termination of the original \LLVM program that results
        from compiling our example \sfC{} program with \clang.} 
to the function \code{f}.
It consists of the \emph{basic blocks} \code{entry}, \code{rec}, and \code{term}.
We removed the leading \code{\%} from variable names and numbered the instructions in each
block to increase readability. 
The execution of \code{f} starts in the block \code{entry}.
The semantics of the  \LLVM{} code will be discussed  in
\cref{sect:seg} when we construct the SEG.

\begin{figure}[t]
  \sidecaption[t]
  \begin{minipage}{7.2cm}
\vspace*{-1.5cm}
    \begin{boxedminipage}{7.1cm}
      \begin{alltt}
        \begin{tabbing}
          ent\=\kill
        define i8 @f(i8* p) \{\\
        entry: \\
        \>0: \=pval = load i8* p\\
        \>1: \>ricmp = icmp slt i8 pval, 0\\
        \>2: \>br i1 ricmp, label term, label rec\\
         rec:\\
   \>0: \>dec = add i8 pval, -1\\
   \>1: \>store i8 dec, i8* p\\
   \>2: \>rrec = call i8 @f(i8* p)\\
   \>3: \>ret i8 rrec\\
    term:\\
    \>0: \>call void @free(i8* p)\\
    \>1: \>ret i8 pval \}
      \end{tabbing}
      \end{alltt}
\end{boxedminipage}
  \end{minipage}
\caption{\LLVM{} code for the function \code{f}}
\label{fig:llvm-code-f}       
\end{figure}

We now recapitulate the notion of \emph{abstract states} from \cite{LLVM-JAR}, which we use for symbolic
execution. Abstract states represent sets of \emph{concrete} states, i.e., 
of configurations during an actual execution of the program.
In these abstract states, the values of the program variables are represented by \emph{symbolic variables} instead
of concrete integers. 
In our abstract domain, a state consists of a call stack $\CS$, a knowledge base $\KB$ with
information about the symbolic variables,
a set $\AL$ describing memory allocations
by \code{malloc}, and a set $\PT$ describing the content of the heap. 
A call stack $\CS = [\FR_1, \dotsc, \FR_n]$ consists of $n$ stack frames $\FR_i$, where
$\FR_1$ is the topmost
and $\FR_2, \dotsc, \FR_n$ are the \emph{lower} stack frames.
We use ``$\cdot$'' to decompose call stacks,
i.e., $[\FR_1, \ldots, \FR_n] = \FR_1 \cdot [\FR_2, \ldots, \FR_n]$.
Given a state
$s$ with call stack $\CS$, its \emph{size} is defined as  $|s| = n$.
The first component of a stack frame $\FR_i$ is a \emph{program position} (\codevar{b},
$k$), indicating that instruction $k$ of block \codevar{b}
  is to be executed next.
To ease the formalization, we assume that different functions do not have basic blocks
with the same names.
Let $\PPos = (\Blks \times \N)$ be the set of all program positions, where  $\Blks$
is  the set of all basic blocks.
As the second component,
each stack frame $\FR_i$ has a partial injective function $\LVi: \Ids
\partialfunctionmap \Vsym$, where ``$\partialfunctionmap$'' indicates partial functions.
Each function $\LVi$ maps \emph{local program variables} $\Ids$
(e.g., $\Ids = \{\code{p}, \code{pval} \dots\}$)
to symbolic variables from an infinite
set $\Vsym$ with  $\Vsym \cap \Ids = \emptyset$.
We require all $\LVi$ in a state to have pairwise disjoint ranges.
\label{def:cs-as-formula}  
We  often extend $\LVi$ to a function from $\Ids \uplus \Z$ to $\Vsym 
\uplus \Z$ by defining $\LVi(n) = n$ for all $n \in \Z$.   
Moreover, we identify  $\CS$ with the set of equations
$\bigcup_{i=1}^n \{ \code{x}_i = \LV_i(\code{x}) \mid 
\code{x} \in \domain(\LVi)\}$, where 
$\domain(\LVi)$ denotes the set of all program variables $\code{x} \in \Ids$ where $\LVi(\code{x})$ is defined.
As a third and last component, each stack frame $\FR_i$ has a set $\AL_i$ of allocations.
It consists of expressions of the form $\alloc{v_1}{v_2}$ for $v_1, v_2 \in \Vsym$,
which indicate that $v_1 \le v_2$ and that all addresses between $v_1$ and $v_2$ have
been allocated by \code{alloca} in the $i$th stack frame.

While the call stack $\CS$ is the first component of an \LLVM\ state,
the second component is a \emph{\underline{k}nowledge
  \underline{b}ase} 	$\KB \subseteq \QFIA(\Vsym)$ of
\underline{q}uantifier-\underline{f}ree first-order formulas that express
\underline{i}nteger \underline{a}rithmetic properties of $\Vsym$. 
For concrete states, the knowledge base constrains the state's symbolic variables such
that their values are uniquely determined, whereas for abstract states several values are
possible. We identify \emph{sets} of first-order formulas $\{\varphi_1, \ldots, \varphi_m\}$ with their
conjunction $\varphi_1 \wedge \ldots \wedge \varphi_m$. 

The third component of a state
is the allocation list $\AL$.
It consists  of  expressions of the form $\alloc{v_1}{v_2}$ for $v_1, v_2 \in \Vsym$,
which mean that $v_1 \le v_2$ and that all addresses between $v_1$ and $v_2$ have
been allocated by \code{malloc}.
In contrast to \code{alloca}, such allocated memory needs to be released explicitly by the programmer.
Let $\AL^*(s) := \bigcup_{i=1}^n \AL_i \cup AL$ denote the set of all allocations of a
state $s$.
We require any two entries $\alloc{v_1}{v_2}$ and $\alloc{w_1}{w_2}$ from $\AL^*(s)$ with
$(v_1,v_2) \neq (w_1,w_2)$ to be disjoint.

The fourth  component $\PT$ is a set of ``\underline{p}oints-\underline{t}o'' atoms $v_1
\pointsto[\codevar{ty}] v_2$ where $v_1,v_2 \in \Vsym$ and  $\codevar{ty}$ is an \LLVM{}
type. This means that the value $v_2$ of type $\codevar{ty}$ is stored at the address $v_1$.
For example, as each memory cell stores one byte,
$v_1 \pointsto[\codevar{i32}] v_2$ states that $v_2$ 
is stored in the four cells  $v_1, \ldots, v_1+3$. 

Finally, we use a special state $\ERROR$ to be reached if we cannot prove absence of
undefined behavior (e.g., if a violation of memory safety by accessing non-allocated
memory might take place). 
\begin{definition}[States]\label{LLVM states}
\LLVM{}  \emph{states} have the form $(\CS, \KB, \AL, \PT)$  where 
  \begin{itemize}
	\item $\CS \in (\PPos \times (\Ids \partialfunctionmap \Vsym) \times \{\alloc{v_1}{v_2} \: | \: v_1, v_2 \in \Vsym\})^\ast$, 
	\item $\KB \subseteq \QFIA(\Vsym)$,
	\item $\AL \subseteq \{\alloc{v_1}{v_2} \: | \: v_1, v_2 \in \Vsym\}$, and
	\item $\PT \subseteq \{ (v_1 \pointsto[\codevar{ty}] v_2) \: | \: v_1, v_2 \in \Vsym,
\text{\codevar{ty} is an \LLVM{} type} \}$. 
  \end{itemize}
In addition, there is a state $\ERROR$ for undefined behavior.
For any state $s$, let $\Vsym(s)$ consist of  all symbolic variables  occurring in $s$.
\end{definition}

As an example, we consider the following  state $A$:
\label{pageref:state-A-in-text}
$$([((\code{entry},0), \{\code{p}_1 = v_\codevar{p}\}, \emptyset)],\emptyset,
\{\alloc{v_\codevar{p}}{v_\codevar{p}}\}, \{ v_\codevar{p} \pointsto[\codevar{i8}]  v_\codevar{*p}\} ) $$ 
It represents concrete states at the beginning of \code{f}'s \code{entry} block, where the value of
the program variable \code{p} in the first and only stack frame is represented by the
symbolic variable $v_\codevar{p}$.
There is an allocation $\alloc{v_\codevar{p}}{v_\codevar{p}}$, consisting of only a single byte,
where the value $v_\codevar{*p}$ is stored. As the knowledge base is empty, we have no
further knowledge about $v_\codevar{*p}$.  We often refer to the components of states by
using superscripts, e.g.,  $\AL^{s}$ refers to the allocation list of a state $s$.

In order to construct the symbolic execution graph, for any state $s$ we define a
first-order formula $\stateformula{s}$, which contains $\KB$ and expresses relations
resulting from the entries in $\AL$ and $\PT$. By
representing states with first-order formulas, we can use standard SMT solving for all
reasoning required in our approach. We also use the first-order formulas
$\stateformula{s}$ for the subsequent  generation of integer transition systems from
symbolic execution graphs. 

\begin{definition}[Representing States by $\FOL$ Formulas]
\label{def:StateFOLFormula} Given a state $s = (\CS, \KB,\allowbreak \AL,\allowbreak \PT)$, the set
$\stateformula{s}$ is the smallest set with
\[
\mbox{ $\begin{array}{rcl}
\stateformula{s} &=& \KB \; \cup \;
\{1 \leq v_1 
\wedge v_1 \leq v_2 \mid \alloc{v_1}{v_2} \in \AL^*(s)\} \; \cup\\
&& \{ v_2 < w_1 \vee w_2 < v_1 \mid
   \alloc{v_1}{v_2},\alloc{w_1}{w_2} \in \AL^*(s), \; (v_1,v_2) \neq (w_1,w_2)\} \; 
   \cup\\
&& \{1 \leq v_1 
   \mid (v_1 \hookrightarrow_{\codevar{ty}} v_2) \in \PT \} \; \cup \\
&& \{v_2 = w_2 \mid (v_1 \hookrightarrow_{\codevar{ty}} v_2), 
   (w_1 \hookrightarrow_{\codevar{ty}} w_2) \in \PT \mbox{ and } \models \, 
   \stateformula{s} \implies v_1 = w_1\} \; \cup\\
&& \{v_1
  \neq w_1 \mid (v_1 \hookrightarrow_{\codevar{ty}} v_2),  
   (w_1 \hookrightarrow_{\codevar{ty}} w_2) \in \PT \mbox{ and } \models \, 
  \stateformula{s} \implies v_2 \neq w_2\}.   
\end{array}$}
\]
\end{definition}

We now formally introduce \emph{concrete} states as states of a particular form.
They determine the values of variables and the contents of the memory \emph{uniquely}.
To enforce a uniform representation, in concrete states we only allow statements of the
form $w_1 \pointsto[\codevar{i8}] w_2$ in $\PT$.
So here we represent memory data byte-wise,
and since $\LLVM{}$ represents values in two's complement, each byte stores a
value from $[-2^7,2^7-1]$.
Moreover, since concrete states represent
actual executions of programs on a machine, we require that their  
set $\PT$   only contains information about addresses that are known to be allocated.

\begin{definition}[Concrete States]
\label{def:concrete_state}
An \LLVM{} state $c$ is \emph{concrete} iff $c = \ERROR$ or $c  =
(\CS,\KB,\AL,\PT)$
such that the following holds:
\begin{itemize}
\item[$\bullet$] $\stateformula{c}$ is satisfiable
\item[$\bullet$]
for all $v \in \Vsym(c)$  there exists an $n \in \Z$ such that $\models  \stateformula{c} \implies v = n$
\item[$\bullet$]
there is no $(w_1 \hookrightarrow_{\codevar{ty}} w_2) \in \PT$ for $\codevar{ty} \neq
\code{i8}$,
\item[$\bullet$]
for all $\alloc{v_1}{v_2} \in \AL^*(c)$ and for all integers
$n$ with
$\models \stateformula{c}  \implies v_1 \leq n \land n \leq v_2$,
there exists
$(w_1 \hookrightarrow_{\codevar{i8}} w_2) \in \PT$ for some $w_1,w_2 \in \Vsym$
such that $\models\stateformula{c} \implies  w_1 = n$ and
 $\models\stateformula{c} \implies w_2 = k$ for some $k \in [-2^7,2^7-1]$
\item[$\bullet$] for every  $(w_1 \pointsto[\codevar{i8}] w_2) \in \PT$, there is a $\alloc{v_1}{v_2} \in
  \AL^*$ such that
    $\models \stateformula{c} \implies v_1 \leq w_1 \leq v_2$.
\end{itemize}
\end{definition}

In \cite{LLVM-JAR}, for every abstract state $s$, we also introduced a \emph{separation logic}
formula $\stateformula{s}_{\SL}$ which extends $\stateformula{s}$ by further
information about the memory.  The semantics of these formulas are defined using
\emph{interpretations} $(\slassignment,\slmemory)$. 
The function $\slassignment$ assigns integer values to the program
variables. 
The function $\slmemory$ describes the
memory contents at allocated addresses.
We recapitulate $\stateformula{s}_{\SL}$, formal definitions of $\slassignment$ and
$\slmemory$, and the semantics
of separation logic in App.\ \ref{app:semantics-sl}.
For any abstract state $s$ we have $\models \stateformula{s}_\SL \implies
\stateformula{s}$, i.e., $\stateformula{s}$ is a weakened version of $\stateformula{s}_\SL$. 
As mentioned, we use $\stateformula{s}$ for the construction of the symbolic execution
graph, enabling standard first-order SMT solving to be used for all reasoning required in
this construction. 

Finally, we recapitulate which concrete states $c \neq \ERROR$ are represented by an abstract state
$s$ according to \cite{LLVM-JAR}. 
\label{def-matching-stacks}Here, we require that the stacks of $c$ and $s$ have the same size, i.e., $|c| = |s|$,
and at each stack index $1 \leq i \leq |s|$ we have $\FR_i^c = (p_i, \LV_i^c, \AL_i^c)$ and
$\FR_i^s = (p_i,\LVi^s,\AL_i^c)$ with $\domain(\LVi^c) =  \domain(\LVi^s)$.
In the next section, we will present a variant of Def.\ \ref{def:representation-same-stack-size} for states of different
stack sizes.

In order to define the representation relation between states with stacks of the same size, we
extract an interpretation $(\slassignment^c,\slmemory^c)$ from concrete states $c$.
Furthermore, we use \emph{concrete instantiations} $\sigma : \Vsym \to \Z$ which map
symbolic variables to integers. An abstract
state $s$ then \emph{represents} a concrete state $c$ if there exists a concrete
instantiation $\sigma$ such that  $(\slassignment^c,\slmemory^c)$ is a model of
$\sigma(\stateformula{s}_\SL)$ and if for each allocation of $s$ there exists a corresponding allocation
in $c$ of the same size.
Here, we extend the concrete instantiation $\sigma$ to
formulas as usual, i.e., $\sigma(\varphi)$ instantiates all free occurrences of $v \in
\Vsym$ in $\varphi$ by $\sigma(v)$.

\begin{definition}[Representing Concrete by Abstract States]
\label{def:representation-same-stack-size}
Let $c = ([(p_{1}, \LV_{1}^c,\allowbreak \AL_{1}^c),\allowbreak \ldots, (p_n, \LV_{n}^c, \AL_{n}^c)],
\KB^c, \AL_{0}^c,\allowbreak \PT^c)$ be a concrete state.
We say that $c$ is \emph{represented} by 
a state $s = ([(p_{1}, \LV_{1}^s, \AL_{1}^s), \ldots, (p_{n},
  \LV_{n}^s, \AL_{n}^s)],\allowbreak \KB^s,\allowbreak \AL_{0}^s,\allowbreak \PT^s)$
iff
\begin{enumerate}
    \item $\domain(\LVi^c) =  \domain(\LVi^s)$ for all $1 \leq i \leq n$,
    \item $(\slassignment^c,\slmemory^c)$ is a \emph{model} of
        $\sigma(\stateformula{s}_\SL)$ for some concrete instantiation $\sigma : \Vsym \to
      \Z$, and
    \item for all $\alloc{v_1}{v_2} \in \AL_{i}^s$ with $0 \leq i \leq n$, there exists $\alloc{w_1}{w_2} \in
	\AL_{i}^c$ such that $\models \stateformula{c} \Rightarrow w_1 = \sigma(v_1) 
	\wedge w_2 = \sigma(v_2)$.\footnote{Note that this condition is new as compared to \cite{LLVM-JAR}.
	However, this additional condition is needed in order to achieve soundness.
	The reason is that if $s$ contains an allocation in stack frame $i$ and $c$ contains
	the corresponding allocation in stack frame $j$ with $j < i$, then after returning
	from stack frame $j$, there would be an allocation in a successor state $\overline{s}$ of $s$ that is
	not represented in the corresponding successor $\overline{c}$ of $c$. Therefore,
        $\overline{c}$ would not be
	represented by $\overline{s}$, which would violate the soundness of our approach.
	}
\end{enumerate}
The error state $\ERROR$ is only represented by $\ERROR$ itself.
\end{definition}

\section{Construction of Symbolic Execution Graphs}\label{sect:seg}

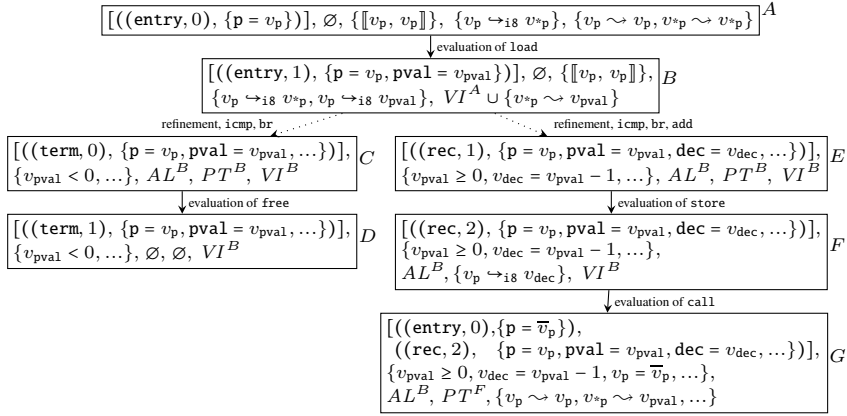
\begin{figure}[t]
\center
\footnotesize
\newcommand{\ydist}{0.3cm}
\newcommand{\xdist}{0.5cm}
\newcommand{\setfont}{\scriptsize}
\newcommand{\labelxshift}{-2pt}

\begin{tikzpicture}[node distance = \ydist and \xdist]
\scriptsize
\def\widetwidth{6.2cm}
\def\smalltwidth{4.75cm}
\def\fulllinewidth{11cm}
\def\edgenodedist{0.09cm}
\def\FirstIndentwidth{3cm}
\def\SecondIndentwidth{2cm}
\tikzstyle{state}=[
                           inner sep=2pt,
                           font=\scriptsize,
                           draw]

\node[state, label={[xshift=\labelxshift]2:$\GraphFNodeOne$}] (1)
{$[((\code{entry}, 0), \, \{\code{p} = v_{\codevar{p}}\})],
       \, \emptyset, \,
    \{ \alloc{v_{\codevar{p}}}{v_{\codevar{p}}} \}, \,
    \, \{ v_{\codevar{p}} \pointsto[\codevar{i8}] v_{\codevar{*p}} \},
    \,  \{v_{\codevar{p}} \corrvar v_{\codevar{p}}, v_{\codevar{*p}} \corrvar v_{\codevar{*p}}\}$
};

\node[state, below=of 1, label={[xshift=\labelxshift]2:$\GraphFNodeTwo$}, align=left] (2)
{$[((\code{entry}, 1), \, \{\code{p} = v_{\codevar{p}}, \code{pval} = v_{\codevar{pval}}\})],
\, \emptyset, \,
    \{ \alloc{v_{\codevar{p}}}{v_{\codevar{p}}} \},$\\$
    \, \{ v_{\codevar{p}} \pointsto[\codevar{i8}] v_{\codevar{*p}},
    v_{\codevar{p}} \pointsto[\codevar{i8}] v_{\codevar{pval}} \},
    \, \CV^\GraphFNodeOne \cup \{v_{\codevar{*p}} \corrvar v_{\codevar{pval}}\}$
};

\path let \p1 = (\widetwidth,0), \p2 =  (\smalltwidth,0), \p3 = (\xdist,0), \p4 = ($0.5*(\p1)-0.5*(\p2)+0.5*(\p3)$)  in coordinate (middle) at ($(2.south)-(\p4)$);

\node[state,label={[xshift=\labelxshift]2:$\GraphFNodeSix$},anchor=north east, text
width=4.9cm,yshift=-\ydist] (6) at (2.south -| middle)
{   $[((\code{term}, 0), \, \{\code{p} = v_{\codevar{p}}, \code{pval} = v_{\codevar{pval}}, ...\})],$\\
      $\{ v_\codevar{pval} < 0, ...\},$
    $\AL^{\GraphFNodeTwo}\!,$
    $\PT^{\GraphFNodeTwo}\!,$
      $\CV^\GraphFNodeTwo$
};

\node[state,label={[xshift=\labelxshift]2:$\GraphFNodeSeven$},below =of 6, text width=4.9cm] (7)
{   $[((\code{term}, 1), \, \{\code{p} = v_{\codevar{p}}, \code{pval} = v_{\codevar{pval}}, ...\})],$\\
       $\{ v_\codevar{pval} < 0, ...\},$
    $\emptyset, \,
    \emptyset,$
     $\CV^\GraphFNodeTwo$
};

\node[state,label={[xshift=\labelxshift]2:$\GraphFNodeTen$}, right =.5cm of 6.north east, anchor=north west,text width=6.2cm] (10)
{   $[((\code{rec}, 1), \, \{\code{p} = v_{\codevar{p}}, \code{pval} = v_{\codevar{pval}}, \code{dec} = v_\codevar{dec}, ...\})], $\\$
   \{ v_\codevar{pval} \ge 0, v_\codevar{dec} = v_\codevar{pval} - 1,  ...\}, \,
    \AL^{\GraphFNodeTwo}\!, \,
    \PT^{\GraphFNodeTwo}\!,
  \,\CV^\GraphFNodeTwo$
};

\node[state,label={[xshift=\labelxshift]2:$\GraphFNodeEleven$},below =of 10, text width=6.2cm] (11)
{   $[((\code{rec}, 2), \, \{\code{p} = v_{\codevar{p}}, \code{pval} = v_{\codevar{pval}}, \code{dec} = v_\codevar{dec}, ...\})],$\\$
     \{v_\codevar{pval} \ge 0, v_\codevar{dec} = v_\codevar{pval} - 1,  ...\},$\\$
    \AL^{\GraphFNodeTwo}\!,
\{ v_\codevar{p} \pointsto[\codevar{i8}] v_{\codevar{dec}} \},\,
     \CV^\GraphFNodeTwo$
};

\node[state,label={[xshift=\labelxshift]2:$\GraphFNodeTwelve$},below left=.3cm and -6.45cm of 11, text width=6.4cm] (12)
{\begin{tabbing}$[$\=$((\code{entry}, 0),$ \=$\{ \code{p}
= \overline{v}_{\codevar{p}} \}),$\\
\>$((\code{rec}, 2),$\>$\{\code{p} = v_{\codevar{p}}, \code{pval} =
v_{\codevar{pval}}, \code{dec} = v_\codevar{dec}, ...\})],$\\
$\{v_\codevar{pval} \ge 0, v_\codevar{dec} = v_\codevar{pval} - 1, v_\codevar{p} = \overline{v}_{\codevar{p}},
	... \},$\\$
    \AL^{\GraphFNodeTwo}\!, \,
    \PT^{\GraphFNodeEleven}\!,
\{v_{\codevar{p}} \corrvar v_{\codevar{p}}, v_{\codevar{*p}} \corrvar v_{\codevar{pval}}, ...\}$
\end{tabbing}
};

\draw[eval-edge] (1) -- node[anchor=west] {\tiny evaluation of \code{load}} (2);
\draw[omit-edge] (2) --  node[anchor=east,xshift=-10pt] {\tiny  refinement, \code{icmp}, \code{br}}(6);
\draw[omit-edge] (2) --  node[anchor=west,xshift=10pt] {\tiny  refinement, \code{icmp}, \code{br}, \code{add}} (10);
\draw[eval-edge] (6) -- node[anchor=west] {\tiny evaluation of  \code{free}} (7);
\draw[eval-edge] (10) -- node[anchor=west] {\tiny evaluation of \code{store}} (11);
\draw[eval-edge] (11) -- node[anchor=west] {\tiny evaluation of \code{call}} (12);

\end{tikzpicture}
\caption{\label{fig:initial-graph-function-f}Initial states of the symbolic execution graph of function \code{f} }
\end{figure}

In
\cref{fig:initial-graph-function-f},
we start constructing the symbolic execution graph for the function \code{f} from
\cref{fig:llvm-code-f}, independently of \code{main}.\footnote{In principle one could
  analyze some functions of the program
  in a modular way and use our previous non-modular approach from \cite{LLVM-JAR} for other functions. However, to
  ease the presentation, in this paper we assume that our new modular treatment is used for all
  functions. In our implementation in \aprove, we indeed apply our new modular approach
  for all functions except those that only consist of straightline code, i.e., that do not
  have any branching.}  
Here, we omit the index of the program variables in
stack frames, i.e., we write ``$\code{p} = v_{\codevar{p}}$'' instead of ``$\code{p}_1 =
v_{\codevar{p}}$''.
Moreover, to ease readability, some parts of the states are abbreviated by ``$\dots$'',
and allocations in the individual stack frames are omitted since they are empty throughout
this graph.
The last state component $\CV$ will be introduced later and can be ignored for
now.
The initial state for our analysis is $\GraphFNodeOne{}$, which we already considered
after Def.\ \ref{LLVM states}.
It is at the first program position in \code{f}. Therefore the next instruction
\code{load}s the value stored at \code{p} to \code{pval}. 
We re-use the symbolic execution rules from \cite{LLVM-JAR} for all steps not 
involving function calls.
As an example, we briefly recapitulate the \code{load} rule to give an
idea of the general graph construction.
For the formal definition of the remaining rules, we refer to \cite{LLVM-JAR}.

The following rule is used to symbolically evaluate a state $s$ to
a state $\overline{s}$ by
loading the value of type
\code{ty} stored at some address \code{ad} into the variable \code{x}.
For any type \codevar{ty}, let $\sizeOf(\codevar{ty})$ denote the size of \codevar{ty} in
bytes. For example, $\sizeOf(\codevar{i32}) = 4$.
As each memory cell stores one byte, we first have to check whether the 
addresses $\texttt{ad}, \ldots, \code{ad} + \mathit{size}(\code{ty})-1$ are 
allocated, i.e., whether there is a $\alloc{v_1}{v_2} \in \ALL$
such that $\stateformula{s} \Rightarrow (v_1 \leq \LV_1(\code{ad}) \; \wedge \; 
\LV_1(\code{ad})+\mathit{size}(\code{ty})-1 \leq v_2)$ is valid.
Then, we reach a new state where the previous position 
$\Pos = (\code{b}, k)$ is updated to the position $\Pos^+ = (\code{b}, k+1)$ of 
the next instruction in the same basic block, and we set $\LV_1(\code{x}) = w$ 
for a fresh $w \in \Vsym$.
Here we write $\updatefct{\LV_1}{\code{x}}{w}$ for the function
where $(\updatefct{\LV_1}{\code{x}}{w})(\code{x}) = w$ and for $\code{y}
\neq \code{x}$, we have $(\updatefct{\LV_1}{\code{x}}{w})(\code{y}) = \LV_1(\code{y})$.
Moreover, we add $\LV_1(\code{ad}) \hookrightarrow_{\codevar{ty}} w$ to $\MT$. 
Thus, if $\MT$ already contained a formula $\LV_1(\code{ad}) \hookrightarrow_{\codevar{ty}} u$,
then $\stateformula{s}$ implies $w = u$.

\begin{center}
\fbox{ 
\begin{minipage}{11.3cm}
\small
\mbox{\small \textbf{\hspace*{-.15cm}\texttt{load} from allocated memory 
($\Pos:$ ``\texttt{x = load ty* ad}'' with $\code{x},\texttt{ad}\in\Ids$)}}\\

\centerline{$\frac{\parbox{9.2cm}{\centerline{
$s = ((\Pos, \; \LV_1, \; \AL_1) \cdot \CS, \; \KB, \; \AL, \; \MT)$}\vspace*{.1cm}}}
{\parbox{9.2cm}{\vspace*{.1cm}
$\overline{s} = ((\Pos^+\!\!, \; \updatefct{\LV_1}{\code{x}}{w}, \; \AL_1) \cdot \CS, \;\KB, \; \AL, \; \MT \cup 
\{\LV_1(\code{ad}) \hookrightarrow_{\codevar{ty}} w \})$}}$ \quad \mbox{if}}

\vspace*{.2cm}

\hspace*{-.3cm}\begin{minipage}{14cm}
\begin{itemize}
\item[$\bullet$] \mbox{\small there is $\alloc{v_1}{\!v_2}\!\in\!\AL^*$ with 
                 $\models \stateformula{s} \Rightarrow 
                 (v_1 \leq \LV_1(\code{ad}) \, \wedge \, 
                 \LV_1(\code{ad}) + \mathit{size}(\code{ty}) -1 \leq v_2)$,}
\item[$\bullet$] \mbox{\small $w \in \Vsym$ is fresh}
\end{itemize}
\end{minipage}
\end{minipage}} 
\end{center}

State $\GraphFNodeTwo$ arises from applying this rule,
i.e., from evaluating the \code{load} instruction and thus, there is an \emph{evaluation
  edge} from
$\GraphFNodeOne$ to $\GraphFNodeTwo$. In $\GraphFNodeTwo$,
a new variable $v_\codevar{pval}$
is introduced for the value of the program variable $\code{pval}$.
If we could not prove memory safety of the operation, we would create an edge to $\ERROR$ instead.
The new  entry $(v_\codevar{p} \pointsto[\codevar{i8}]  v_\codevar{pval})$ in
$\PT^{\GraphFNodeTwo}$ denotes that $v_\codevar{pval}$ is the value at the address $v_\codevar{p}$. 
Thus, we have $(v_\codevar{*p} = v_\codevar{pval}) \in \stateformula{B}$.

The next instruction sets the variable \code{ricmp} to the result of an integer comparison
(\code{icmp}),
 based on whether \code{pval} is negative or not (i.e., \code{slt} stands for
``\underline{s}igned \underline{l}ess \underline{t}han'').
The instruction cannot be evaluated directly as there is no knowledge about the value of
$v_\codevar{pval}$ in $\KB^{\GraphFNodeTwo}$. 
Therefore, we perform a case analysis by creating outgoing \emph{refinement edges} to two
successors of the state ${\GraphFNodeTwo}$ where the knowledge base is extended by
$v_\codevar{pval} < 0$ and $v_\codevar{pval} \ge 0$, respectively.
For the sake of brevity we directly evaluate some subsequent instructions in both branches
and omit the intermediate states in \cref{fig:initial-graph-function-f}.

In the case with $v_\codevar{pval} < 0$, this yields the state $\GraphFNodeSix$ after the
execution of \code{icmp} and the \code{br} instruction, which \underline{br}anches to the
block \code{term}.
Analogously, for $v_\codevar{pval} \ge 0$, this yields the state $\GraphFNodeTen$ after the
execution of the \code{icmp}, \code{br}, and \code{add}  instructions. 

State $\GraphFNodeSix$ is at the \code{call} of the \code{free} instruction in the block
\code{term}, corresponding to the base case of the recursive function \code{f}.
Evaluation
of the \code{free} instruction
yields $\GraphFNodeSeven$, where the entries for
the pointer \code{p} have been removed from $\AL$ and $\PT$.
We refer to states like $\GraphFNodeSeven$, whose only stack frame is at a
\underline{ret}urn instruction of a function $\rfunc$, as \emph{return
  states} of $\rfunc$.

In State $\GraphFNodeTen$, 
one has to \code{store} the value of \code{dec} at the address
\code{p}, where 
$v_{\codevar{dec}} = v_\codevar{pval} - 1$ holds due to the previous \code{add} instruction.
Thus, in the resulting\footnote{%
	The symbolic execution rule for \code{store} in \cite{LLVM-JAR} always creates
        a fresh variable and an equality constraint for the value to be stored. 
	When storing a program variable instead of a numerical literal (i.e., a number), one can however
        re-use the existing symbolic variable  without influencing the analysis further,
        which we did here to ease readability.
	}
state $\GraphFNodeEleven$,
the new value at \code{p} is denoted by $(v_\codevar{p} \pointsto[\codevar{i8}] v_\codevar{dec}) \in \PT^{\GraphFNodeEleven}$.
Evaluation of the \code{call} instruction 
in $\GraphFNodeEleven$ yields
$\GraphFNodeTwelve$, whose topmost stack frame is at the beginning of the recursive
execution of \code{f}. 

In the remainder of the section, we
present our new modular approach for symbolic execution. To this end, we
first show in \cref{Abstracting the Call Stack}
how to abstract the call stack in order to obtain
a separate finite SEG for every (possibly recursive) function.
In \cref{Intersecting States} we explain how to continue the symbolic
execution after returning from a function call.
\cref{Complete SEGs} discusses how to obtain finite complete SEGs for every function.
Finally, \cref{Re-Use} shows how SEGs of
(possibly recursive) auxiliary functions can be re-used in a modular way.

\subsection{Abstracting the Call Stack}\label{Abstracting the Call Stack}

\begin{figure}[t]
\center
\input{graph_f}
\vspace{-0.1cm}
\caption{\label{fig:full-graph-function-f}Symbolic execution graph of function \code{f},
  with states $\GraphFNodeOne{}$ to $\GraphFNodeTwelve{}$ as in \cref{fig:initial-graph-function-f}}
\end{figure}

\cref{fig:full-graph-function-f}
continues
the construction of the SEG for the function \code{f} from
\cref{fig:initial-graph-function-f}.
So its states $\GraphFNodeOne{}$ to $\GraphFNodeTwelve{}$ are the same ones as in Fig.~\ref{fig:initial-graph-function-f}.
In particular, $\GraphFNodeTwelve{}$ corresponds to the start of the execution of the function \code{f} after the recursive call.

Any abstract state $s$ with $|s| > 1$ whose topmost stack frame is at the initial program
position of a function $\rfunc$ is a \emph{call state} of $\rfunc$. 
Note that our SEG already depicts the execution of the function \code{f}, starting in $\GraphFNodeOne$.
To re-use an already existing analysis of a function, we use
\emph{context abstractions}, where lower stack frames of a state are removed.

\begin{definition}[Context Abstraction and Call Abstraction]
\label{def:context-abstraction}
	Let $s = ([(p_1,\allowbreak\LV_1,\allowbreak\AL_1),\allowbreak\dotsc,(p_n,\LV_n,\AL_n)],\KB,\AL,\PT )$ be a state.
	Then for any $1 \leq k  \leq n$, the state $\conttop{s} = ([(p_1,\LV_1,\AL_1),
          \dotsc,(p_{k-1},\LV_{k-1},\AL_{k-1}),(p_k,\LV_k,\widehat{\AL_k})],\KB,\allowbreak\AL,\allowbreak \PT )$ is the \emph{context abstraction}
        of $s$ of size $k$, where $\widehat{\AL_k} = \bigcup_{i=k}^n \AL_i$.
    The \emph{call abstraction} of a state is its context abstraction of size 1.	 
\end{definition}

Note that the bottommost stack frame of the context abstraction contains
  the stack allocations of all removed frames. In this way, the information that these
  parts of the memory have been allocated is still available in the context abstraction.
These stack allocations will be re-assigned to their corresponding stack frames
at a
later stage of the graph construction (see \cref{Memory Information in the Intersection}).

We now extend Def.~\ref{def:representation-same-stack-size} about the representation of
concrete by abstract states, which was limited to states of same stack size.
An abstract state $s$ \emph{weakly represents} a concrete state $c$ if the $|s|$ topmost stack
frames of $c$ are represented by $s$, but $c$ may have further stack frames 
below.

\begin{definition}[Weakly Representing Concrete by Abstract States]
\label{def:weak-representation}
A concrete state  $c$ is \emph{weakly represented} by an abstract state  $s$, denoted $c
\repby s$,
iff $c = s = \ERROR$ holds or there exists a context abstraction $\conttop{c}$ of $c$ 
such 
that 
 $\conttop{c}$ is represented by  $s$ according to
Def.~\ref{def:representation-same-stack-size}.
\end{definition}

To re-use previous states  in the
symbolic execution graph that already analyzed the behavior of a function, 
each call state like $\GraphFNodeTwelve$, which results from  calling a function,
must have an outgoing \emph{call abstraction edge} to its call abstraction (i.e., to its context abstraction of size 1).  
In our example graph, this yields the call abstraction $\GraphFNodeThirteen$, whose only
stack frame is at the beginning of \code{f}.

Note that such a call abstraction step is ``sound'' w.r.t.\ the weak representation
relation $\repby$, since any concrete state that is weakly represented by
$\GraphFNodeTwelve$ is also weakly represented by $\GraphFNodeThirteen$. Indeed, whenever 
 $c \repby s$ holds for some
abstract state $s$ with $|s| > 1$ stack frames, we have $c \repby \conttop{s}$ for all
context abstractions of $s$ of size $1 \leq k \leq |s|$.

The call stacks of $\GraphFNodeThirteen$ and $\GraphFNodeOne$ have the same size and every
concrete state represented by $\GraphFNodeThirteen$ is also represented by
$\GraphFNodeOne$, i.e., $\GraphFNodeOne$ ``\emph{covers}'' $\GraphFNodeThirteen$.
Thus, $\GraphFNodeOne$ is a \emph{generalization} of $\GraphFNodeThirteen$.
Formally, we use the following rule from \cite{LLVM-JAR} 
to determine when to create a \emph{generalization edge} from some abstract state $s$ to
its generalization $\overline{s}$.
It ensures that whenever a concrete state is represented by $s$, then it
is also represented by $\overline{s}$.

\vspace*{.3cm}
\noindent
\fbox{
\begin{minipage}{11.3cm}
\label{rule:generalization}
\small 
\mbox{\small \textbf{\hspace*{-.15cm}generalization with instantiation $\mu$}}\\ 
\vspace*{1mm}\\
\centerline{$\frac{\parbox{8.8cm}{\centerline{
$s = ([(\Pos_1, \; \LV_1, \; \AL_1),
   \ldots,
   (\Pos_n, \; \LV_n, \; \AL_n) ],
\; \KB, \; \AL_0, \; \PT)$}\vspace*{.1cm}}}
{\parbox{6.2cm}{\vspace*{.1cm} \centerline{
$\overline{s} = ([(\Pos_1, \; \overline{\LV}_1, \; \overline{\AL_1}),
   \ldots,
   (\Pos_n, \; \overline{\LV}_n), \; \overline{\AL}_n) ],
\; \overline{\KB}, \; \overline{\AL_0}, \;  \overline{\PT})$}}}\;\;\;\;$
\mbox{if}} \vspace*{-.05cm}
{\small 
\begin{itemize}
\item[(a)] $s$ has no incoming refinement or generalization edge
\item[(b)] $\domain(\LV_i) = \domain(\overline{\LV}_i)$  and
$\LV_i(\code{x}) = \mu(\overline{\LV}_i(\code{x}))$ for all 
           $1 \leq i \leq n$ and all $\code{x} \in \Ids$ where $\LV_i$ and
  $\overline{\LV}_i$ are defined
\item[(c)] $\models \stateformula{s} \implies \mu(\overline{\KB})$
\item[(d)] \label{bullet-point-generalization-allocations} if $\alloc{v_1}{v_2} \in \overline{\AL_i}$, then 
           $\alloc{w_1}{w_2} \in \AL_i$ with $\models \stateformula{s} \implies  w_1 =
  \mu(v_1) \wedge w_2 = \mu(v_2)$ for all $0 \leq i \leq n$  
\item[(e)] \label{bullet-point-generalization-points-to} if $(v_1 \hookrightarrow_{\codevar{ty}} v_2) \in \overline{\PT}$,\\then 
           $(w_1 \hookrightarrow_{\codevar{ty}} w_2) \in \PT$ with $\models
  \stateformula{s} \implies w_1 = \mu(v_1) \wedge w_2 = \mu(v_2)$  
\end{itemize}}
\end{minipage}}
\vspace*{.2cm}

The instantiation
  $\mu: \Vsym(\overline{s}) \functionmap \Vsym(s)$ maps variables from the more general state (e.g.,
$\GraphFNodeOne$) to the more specific state (e.g., $\GraphFNodeThirteen$).
In our example, we use an instantiation $\mu^\GraphFNodeThirteen$ such that $\mu^\GraphFNodeThirteen(v_\codevar{p}) = \overline{v}_{\codevar{p}}$
and $\mu^\GraphFNodeThirteen(v_\codevar{*p}) = v_{\codevar{dec}}$.
Condition (a) prevents cycles of refinement and generalization edges in the graph, which
would not correspond to an actual computation.
Compared to the 
corresponding generalization rule in \cite{LLVM-JAR}, we
slightly weakened the conditions (d) and
(e).
In \cite{LLVM-JAR}, conditions (d) and (e) are more strict w.r.t.\ the variables used.
For instance, condition (d) would require $\alloc{\mu(v_1)}{\mu(v_2)} \in \AL_i$ whereas our version
allows variables $w$ to be used that are provably equal to such variables $\mu(v)$.
This extends the applicability of the rules in many cases where equivalent variables occur.

Our construction of 
symbolic execution graphs ensures
that for any call state (like $\GraphFNodeTwelve$) which
denotes the start of the 
execution of a 
function, there exists a path from the call state to its call abstraction
which continues via a generalization edge to the \emph{entry state} of the function.
An entry state has a single stack frame that is at the initial program position of a
function
 and has no
outgoing generalization edge, i.e., $\GraphFNodeOne$ is the entry state of \code{f},
where
the function's symbolic execution starts.

\subsection{Intersecting Call and Return States}\label{Intersecting States}

In our example, the return state $\GraphFNodeSeven$ weakly
represents all concrete states whose topmost stack frame is at the \code{ret} instruction
in the base case of \code{f}. Therefore, the execution of those concrete states may
continue after returning to a lower stack frame that is not depicted in the abstract
state $\GraphFNodeSeven$. In those concrete states, the stack frames below the topmost frame
must correspond to the lower stack frames of a call state. Recall that when creating the call
abstraction of a call state (e.g., in the step from $\GraphFNodeTwelve$ to
$\GraphFNodeThirteen$), we removed its lower stack frames. Therefore, this process must be
reversed in order to continue the execution with the former lower stack frames after
reaching a return state like $\GraphFNodeSeven$. Hence, for
 a  call
state $s_c$ and a return state $s_r$ of the same function $\sfunc$,
  we create an abstract state
$s_i$ that represents the case that  the execution of the topmost stack frame
  of $s_c$ ended in $s_r$ and should now return to the lower stack frames of $s_c$.
We call $s_i$ the \emph{intersection} of  $s_c$ and $s_r$,
and each 
 call state  $s_c$ has   \emph{intersection edges} to all its intersections.
The stack of $s_i$ 
 is constructed from the only stack frame of $s_r$ and the 
stack frames of $s_c$, except its first one.
Note that by this construction, intersected states always have more than one stack frame and
the topmost frame is at a \code{ret} instruction.

For example, the intersection $\GraphFNodeFourteen$ of $\GraphFNodeTwelve$ and
$\GraphFNodeSeven$ weakly represents those concrete states 
 $c_\GraphFNodeFourteen$ that arise from some concrete state $c_\GraphFNodeTwelve \repby
\GraphFNodeTwelve$ where the further execution of $c_\GraphFNodeTwelve$'s topmost frame
ends in a state represented by $\GraphFNodeSeven$.
All intermediate concrete states in the execution from $c_\GraphFNodeTwelve$ to
$c_\GraphFNodeFourteen$ are weakly represented by the abstract states on the path
from $\GraphFNodeTwelve$ via its call abstraction  $\GraphFNodeThirteen$ to the
state $\GraphFNodeOne$ and from there on to $\GraphFNodeSeven$.

In general, 
when traversing an SEG to simulate a program's execution, then the two 
types of outgoing edges of a call state $s_c$ (i.e., the intersection edge and the call
abstraction edge) serve different purposes.
The path from $s_c$  via the call abstraction to the entry state
and subsequently to the return state
can only be used to simulate the execution of the function in the topmost stack frame, but
not the subsequent execution of the lower stack frames, because 
return states only have a single stack frame at a return instruction.
For this reason, traversing this path is only justified if
the execution of the topmost frame does not terminate.
Symbolic execution then never reaches the return state, from where it would not be able to continue.
In contrast, if the traversal of the SEG reaches a call state $s_c$
and the execution of the function in the topmost stack frame does terminate,
then the traversal can continue by using the intersection edge. From there on, symbolic
execution continues by returning from the topmost stack frame.

In the following, we discuss which information can be included in the intersected
states. To this end, one has to take into account how the variables are renamed on the
path from the call state to the return state (\cref{Tracking Symbolic Variable
  Renamings}).
Afterwards, we show in \cref{Memory Information in the Intersection}
how to obtain the components $\AL$ and $\PT$ for the intersected
state.
Finally, the formal definition of state intersections is presented in \cref{Definition of
  State Intersections}. 


\subsubsection{Tracking Symbolic Variable Renamings}\label{Tracking Symbolic Variable Renamings}

As for all other edges except generalization edges, symbolic variables occurring in two
states connected by an intersection edge represent the same values.
Therefore, in our example graph, all information in  $\KB^\GraphFNodeTwelve$ is still valid in $\GraphFNodeFourteen$.
Of course, we would also like to include information of the return state
$\GraphFNodeSeven$ in the intersected state $\GraphFNodeFourteen$, but one
has to take into account that symbolic variables in $\GraphFNodeSeven$ do not necessarily
represent the same value as symbolic variables of the same name in $\GraphFNodeTwelve$.

For example, consider a concrete state $c_\GraphFNodeTwelve \repby \GraphFNodeTwelve$
where $v_\codevar{pval}$ is 0 and $v_{\codevar{dec}}$ is $-1$. Here, $v_\codevar{pval}$ and
$v_{\codevar{dec}}$ are the values of \code{pval} and \code{dec}, respectively, 
in the second stack frame.
Further execution of $c_\GraphFNodeTwelve$ then yields a state $c_\GraphFNodeSeven \repby
\GraphFNodeSeven$ where $v_\codevar{pval}$ is $-1$.
In this state $c_\GraphFNodeSeven$, $v_\codevar{pval}$ is the value of 
\code{pval} in the topmost and only stack frame.
That the values of  $v_\codevar{pval}$ differ in $\GraphFNodeTwelve$ and $\GraphFNodeSeven$ 
is due to the fact that a generalization edge with instantiation $\mu^\GraphFNodeThirteen$
is part of the path from $\GraphFNodeTwelve$ to $\GraphFNodeSeven$.
There, $\mu^\GraphFNodeThirteen(v_\codevar{*p}) = v_{\codevar{dec}}$ indicates that the variable
$v_{\codevar{dec}}$ of
$\GraphFNodeTwelve$ and $\GraphFNodeThirteen$ corresponds to the variable $v_\codevar{*p}$ of
$\GraphFNodeOne$. 
In the states on the path from $\GraphFNodeOne$ to 
$\GraphFNodeSeven$, $v_\codevar{pval} = v_\codevar{*p}$ holds. So
$v_{\codevar{dec}}$ is the value that is stored at the address  \code{p} before the recursive
call, and when executing the recursive call, this value is represented 
by $v_\codevar{*p}$ and $v_\codevar{pval}$ in the newly created stack frame.

In the following, let $s_c$ again be a call state of some function $\sfunc$, let
$s_\mathit{ca}$ be its call abstraction,
let $s_e$ be the subsequent entry state, and let $s_r$ be a return state of $\sfunc$.
Moreover, let $s_i$ be the intersection of $s_c$ and $s_r$, i.e.,
the stack of $s_i$ contains the topmost stack frame of $s_r$ and the lower frames of $s_c$. 
To take into account that variables of the same name in $s_c$ and $s_r$ may have different
values, a mapping
$\renamed$  from symbolic variables to pairwise different
fresh variables is applied to all components of $s_r$.	
Thus, the knowledge base of the intersection contains $\KB^{s_c}$ and
$\renamed(\KB^{s_r})$.

Moreover, $\KB^{s_i}$ should contain the information which variables from  $s_c$
and from $\renamed(s_r)$ correspond to each other. More precisely,
we would like to find  variables $v \in \Vsym(s_c)$
and $w \in \Vsym(s_r)$, such that in every  possible execution of $\sfunc$'s call
starting in $s_c$ and ending in
$s_r$, the value of $v$ in $s_c$ is equal to the value of $w$ in $s_r$.

The possible executions of $\sfunc$ starting in $s_c$ and ending in $s_r$ are represented in the SEG by the paths 
from $s_c$ to its call abstraction $s_\mathit{ca}$ and further to the entry state $s_e$ via a generalization edge.
From there onwards, one has to regard the paths from $s_e$ to $s_r$.
However, we only need to consider paths from  $s_e$ to $s_r$ that do not include call abstraction edges.
To see this, regard a path of the form $s_e, \dotsc, \overline{s_c},\overline{s_\mathit{ca}}, \overline{s_e}, \dotsc$, where $\overline{s_c}$ is a call state and 
$\overline{s_\mathit{ca}}$ is its call abstraction with subsequent entry state $\overline{s_e}$.
As described before, the states from $\overline{s_e}$ onwards only simulate an execution
of $\overline{s_c}$'s topmost stack frame that does not return to $\overline{s_c}$'s lower stack frames.
In particular, reaching $s_r$ from $\overline{s_e}$ onwards would mean that the return
statement of $s_r$ is in
a stack frame created by
subsequent calls of $\sfunc$ from $\overline{s_e}$ onwards, but it would not correspond to
the return from the stack frame
of $s_e$.
Note that this reasoning is independent from whether or not $\overline{s_c}$, $\overline{s_\mathit{ca}}$, and $\overline{s_e}$
are actually identical to $s_c$, $s_{\mathit{ca}}$, and $s_e$, which would indicate a recursive function call. 

Therefore, we are only interested in the renaming of symbolic variables $v
\in \Vsym(s_c)$
along paths of the form
$s_c, s_\mathit{ca}, s_e, \dotsc, s_r$, where  the fragment $s_e, \dotsc, s_r$ is an \emph{execution path}.
This means that $s_e$ is an entry state and $s_r$ is a return state of the same function.
Furthermore, an execution path must not contain call abstraction edges.
However, execution paths may contain cycles.

To integrate renaming information into the abstract states, we augment the states with an additional
component
$\CV$ to track \underline{v}ariable \underline{i}dentities.
$\CV$  contains entries 
of the form $v \corrvar w$ indicating that the variable $v$ of the preceding entry state 
corresponds to the variable $w$ in the current state.

More precisely, an entry $v \corrvar w$ in a state $s$ has the following semantics:
For all execution paths of the shape $s_e, \dotsc, s, \dotsc, s_r$, the value of
$v$ in $s_e$ is the same as the value of $w$ in $s$.
Note that in general, an execution path may contain $s$ several times.
This would indicate that $s$ is part of a loop that results from
executing the function in $s_e$.
Our semantics of $v \corrvar w$ then implies that $w$ must have the same value in $s$
every time that $s$ occurs in the execution path.

For all rules that evaluate \LLVM{} instructions or that result in refinement edges,
  the component $\CV$ does not have any impact
on the components of the new resulting state except for its
$\CV$ component.
Therefore, we do not have to adapt the formula representations
or the representation relation introduced in Def.\ \ref{def:StateFOLFormula},
\ref{def:representation-same-stack-size}, and \ref{def:weak-representation}.
There are only two graph construction steps that consider $\CV$, namely generalization and
intersection.

For each entry state $s_e$, we add an entry $v \corrvar v$ to $\CV$ for each symbolic
variable $v \in \Vsym(s_e)$. So for State $\GraphFNodeOne$ in
\cref{fig:initial-graph-function-f} and \ref{fig:full-graph-function-f},  we have $\CV^{\GraphFNodeOne} = \{ v_\codevar{p} \corrvar
v_\codevar{p}, v_\codevar{*p} \corrvar v_\codevar{*p} \}$. 

To compute $\CV$ in the other states, we adapt the symbolic execution rules:
In the call abstraction, all entries in $\CV$ are removed.
In all other rules except for the generalization rule, $\overline{\CV}$ in the resulting
state $\overline{s}$ is obtained from $\CV$ in the previous state $s$ as follows:
\begin{eqnarray*}
  \overline{\CV} &=& \{v \corrvar w \mid v \corrvar w \in \CV \land w \in \Vsym(\overline{s})\}\;\; \cup\\
  &&\{v \corrvar \overline{w} \mid v \corrvar w \in \CV \land
              \models \stateformula{\overline{s}} \implies w = \overline{w}\}
\end{eqnarray*}
So we preserve all entries $v \corrvar w$ from $\CV$ if $w$ still exists in $\overline{s}$.
Furthermore, if in $\overline{s}$ there is a variable $\overline{w}$ and we have $w = \overline{w}$ in
$\overline{s}$, then we also add an entry $v \corrvar \overline{w}$ to track which variables are
equivalent.
So in our example, since $\models \stateformula{\GraphFNodeTwo} \implies 
v_{*p} = v_{\codevar{pval}}$
and $v_{*p} \corrvar v_{*p} \in \CV^\GraphFNodeOne$
hold,  $v_\codevar{*p} \corrvar v_\codevar{pval}$ is
added to $\CV^{\GraphFNodeTwo}$ during the symbolic execution of the \code{load} instruction.

Finally, we extend the generalization rule from \cref{Abstracting the Call Stack} by the following condition:
\begin{itemize}
    \item[(f)] If $\Pos_1 \not= (\sfunc\code{.entry},0)$ for a function
        $\sfunc$ with entry block $\sfunc\code{.entry}$, we have for each
        $v \corrvar w \in \overline{\CV}$ that $v \corrvar \mu(w) \in \CV$.
\end{itemize}
This condition ensures that in order for an entry $v \corrvar w$ to be valid in a generalized state $\overline{s}$,
all states $s$ that have a generalization edge to $\overline{s}$ using an instantiation
$\mu$ must have a correspondingly renamed entry $v \corrvar \mu(w)$. 
In particular, this ensures that variable correspondence entries are consistent with respect to all cycles\footnote{
  As we are only interested in variable correspondences along execution paths,
  we only consider cycles here that do not contain call abstraction edges.
}
that the state may be part of. (Note that $v$ is a variable from the entry state $s_e$,
i.e., it is not renamed.)

However, the condition (f) is not required for generalization edges from call
abstractions to entry states (e.g., for the edge from $\GraphFNodeThirteen$ to  $\GraphFNodeOne$).
For the path between a call state to an entry state via its call abstraction,
we instead take possible renamings
into account during the computation of the intersection.

Recall that for the construction of the intersection of $s_c$ and $s_r$
we would like to identify variable correspondences between $s_c$ and $s_r$.
However, the $\CV$ entries of $s_r$ denote correspondences between variables of $s_r$ and variables of $s_e$,
rather than variables of $s_c$.
This allows us to determine the renaming information independently from call states.
By only tracking variable correspondences from the entry state onwards, we are able to add call states
to an existing entry state later on. In contrast, if
we tracked variable correspondences of call states directly, this would require the modification of 
the entry state and its successors.

To extend the knowledge base of the intersected state $s_i$ by the information on which
variables in $s_r$ and $s_c$ correspond to each other, we now 
need to combine each entry $v \corrvar w$ of $s_r$ with the renaming of variables possibly performed by
the generalization edge between $s_\mathit{ca}$ and $s_e$ using the instantiation $\mu$.
Hence, the entry $v \corrvar w$ of $s_r$ indicates that the variable $\mu(v)$ of $s_c$ has the same value as
the variable $w$ of $s_r$ for all possible executions of the function in $s_c$'s topmost
frame that  end in $s_r$. 
Thus, we extend $\KB^{s_i}$ by an equality
between the variables $\mu(v) \in \Vsym(s_c)$ and $\renamed(w)$ for $w \in \Vsym(s_r)$
whenever $v \corrvar w$ holds in $s_r$.

In our example, the intersected state $\GraphFNodeFourteen$ therefore has the \code{ret}
instruction at program position $(\code{term}, 1)$ in its topmost stack frame, where
$\renamed$ renamed all variables $v_\codevar{x} \in \Vsym(\GraphFNodeSeven)$ to
$w_\codevar{x}$. 
The lower stack frame of  $\GraphFNodeFourteen$ is taken from $\GraphFNodeTwelve$.
In the knowledge base we have $w_\codevar{pval} < 0$ (from $\GraphFNodeSeven$, where the
renaming $\renamed$ was applied), $v_\codevar{pval} \ge 0$, $v_\codevar{dec} = v_\codevar{pval} -
1$, $v_\codevar{p} = \overline{v}_\codevar{p}$, etc.\
(from $\GraphFNodeTwelve$), as well as $v_\codevar{dec} = w_\codevar{pval}$ (since
$v_\codevar{*p} \corrvar v_\codevar{pval} \in \CV^{\GraphFNodeSeven}$,
$\mu^\GraphFNodeThirteen(v_\codevar{*p}) = v_\codevar{dec}$, and
$\delta(v_\codevar{pval}) = w_\codevar{pval}$)
and $\overline{v}_\codevar{p} = w_\codevar{p}$ (since 
$v_\codevar{p} \corrvar v_\codevar{p} \in \CV^{\GraphFNodeSeven}$,
$\mu^\GraphFNodeThirteen(v_\codevar{p}) = \overline{v}_\codevar{p}$,
and $\delta(v_\codevar{p}) = w_\codevar{p}$).
Thus, $\GraphFNodeFourteen$ represents concrete states where the value
$v_\codevar{pval}$ at \code{p} was
originally 0 (since $v_\codevar{pval} \geq 0$ and $v_\codevar{pval} - 1 = v_\codevar{dec} =
w_\codevar{pval} < 0$). Hence,
the first recursive call immediately triggers the base case.


\subsubsection{Memory Information in the Intersection}\label{Memory Information in the Intersection}

Now we describe how to compute the components $\AL$ and $\PT$ for intersected states. 
Let the states $s_c, s_\mathit{ca}$, $s_e$, $s_r$, and $s_i$ be  as before.
In general, the memory information $\renamed(\AL^{s_r})$ and $\renamed(\PT^{s_r})$ from the
return state can always be added to the intersected state $s_i$.
This is because intuitively, the intersected state is a refinement of the return state,
where no additional instructions have been evaluated.
However, it is more challenging to determine which memory information of the call state
can be added to the intersected state.

\paragraph{Heap Allocations}

Entries from $\AL^{s_c}$ can only be added to $AL^{s_i}$ if they have not been deallocated during
the execution of $s_c$'s topmost frame that ended in $s_r$.
In addition, allocations of the call state may only be added to the intersected state if they can be proven to be
disjoint from any entries in $\renamed(\AL^{s_r})$. This is needed to guarantee that the intersected state does not violate
the invariant of all allocations in a state being disjoint.

To ensure these two conditions, we only add an allocation
$\alloc{v_1}{v_2}$ from $\AL^{s_c}$ to $\AL^{s_i}$ if it has been removed during
the generalization from $s_\mathit{ca}$ to $s_e$
(i.e., if there exists no allocation corresponding to $\alloc{v_1}{v_2}$ in $s_e$).
Formally, this means that there exists no $\alloc{w_1}{w_2} \in \ALL(s_e)$ such that
  $\models \stateformula{s_{\mathit{ca}}} \implies  v_1 =
  \mu(w_1) \wedge v_2 = \mu(w_2)$, where $\mu$ is the instantiation used for the
  generalization from $s_{ca}$ to $s_e$. 

It is easy to see that $\alloc{v_1}{v_2}$ satisfies both conditions that have to be
imposed on allocations in order to add them
to the intersection:
The allocation $\alloc{v_1}{v_2}$ was removed during the generalization without being changed otherwise.
This means that it is present in all concrete states represented by  $s_c$, $s_\mathit{ca}$, $s_e$,
and $s_e$'s successors.
However, any access to this allocation by any of $s_e$'s successors would yield the $\ERROR$ state during symbolic execution,
as the allocation is not available in those abstract states.
This means that the allocation cannot be deallocated during subsequent execution.
In addition, any newly allocated memory is guaranteed to  be disjoint from
$\alloc{w_1}{w_2}$.

In contrast, if the allocation $\alloc{v_1}{v_2}$ had a counterpart $\alloc{w_1}{w_2}$ in the entry
state, then there are several
possibilities:
\begin{itemize}
\item The allocation is deallocated at some point prior to reaching the return state.
This means that it must not be added to the intersected state.
\item The allocation is not deallocated and has a counterpart in the return state.
This means that the allocation is in $\renamed(\AL^{s_r})$ and therefore already
part of the intersection.
\item The allocation is not deallocated, but it also does not have a counterpart in the return
  state. There are two possible reasons for this. The first possibility is that the
allocation is removed along an intersection
edge on an execution path from $s_e$ to $s_r$. In this case we cannot ensure that it was not freed during the function execution represented by the 
intersection edge. Hence, it must not be added to the intersected state $s_i$ that is currently being
constructed.

The other possibility is that 
the allocation has been removed along a generalization edge in the path from $s_e$ to $s_r$ (i.e.,
this is not the generalization edge from $s_{ca}$ to $s_e$). 
Here, one would have to analyze the possible execution paths from $s_e$ to $s_r$ to make sure that
  that there was definitely no deallocation before the allocation was lost during generalization.
  Since this only occurs in rare cases,
  we do not add such allocations in order to ease the formalization.
\end{itemize}

To formally reason about allocations being removed in generalizations, we introduce the following definition.

\begin{definition}[Predicate $\lostAL$]
Let $s, \overline{s}$ be states such that $s$ has a generalization edge to $\overline{s}$ using
an instantiation $\mu$. Furthermore, let $\alloc{v_1}{v_2} \in \ALL(s)$.
Then $\lostAL(s, \overline{s}, \alloc{v_1}{v_2})$ holds iff there exists no
$\alloc{w_1}{w_2} \in \ALL(\overline{s})$ such that 
$\models \stateformula{s} \implies  v_1 = \mu(w_1) \wedge v_2 = \mu(w_2)$.
\end{definition}

\paragraph{Stack Allocations}

Recall that in the step from the call state $s_c$ to the call abstraction $s_{ca}$,
all but the topmost stack frames of the call state $s_c$ are removed.
However, the stack allocations of the deleted frames are moved to the (only) stack frame
of $s_{ca}$.
This means that
when simulating the execution of $\rfunc$'s call by the path from $s_c$ over $s_{ca}$ and
$s_e$ to $s_r$, 
the topmost stack frame of the return state $s_r$ may contain allocations
that were originally part of the lower stack frames of $s_c$.
(Further call abstractions cannot happen on the path from $s_e$ to $s_r$, since
here we only have to regard execution paths.)

When intersecting  $s_c$ and  $s_r$, stack allocations must be restored
to their correct frames.
As the lower stack frames of $s_c$ were not active during the execution that led to $s_r$, those
stack allocations cannot have been deallocated and they should therefore be added to the respective frames of the
intersection $s_i$. But when
 turning the only stack frame of the return state $s_r$ into the
topmost frame of the 
intersected state $s_i$, 
we remove all of its stack allocations.
This is done to
 guarantee the disjointness of all stack allocations in the intersected state.
As mentioned before, the reason is that $s_r$'s only stack frame may contain allocations
that were moved there from lower stack frames of $s_c$ during the call abstraction from
$s_c$ to $s_{ca}$.
Intersected states are symbolically executed by evaluating the return instruction in their topmost stack frame,
which would remove the allocations in this stack frame anyway.

\paragraph{Points-To Entries}

As with allocations, points-to information from the return state $s_r$ can always
be taken over to the
intersected state $s_i$, but points-to atoms from the call state can only be added to the
intersection $s_i$
if they have not been invalidated.

Hence, we only copy an entry $w_1
\pointsto[\codevar{ty}] w_2$
from $s_c$ to $s_i$ if it is part of an allocation $\alloc{v_1}{v_2}$ that is
 lost during the generalization
 from the call abstraction $s_{ca}$ to the entry state $s_e$.
In other words, $\alloc{v_1}{v_2}$ must contain all addresses from
$w_1$ to  $w_1 + \mathit{size}(\code{ty})-1$ and
$\lostAL(s_\mathit{ca},s_e,
\alloc{v_1}{v_2})$ holds.
This is sound, 
since then the points-to atom  $w_1
\pointsto[\codevar{ty}] w_2$
cannot have 
been modified
during the summarized function execution.
The reason is that
our symbolic execution rules can only
access or modify the content of an address  if the address is known to be in an allocated part of the memory
(otherwise, one would violate memory safety).

Note that it would also be possible to add those $\PT$ entries from $s_c$ to the intersection that
are part of an allocation that
is not removed during the generalization to $s_e$,
provided that it is not modified
during the execution summarized by the intersection edge.
We have implemented this improvement in \aprove{} by 
augmenting allocations with
an additional flag
that indicates whether or not an allocation has been modified.
But to ease readability,  we did not include it in the formalization of
this paper.

\subsubsection{Definition of State Intersections}\label{Definition of State Intersections}

To sum up, the state intersection is defined as follows for a call state $s_c$ and a
corresponding return state $s_r$.

\begin{definition}[State Intersection]
	\label{def:intersection}
	Let $s_c = (\FR_1^c \,\cdot\, \widetilde{\CS}^c, \KB^c, \AL^c, \PT^c,  \CV^c)$ be a call
        state and $s_r = ([(\Pos_1^r,\LV_1^r,\AL_1^r)], \KB^r,\AL^r,\PT^r, \CV^r)$ be a
        return state of the same function $\sfunc$.
        Let $s_\mathit{ca}$ be the call abstraction of $s_c$ and let $s_e$ be an entry
        state that is a generalization of $s_{ca}$.
	Let $\mu: \Vsym(s_e) \to \Vsym(s_c)$ be the instantiation used for the generalization and
	  let  $\renamed: \Vsym(s_r) \to \Vsym$ be a function that maps all symbolic variables of
        $s_r$ to pairwise different fresh ones.
A state $s_i$ is an \emph{intersection} of $s_c$ and $s_r$ iff it has the form 
$((\Pos_1^r,\renamed(\LV_1^r),\emptyset) \cdot \widetilde{\CS}^c,\allowbreak
        \KB^i,\allowbreak \AL^i,\allowbreak \PT^i,\allowbreak \CV^i)$, where we
        have:
	\[\begin{array}{lllll}
	  \KB^i &=  &\renamed(\KB^r) &\cup &\KB^c \cup \{ \mu(v) = \renamed(w) \mid
          v \corrvar w \in \CV^r \} \\
      \AL^i &= &\renamed(\AL^r) &\cup &\{ \alloc{v_1}{v_2} \in \AL^{c}
          \mid \lostAL(s_{\mathit{ca}},s_e, \alloc{v_1}{v_2}) \\
      \PT^i &= &\renamed(\PT^r)\\
        && \multicolumn{3}{l}{\cup \: \{ (w_1 \pointsto[\codevar{ty}] w_2) \in \PT^c
            \mid \lostAL(s_\mathit{ca},s_e,
            \alloc{v_1}{v_2})
            \text{ holds for some }}\\            
          && \multicolumn{3}{r}{\phantom{\cup\;} 
 \alloc{v_1}{v_2} \in \AL^c \text{ where }
          \models \stateformula{s_c} \implies
          v_1 \leq w_1  \land w_1 + \mathit{size}(\code{ty})-1 \leq v_2 \}}
\\
     \CV^i &=& \multicolumn{3}{l}{\phantom{\cup\:}   \{v \corrvar w \mid v \corrvar w \in \CV^c \land w \in \Vsym(s^i)\}}\\
             && \multicolumn{3}{l}{ \cup \:  \{v \corrvar \overline{w} \mid v \corrvar w \in \CV^c \land
              \models \stateformula{s^i} \implies w = \overline{w}\}}
	\end{array}\]
\end{definition}
So the
variable identities $\CV^i$ are built in the same way as for other symbolic execution rules.

In our example,
when creating the intersected state $\GraphFNodeFourteen$ from the call state  $\GraphFNodeTwelve$
and the return state $\GraphFNodeSeven$, we have $\AL^\GraphFNodeSeven = \varnothing$ and
$\PT^\GraphFNodeSeven = \varnothing$. The information from  $\AL^\GraphFNodeTwelve = \{
\alloc{v_{\codevar{p}}}{v_{\codevar{p}}} \}$ and $\PT^\GraphFNodeTwelve = \{v_{\codevar{p}}
\pointsto[\codevar{i8}] v_{\codevar{dec}} \}$ is not taken over to  $\GraphFNodeFourteen$, since
$\alloc{v_{\codevar{p}}}{v_{\codevar{p}}}$ is not removed during the generalization from
$\GraphFNodeThirteen$ to $\GraphFNodeOne$, i.e.,
$\lostAL(\GraphFNodeThirteen,\GraphFNodeOne,
\alloc{v_{\codevar{p}}}{v_{\codevar{p}}})$ does not hold.

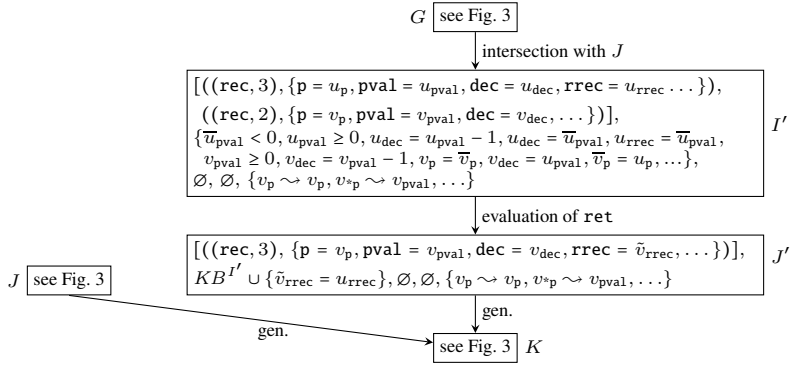
\begin{figure}[t]
\newcommand{\ydist}{0.1cm}
\newcommand{\xdist}{0.5cm}
\newcommand{\setfont}{\scriptsize}
\begin{tikzpicture}[]
\scriptsize
\def\widetwidth{7cm}
\tikzstyle{state}=[
                           font=\scriptsize,
                           draw]

\node[state, label=180:$G$, align=left] (Inter-G) 
{see Fig.~\ref{fig:full-graph-function-f}};

\node[state,label={[xshift=0]2:$I'$},below=0.5cm of Inter-G, text width=8.9cm] (Inter-I-p) 
     { \vspace*{-.23cm}
       \begin{tabbing}
         $[$\=$((\code{rec}, 3),$ \=$\{ \code{p} = u_\codevar{p},
	\code{pval} = u_{\codevar{pval}},
        \code{dec} = u_\codevar{dec},
        \code{rrec} = u_{\codevar{rrec}}
	  \dots \}),$\\
	\>$((\code{rec}, 2),$ \>$\{\code{p} = v_\codevar{p},
	\code{pval} = v_{\codevar{pval}},
	\code{dec} = v_\codevar{dec},
        \dots\})],$\\$
  \{ 
      \overline{u}_\codevar{pval} < 0, u_\codevar{pval} \ge 0,
      u_\codevar{dec} = u_\codevar{pval} - 1,  u_\codevar{dec}= \overline{u}_\codevar{pval},
      u_\codevar{rrec} = \overline{u}_\codevar{pval},$\\$
      \phantom{\{}v_\codevar{pval} \ge 0, v_\codevar{dec} = v_\codevar{pval} - 1, v_\codevar{p} = \overline{v}_{\codevar{p}},
      v_\codevar{dec} = u_\codevar{pval}, \overline{v}_\codevar{p} = u_\codevar{p},
 ... \},$\\$
    \emptyset, \,
    \emptyset,$
    $\{v_{\codevar{p}} \corrvar v_{\codevar{p}}, v_{\codevar{*p}} \corrvar v_{\codevar{pval}}, \ldots\}$
    \end{tabbing}
};

\node[state, label=2:$J'$, align=left, text width=8.9cm,below=0.5cm of Inter-I-p] (Inter-J-p) 
{
  $[((\code{rec}, 3), \, \{ \code{p} = v_\codevar{p},
    \code{pval} = v_\codevar{pval}, \code{dec} = v_\codevar{dec}, \code{rrec} = \tilde{v}_\codevar{rrec}, \dots \})],$\\
$\KB^{I'} \cup \{ \tilde{v}_\codevar{rrec} = u_\codevar{rrec} \},\allowbreak
    \emptyset,
     \emptyset, \{v_{\codevar{p}} \corrvar v_{\codevar{p}}, v_{\codevar{*p}} \corrvar v_{\codevar{pval}}, \ldots\}$
};

\node[state, label=180:$J$, align=left,left=1cm of Inter-J-p.south west, anchor=south east] (Inter-J) 
{see Fig.~\ref{fig:full-graph-function-f}};

\node[state, label=2:$K$, align=left,below=0.5cm of Inter-J-p] (Inter-K) 
{see Fig.~\ref{fig:full-graph-function-f}};

\draw[eval-edge] (Inter-G) -- node[anchor=west] { intersection with $J$} (Inter-I-p);
\draw[eval-edge] (Inter-I-p) -- node[anchor=west] {evaluation of \code{ret}} (Inter-J-p);

\draw[gen-edge] (Inter-J.south) -- node[anchor=west,yshift=-.2cm] {gen.\ } (Inter-K);
\draw[gen-edge] (Inter-J-p.south) -- node[anchor=west] {gen.\ } (Inter-K);

\end{tikzpicture}
\vspace*{-0.2cm}
\caption{\label{fig:state-15-prime} 
Intermediate states during analysis of \code{f}
}
\end{figure}

Afterwards, applying the symbolic execution rule for the $\code{ret}$ instruction
yields the state $\GraphFNodeFifteen$.
Here, the value $v_{\codevar{rrec}}$ of the program variable \code{rrec} is equal to the
result $w_\codevar{pval}$ of \code{f}'s recursive call. Note that $J$
is another return state. Thus, one now 
has to construct the intersection of the call state $\GraphFNodeTwelve$ and
$\GraphFNodeFifteen$.
This yields another intersected state $\GraphFNodeFourteen'$ 
shown in Fig.~\ref{fig:state-15-prime}.
In $\GraphFNodeFourteen'$, we transformed all information taken from $J$
by a renaming $\delta^{I'}$ that replaces 
all symbolic variables $v_\codevar{x}$ by $u_\codevar{x}$ and $w_\codevar{x}$ by
$\overline{u}_\codevar{x}$.
$\KB^{I'}$ also contains the equalities
$v_\codevar{dec} = u_\codevar{pval}$ (as $v_\codevar{*p} \corrvar v_\codevar{pval} \in \CV^J$,
$\mu^H(v_\codevar{*p}) = v_\codevar{dec}$, and 
$\delta^{I'}(v_\codevar{pval}) = u_\codevar{pval}$)
and
$\overline{v}_\codevar{p} = u_\codevar{p}$
(as $v_\codevar{p} \corrvar v_\codevar{p} \in \CV^J$,
$\mu^H(v_\codevar{p}) = \overline{v}_\codevar{p}$, and 
$\delta^{I'}(v_\codevar{p}) = u_\codevar{p}$).

By symbolically evaluating the  \code{ret}
instruction in the topmost stack frame of $I'$, one obtains the state $J'$. Now the value
$\tilde{v}_\codevar{rrec}$ of the program variable \code{rrec} is equal to the result
$u_\codevar{rrec}$ of \code{f}'s recursive call.

In state $J$, we had $\models \KB^{J} \implies v_\codevar{pval} \ge 0 \wedge  v_\codevar{pval} - 1 = w_\codevar{pval} < 0 $,
which can be simplified to $\models \KB^{J} \implies v_\codevar{pval} = 0$.
Analogously, in  $J'$, we have
$\models \KB^{J'}  \implies  {u_\codevar{pval} \ge 0} \wedge u_\codevar{pval} - 1 =\allowbreak  \overline{u}_\codevar{pval} < 0$,
which implies $\models \KB^{J'} \implies u_\codevar{pval} = 0$.
Moreover, we obtain $\models \KB^{J'} \implies v_\codevar{pval} - 1 = v_\codevar{dec} =  u_\codevar{pval}$.
The latter equality holds due to the entry $v_\codevar{*p} \corrvar v_\codevar{pval}$ in $\CV^J$,
which allowed us to add $v_\codevar{dec} = u_\codevar{pval}$ to $\KB^{I'}$.
Together, this implies $\models \KB^{J'} \implies v_\codevar{pval} = 1$.
Intuitively, this reflects the fact that in $J$, the original value at the pointer \code{p} was 0,
whereas in $J'$ the original value was 1.

\subsection{Complete Symbolic Execution Graphs}\label{Complete SEGs}

Note that 
  the single stack frames of both $\GraphFNodeFifteen$ and $\GraphFNodeFifteen'$ 
  are at the same program position
  and their $\LV$-functions have the same domain.
To obtain a finite symbolic execution graph,
we \emph{merge} the return states $\GraphFNodeFifteen$
and $\GraphFNodeFifteen'$ to a single generalized return state.
More precisely, we merge each pair of return states $s_r$ and $\overline{s}_r$ if they are
at the same
program position of a recursive function (or a function in a group of mutually recursive
functions), if the domains of their $\LV$-functions are identical, and if there
exists an entry state $s_e$ that has an execution path to both $s_r$ and
$\overline{s}_r$. If the latter condition is not satisfied, then merging does not have any
advantages, since both return states are
part of independent analyses of the same function.

We presented a heuristic for merging states in \cite{LLVM-JAR}  that is used for 
such similar return states if there is not
yet
a more general state in the SEG that one could draw a generalization edge to. For two states $s$ and
$\overline{s}$, our merging heuristic
generates a new  state $g$ which is a generalization of both $s$ and $\overline{s}$. This heuristic
can be used here to
obtain the state $\GraphFNodeEighteen$, where the heuristic introduces fresh symbolic
variables $m_\codevar{x}$.\footnote{The heuristic's general idea for merging two states $s$ and $\overline{s}$ to a more general state
$g$ is to first extend $\stateformula{s}$ to $\extendedformula{s}$,  which contains
additional constraints implied by $\stateformula{s}$. 
	Then, those formulas of $\extendedformula{s}$ that are also implied by
        $\stateformula{\overline{s}}$ are added to $\KB^{g}$ (where one of course has to
        take the renaming of the variables into account). 
	To yield the state $\GraphFNodeEighteen$, the definition of $\extendedformula{s}$
        from \cite{LLVM-JAR}        
        has to be extended as follows:
	For expressions $t_1 < t_2 \in
        \extendedformula{s}$
        where $\extendedformula{s}$ also contains an inequality with a term $t_3$      
        such that $\models \stateformula{s} \implies t_3 = t_1$, we add  $t_3
        < t_2$ to $\extendedformula{s}$. We proceed analogously for similar cases
(e.g., where   $t_2 < t_1 \in
        \extendedformula{s}$). So in our example, since both 
$w_\codevar{pval} < 0$ and $v_\codevar{rrec} = w_\codevar{pval}$ are contained in
        $\KB^{\GraphFNodeFifteen} \subseteq \extendedformula{\GraphFNodeFifteen}$, we have
 $v_\codevar{rrec} < 0$ in $\extendedformula{\GraphFNodeFifteen}$. For that reason,
        $m_\codevar{rrec} < 0$ is contained in the generalized state $\GraphFNodeEighteen$.}
Of course, our merging heuristic from \cite{LLVM-JAR} 
now has to be extended to handle
the set $\CV$ as well.
If there are entries $v_e \corrvar v_s \in \CV^s$, $v_e \corrvar v_{\overline{s}} \in \CV^{\overline{s}}$,
and a $v \in \Vsym(g)$ such that $\mu^s(v) = v_s$ and $\mu^{\overline{s}}(v) = v_{\overline{s}}$ (where $\mu^s$
and $\mu^{\overline{s}}$ are the instantiations for the generalizations from $s$ to $g$ and from $\overline{s}$
to $g$, respectively), then $\CV^g$ contains  $v_e \corrvar v$. For example, since we have
$v_\codevar{p} \corrvar v_\codevar{p}$ and 
$v_\codevar{*p} \corrvar v_\codevar{pval}$ in both states
$\GraphFNodeFifteen$ and $\GraphFNodeFifteen'$, we add $v_\codevar{p} \corrvar m_\codevar{p}$
and $v_\codevar{*p} \corrvar m_\codevar{pval}$ to $\CV^\GraphFNodeEighteen$.

For return states like
$\GraphFNodeFifteen$  that have outgoing generalization edges,
we do not have to include any intersections in the graph. The reason is that it is enough
to construct an intersection with the generalized return state $\GraphFNodeEighteen$,
since the resulting intersection is more general than an intersection with the more
specific return state $\GraphFNodeFifteen$. Thus, the states
 $\GraphFNodeFourteen'$ and $\GraphFNodeFifteen'$ can be removed from the graph provided
that we construct an intersection of $\GraphFNodeTwelve$ with the generalized return state $\GraphFNodeEighteen$
instead.

The   state $\GraphFNodeEighteen$ contains the knowledge $m_\codevar{pval}
\ge 0$ and $m_\codevar{rrec} < 0$.
It represents all concrete states where the value at \code{p} was originally some
non-negative number $k$ and $k+1$ recursive invocations have finished.
So while the return state $\GraphFNodeSeven$ corresponds to runs of $\code{f}$ that
directly end in $\code{f}$'s non-recursive case, the return state $\GraphFNodeEighteen$ corresponds to runs
of $\code{f}$ with at least one recursive call. 
The return state 
$\GraphFNodeEighteen$ has to be intersected with the call state
$\GraphFNodeTwelve$, yielding state $\GraphFNodeNineteen$.
Here, we used a renaming $\delta^\GraphFNodeNineteen$ with
$\delta^\GraphFNodeNineteen(m_\codevar{p}) = z_\codevar{p}$,
$\delta^\GraphFNodeNineteen(m_\codevar{pval}) = z_\codevar{pval}$, etc.
Since
$v_\codevar{*p} \corrvar m_\codevar{pval} \in \CV^\GraphFNodeEighteen$ and
$v_\codevar{p} \corrvar m_\codevar{p} \in \CV^\GraphFNodeEighteen$,
we have
$v_\codevar{dec} = z_\codevar{pval}  \in \KB^\GraphFNodeNineteen$
(since $\mu^\GraphFNodeThirteen(v_\codevar{*p}) = v_\codevar{dec}$
and $\delta^\GraphFNodeNineteen(m_\codevar{pval}) = z_\codevar{pval}$) and
$\overline{v}_\codevar{p} = z_\codevar{p}  \in \KB^\GraphFNodeNineteen$
(since $\mu^\GraphFNodeThirteen(v_\codevar{p}) = \overline{v}_\codevar{p}$ and
$\delta^\GraphFNodeNineteen(m_\codevar{p}) = z_\codevar{p}$).
Evaluating the return instruction in $\GraphFNodeNineteen$ leads to
its successor $\GraphFNodeTwenty$, which $\GraphFNodeEighteen$ is a generalization of.

This concludes the analysis of the function \code{f}, as its SEG in
Fig.~\ref{fig:full-graph-function-f} is \emph{complete}: 

\begin{definition}[Complete SEG]
\label{def-complete-seg}
A symbolic execution graph is \emph{weakly complete} iff
\begin{enumerate}
\item For all of its leaves $s$ we either have $s = \ERROR$, $\stateformula{s}$ is unsatisfiable,
or $s$ has only one stack frame which is at a \code{ret} instruction.
\item Each call state of some function $\sfunc$ has exactly one call
  abstraction which in turn has an outgoing generalization edge to an entry state
  of $\sfunc$. 
\item For all pairs of return states $s_r$  and call states $s_c$  of some function
  $\sfunc$, the following holds: If $s_r$ has no outgoing generalization edge and
  the entry state of $\sfunc$ following $s_c$ has an execution path to $s_r$, then
  there is an intersection edge from
  $s_c$  to the intersection of $s_c$ and $s_r$.
\end{enumerate}
A symbolic execution graph is \emph{complete} iff it is weakly complete and does not contain $\ERROR$. 
\end{definition}

Note that we do not create intersections with return states that have been generalized to a more general one.
Moreover, we only require intersections of call and return states if the entry state following the
call state has an execution path to the return state.
If this is not case, then the return state belongs to a different, independent analysis of the same function,
starting from a different entry state. Thus, we do not only avoid merging of
return states from independent analyses of the same function, but we also do not
create intersections between call and return states from such independent analyses.

In \cite{LLVM-JAR}, we proved the correctness of our symbolic execution w.r.t.\ the formal
definition of the \LLVM{} semantics from the \textsf{Vellvm} project
\cite{Vellvm}.
Similar to \cite[Thm.\ 10]{LLVM-JAR}, we now show
that every \LLVM evaluation of concrete states can be
simulated by symbolic execution of abstract states.
Let $\cto$ denote \LLVM's evaluation relation on concrete states, i.e., $c \cto \overline{c}$
holds iff $c$ evaluates to $\overline{c}$ by executing one \LLVM instruction.
Similarly, $c \cto \ERROR$
 means that the evaluation step performs an operation that may lead to undefined behavior.
An \LLVM program is \emph{memory safe} for $c \neq \ERROR$  iff there is no evaluation $c \cto^+
\ERROR$,
where $\cto^+$
is the transitive closure of $\cto$.
The following theorem states that for each 
computation of concrete states there is
a corresponding path in the SEG whose
abstract
states represent the concrete states of the computation.

\setcounter{thm-soundness-graph-construction-main-text}{\value{theorem}}

\begin{theorem}[Soundness of the Symbolic Execution Graph]
\label{thm:soundness-graph-construction-main-text}
	Let $\pi = c_0 \ccto c_1 \ccto c_2 \ccto \dots $ be a (finite resp.\ infinite)
        \emph{\LLVM} evaluation of concrete states such that $c_0$ is represented by some 
        state $s_0$ in  a weakly complete SEG $\GG$. 
	Then there exists a (finite resp.\ infinite) sequence of states $s_{0},  s_{1},  s_{2},
        \dotsc$ where $\GG$ has an edge from $s_{j-1}$ to $s_j$ if $j > 0$,
        and  there exist
        $0 = i_0 \leq i_1 \leq  \ldots$ with $c_{i_j} \repby s_j$
        for all $j \geq 0$.
  Moreover, if $\pi$ is infinite then the  corresponding
        sequence of abstract states in $\GG$
        is infinite as well.
        In contrast, if $\pi$ is finite and ends at some concrete state $c$, then the
        sequence of states in $\GG$ ends at some state $s$ with $c 
        \repby s$.        
\end{theorem}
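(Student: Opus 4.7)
The plan is to proceed by induction on the length of the prefix of $\pi$ already matched, constructing the abstract sequence $s_0, s_1, s_2, \ldots$ and indices $0 = i_0 \leq i_1 \leq \ldots$ incrementally while maintaining the invariant $c_{i_j} \repby s_j$. The base case is given by hypothesis. For the inductive step, given $s_j$ with $c_{i_j} \repby s_j$, I perform a case analysis on the outgoing edges of $s_j$ in $\GG$, appealing to weak completeness (\cref{def-complete-seg}) to ensure that the needed edges exist whenever $s_j$ is not a leaf, and arguing that $s_j$ can only be a leaf when $\pi$ has actually stopped at $c_{i_j}$ (since the other leaf conditions --- $s_j = \ERROR$ or $\stateformula{s_j}$ unsatisfiable --- are ruled out by $c_{i_j} \repby s_j$).

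The routine cases are: \textbf{(i)} if $s_j$ has outgoing refinement edges, then exactly one branch's added knowledge is consistent with the concrete instantiation $\sigma$ witnessing $c_{i_j} \repby s_j$, so its target also weakly represents $c_{i_j}$, and we set $i_{j+1} = i_j$; \textbf{(ii)} if $s_j$ has an outgoing generalization edge with instantiation $\mu$, then conditions (b)--(e) (together with the new condition (f) on the $\CV$ component) imply that the composition $\sigma \circ \mu$ witnesses $c_{i_j} \repby s_{j+1}$; \textbf{(iii)} if $s_j$ has an outgoing evaluation edge, one verifies --- by inspecting each symbolic execution rule (load, store, \code{icmp}, \code{add}, \code{br}, \code{free}, \code{malloc}, \code{alloca}, \code{ret}) --- that the unique concrete successor $c_{i_j+1} = c_{i_j{+}1}$ (from $c_{i_j} \ccto c_{i_j+1}$) is weakly represented by the abstract successor, setting $i_{j+1} = i_j + 1$. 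This is essentially the same argument as in \cite{LLVM-JAR}, with minor additions to propagate $\CV$ correctly.

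The main obstacle is the \emph{call state} case, where weak completeness provides both a call abstraction edge and intersection edges with all pertinent return states, and we must choose between them by looking ahead in $\pi$. If the concrete execution of the topmost stack frame of $c_{i_j}$ never returns, we take the call abstraction edge and set $i_{j+1} = i_j$; weak representation is preserved because context abstraction only shrinks the state. If it does return at some later index $k > i_j$ with $c_k$ being the first configuration whose topmost stack frame matches (in size) a lower frame of $c_{i_j}$, we take the intersection edge to the intersection $s_i$ with the appropriate return state $s_r$, and set $i_{j+1} = k$. Proving $c_k \repby s_i$ is the technical crux: one must exhibit a concrete instantiation that satisfies $\stateformula{s_i}_{\SL}$ by combining (a) the instantiation previously used for $c_{i_j} \repby s_c$, (b) the values observed at $c_k$ in its topmost (returned) frame, stitched in via $\renamed$, and (c) the variable-identity entries $v \corrvar w$ of $\CV^{s_r}$ composed with the generalization instantiation $\mu$ from $s_\mathit{ca}$ to $s_e$. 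For the memory components, one uses the $\lostAL$ predicate to justify that allocations and points-to atoms of $s_c$ transported to $s_i$ really did survive the summarised function execution unchanged (since any access would have required them to be present in $s_e$'s successors), while the $\renamed$-image of $s_r$'s memory captures the remaining data. This requires an auxiliary lemma to the effect that whenever $c_{i_j} \repby s_c$ and $\pi$ contains a ``matched'' return at index $k$, then some return state $s_r$ of $\sfunc$ that is reachable by an execution path from $s_e$ weakly represents $c_k$ --- which follows by applying the inductive hypothesis to the sub-computation of the topmost frame started at $c_{i_j}$.

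Finally, to establish the infinity claim, I note that refinement, generalization and call-abstraction edges do not strictly advance the concrete index, whereas evaluation and intersection edges do. Condition (a) of the generalization rule forbids cycles of refinement/generalization edges, and a call-abstraction edge is immediately followed (via a generalization) by an entry state, from which no further call abstractions occur before another evaluation. Hence only finitely many consecutive non-advancing inductive steps are possible, so $\pi$ infinite forces the abstract sequence to be infinite. For finite $\pi$ ending at $c$, the same construction terminates at an abstract state $s$ with $c \repby s$ by the maintained invariant.
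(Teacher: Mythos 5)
Your proposal is correct and follows essentially the same route as the paper's proof: both defer the standard edge types (evaluation, refinement, generalization) to the argument of \cite{LLVM-JAR} modulo the $\CV$ component and weak representation, both treat call abstraction as trivially sound because it only generalizes, both make the intersection case the technical crux by building the combined instantiation from $\sigma_c$ and $\sigma_r \circ \renamed^{-1}$ together with the $\CV$/$\mu$ correspondences and the $\lostAL$ predicate for the memory components, and both settle the infinity claim by bounding the number of consecutive non-advancing edges. Your framing as an explicit induction with lookahead at call states (and the auxiliary claim that some reachable return state $s_r$ weakly represents the concrete return configuration, obtained by applying the induction hypothesis to the sub-computation of the topmost frame) just makes explicit what the paper leaves implicit in its figure and prose.
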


 The proof relies on the fact that our symbolic execution rules correspond
to the actual execution of \LLVM when they are applied to concrete states.
Moreover, terminating executions of 
function calls can be simulated using intersection edges
(for that reason,  some subsequences of concrete states can be ``skipped'' (i.e., not represented by abstract states)
in Thm.\ \ref{thm:soundness-graph-construction-main-text})
and non-terminating 
function calls can be simulated by following a call abstraction edge to the 
entry state of the called function and by continuing the execution from there.

Note that a complete SEG does not contain $\ERROR$. Hence, the program is memory safe for
all concrete states represented
in the SEG.

\setcounter{cor-memory-safety}{\value{theorem}}

\begin{corollary}[Memory Safety of \LLVM Programs]
\label{cor:memory-safety}
 Let $\PP$ be a program with a complete
symbolic execution graph $\GG$. Then $\PP$ is memory safe for all states represented by $\GG$.
\end{corollary}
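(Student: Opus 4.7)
The plan is to derive the corollary directly from \cref{thm:soundness-graph-construction-main-text} by contradiction, leveraging the fact that the error state can only be weakly represented by itself.

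First, I would assume the contrary: suppose $\PP$ is not memory safe for some concrete state $c_0$ represented by a state $s_0$ in $\GG$. By the definition of memory safety, this means there is a finite \LLVM{} evaluation $\pi = c_0 \cto c_1 \cto \dots \cto c_n$ with $c_n = \ERROR$. Since $\GG$ is complete, it is in particular weakly complete, so \cref{thm:soundness-graph-construction-main-text} applies to $\pi$ starting from $s_0$. This yields a finite sequence of states $s_0, s_1, \dots, s_m$ in $\GG$ connected by edges, together with indices $0 = i_0 \leq i_1 \leq \dots \leq i_m$ with $c_{i_j} \repby s_j$, and moreover the last state $s_m$ satisfies $c_n \repby s_m$, i.e.\ $\ERROR \repby s_m$.

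Next, I would invoke the clause of \cref{def:weak-representation} stating that $\ERROR$ is only (weakly) represented by $\ERROR$ itself. This forces $s_m = \ERROR$, so the SEG $\GG$ contains the error state. But by \cref{def-complete-seg}, a complete SEG does not contain $\ERROR$, which is the desired contradiction. Hence no evaluation $c_0 \cto^+ \ERROR$ exists, and $\PP$ is memory safe for every state represented by $\GG$.

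The argument is essentially a one-line consequence of the soundness theorem together with the representation property of $\ERROR$, so I do not expect a real obstacle here; the only thing to be careful about is to cite the correct clause of \cref{thm:soundness-graph-construction-main-text} for finite runs (the one guaranteeing that the final concrete state is represented by the final abstract state, not merely that some prefix is represented).
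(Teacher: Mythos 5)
Your proof is correct and follows essentially the same route as the paper: assume an evaluation $c_0 \cto^+ \ERROR$, apply \cref{thm:soundness-graph-construction-main-text} to the finite run to obtain a path in $\GG$ ending in a state weakly representing $\ERROR$, conclude via \cref{def:weak-representation} that this state is $\ERROR$ itself, and contradict completeness. The paper's proof is just a more compressed version of the same argument, leaving the step ``$\ERROR \repby s$ forces $s = \ERROR$'' implicit.
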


\subsection{Modular Re-Use of Symbolic Execution Graphs}\label{Re-Use}
\label{sect:re-using-segs}

In \cite{LLVM-JAR}, whenever an \LLVM{} function \code{g} calls an auxiliary function \code{f},
then during the construction of \code{g}'s symbolic execution graph, one obtained a new
abstract state whose topmost stack frame is at the start of the function
\code{f}. To evaluate this state further, now one had to execute \code{f} symbolically and only after the end of
\code{f}'s execution, one could remove the topmost stack frame and continue the further
execution of \code{g}. So even one had analyzed termination of \code{f} before, one could
not re-use its symbolic execution graph, but one had to perform a new symbolic execution of
\code{f} whenever it is called. This missing modularity had severe drawbacks for the
performance of the approach and moreover, it prevented the analysis of functions with
recursive calls.

In \cref{Abstracting the Call Stack}-\ref{Complete SEGs}, we showed how to
abstract from the call stack by using 
call abstractions and intersections. This does not only allow us to analyze recursive
functions, but it also allows us to re-use previously computed symbolic
execution graphs of auxiliary functions. Thus, it is the key for the modularization of our
approach.

To illustrate this,
we now show how the previously computed symbolic execution graph of \code{f} from \cref{fig:full-graph-function-f}
can be re-used in a modular way
to analyze functions like \code{main} from \cref{sect:domain} which call \code{f}, see \cref{fig:graph-main}.
We assume  that \code{main}'s call of \code{f} is at program position $\Pos_c$ inside of \code{main}'s
\code{while}-loop, yielding a call state $\GraphMainNodeCall$.
Its call abstraction $\GraphMainNodeCallAbstraction$ has a generalization edge to
$\GraphFNodeOne$, the entry state of \code{f}. 

\begin{figure}[t]
\begin{center}
\footnotesize
\newcommand{\ydist}{0.3cm}
\newcommand{\xdist}{0.5cm}
\newcommand{\setfont}{\scriptsize}
\begin{tikzpicture}[node distance = \ydist and \xdist]
\scriptsize
\def\widetwidth{7.4cm}

\tikzstyle{state}=[
                           font=\scriptsize,
                           draw]

\node[state, label=180:$\GraphMainNodeCall$, align=left, text width=\widetwidth] (Main-1) 
{
$[((\code{entry}, 0), \, \{\code{p} = \overline{v}_{\codevar{op}}\}),
\; (p_c, \{ \code{op} = v_\codevar{op}, ...\})],$\\
$\{  v_\codevar{op} = \overline{v}_{\codevar{op}}, v_\codevar{i} > 0, ... \},\allowbreak
    \{ \alloc{v_{\codevar{op}}}{v_{\codevar{op}}} \},$\\
    $\{ v_{\codevar{op}} \pointsto[\codevar{i8}] v_{\codevar{i}}  \}, 
     \{v_{\codevar{op}} \corrvar v_{\codevar{op}}, v_{\codevar{i}} \corrvar v_{\codevar{i}}, \ldots\}$
};

\node[state, label=270:$\GraphMainNodeCallAbstraction$, align=left, below=0.5cm of Main-1,xshift=-2cm] (Main-2) 
{
};

\node[state, label={[xshift=2pt, yshift=2pt]210:$\GraphFNodeOne$}, align=left, left=1.6cm of Main-2] (Main-3) 
{
};

\node[state, label=180:$\GraphMainNodeIntersectionUnsat$, align=left, below=0.44cm of Main-1, xshift=1cm] (Main-4) 
{
unsat.};

\node[ align=left, left=0.5cm of Main-3] (Main-3-empty) 
{
$\ldots$
};

\node[state, label=180:$\GraphMainNodeIntersection$, align=left, text width=\widetwidth, below=1.2cm of Main-1] (Main-5) 
{
$[((\code{rec}, 3), \, \{\code{p} = r_{\codevar{p}}, \code{pval} =
r_{\codevar{pval}}, \code{rrec} = r_{\codevar{rrec}}, ...\}),$\\$
\phantom{[}(p_c, \{  ...\})],$\\
    $\{  r_\codevar{pval} \geq 0, r_\codevar{rrec} < 0, v_\codevar{i} > 0, v_\codevar{i} = r_\codevar{pval}, ...  \},\allowbreak
    \emptyset, \allowbreak
    \emptyset, \{\ldots\}$
};

\node[state, label=180:$\GraphMainNodeAfterStore$, align=left, text width=\widetwidth, below=0.5cm of Main-5] (Main-6) 
{
$[(...)], \allowbreak
    \{  v_\codevar{idec} = \overline{v}_\codevar{rrec} + v_\codevar{i}, \overline{v}_\codevar{rrec} = r_\codevar{rrec},\allowbreak r_\codevar{pval} \geq 0, \allowbreak r_\codevar{rrec} < 0,\allowbreak v_\codevar{i} > 0, ...  \},\allowbreak
    \emptyset, \allowbreak
    \emptyset, \{\ldots\}$
};

\draw ($(Main-3)+(-15pt,0)$) ellipse (23pt and 11.5pt); 

\draw[ca-edge] (Main-1) -- node[anchor=east] {\tiny call abstraction $\;\;$} (Main-2);

\draw[ca-edge] (Main-2) -- node[anchor=north] {\tiny generalization} (Main-3);

\draw[eval-edge] (Main-3) --  (Main-3-empty);

\draw[ca-edge] ($(Main-1.south)+(5pt,0)$) -- node[anchor=west,xshift=5pt, 
align=left, font=\tiny] {intersection with $\GraphFNodeSeven$
} (Main-4);

\draw[ca-edge] ($(Main-1.south)+(-5pt,0)$) -- node[anchor=west,very near end, font=\tiny]
{ intersection with $\GraphFNodeEighteen$
}
($(Main-5.north)+(-5pt,0)$); 

\draw[omit-edge] (Main-5) -- node[anchor=east,xshift=-0pt] {\tiny  \code{ret}, \code{add}} (Main-6);

\draw[bend angle=30, bend right, omit-edge] (Main-6.east) to (Main-1.east);

\end{tikzpicture}
\caption{\label{fig:graph-main} SEG for \code{main} (extract)}
\end{center}
\end{figure}
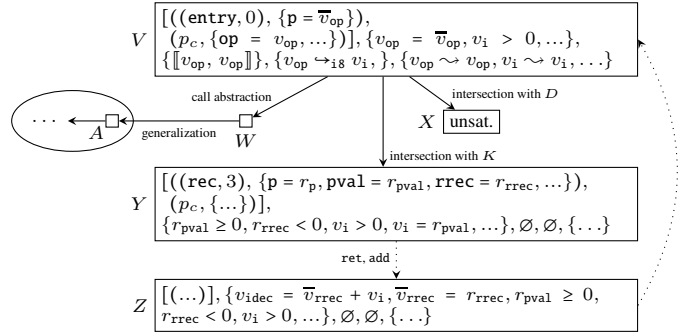

Intersecting
the call state $\GraphMainNodeCall$ with the return state $\GraphFNodeSeven$
of \code{f} yields a state $\GraphMainNodeIntersectionUnsat$, whose corresponding state formula
$\stateformula{\GraphMainNodeIntersectionUnsat}$ is unsatisfiable.
The reason is that in
$\KB^X$ we have $v_\codevar{i} > 0$ (from $\GraphMainNodeCall$), $\delta^X(v_\codevar{pval}) < 0$
(from $\GraphFNodeSeven$, where a renaming $\delta^X$ is applied) 
and $v_\codevar{i} = \delta^X(v_\codevar{pval})$ (since $v_\codevar{*p} \corrvar v_\codevar{pval}  \in \CV^\GraphFNodeSeven$ and $v_\codevar{i}$ is identified with $v_\codevar{*p}$ in the
generalization from $\GraphMainNodeCallAbstraction$ to $\GraphFNodeOne$).
Intuitively, the unsatisfiability of
$\stateformula{\GraphMainNodeIntersectionUnsat}$
is due to the fact that when
\code{f} is called from \code{main}, the value at
\code{p} in \code{f} cannot be negative due to the condition of \code{main}'s
\code{while}-loop and thus, it cannot
immediately trigger the base case of $\code{f}$.

The intersection of the call state $\GraphMainNodeCall$ with the return state
$\GraphFNodeEighteen$ of \code{f} yields the state
$\GraphMainNodeIntersection$.
Here, we again have $v_\codevar{i} > 0$ (from $\GraphMainNodeCall$), but now we also obtain
$r_\codevar{pval} \geq 0$
(from $\GraphFNodeEighteen$, where $m_\codevar{pval}$ is renamed to $r_\codevar{pval}$, i.e.,
$\delta^Y(m_\codevar{pval}) = r_\codevar{pval}$).
Moreover, since $v_\codevar{*p} \corrvar m_\codevar{pval}  \in \CV^\GraphFNodeEighteen$,
in the intersection $\GraphMainNodeIntersection$ we have an equality between
$\mu^\GraphMainNodeCallAbstraction(v_\codevar{*p})$  and $\delta^Y(m_\codevar{pval})$, where
$\mu^\GraphMainNodeCallAbstraction(v_\codevar{*p})$ is $v_\codevar{i}$ and
$\delta^Y(m_\codevar{pval})$ is $r_\codevar{pval}$.
Again, in the intersection we have $\AL^\GraphMainNodeIntersection = \PT^\GraphMainNodeIntersection
= \emptyset$, since $\AL^\GraphFNodeEighteen = \PT^\GraphFNodeEighteen = \emptyset$
and the only allocation in $\GraphMainNodeCall$ is not removed in the generalization
step from $\GraphMainNodeCallAbstraction$ to $\GraphFNodeOne$.
Further evaluation of $\GraphMainNodeIntersection$ yields a state
$\GraphMainNodeAfterStore$.
Here, $v_\codevar{idec}$ is the sum of \code{f}'s return value $\overline{v}_\codevar{rrec}$ and
the previous value $v_\codevar{i}$. 
There is a path from $\GraphMainNodeAfterStore$ back to $\GraphMainNodeCall$ and by
$(r_\codevar{rrec} < 0) \in  \KB^\GraphMainNodeIntersection$ (resulting from the return state
$\GraphFNodeEighteen$), this indicates that \code{i} is decremented in the loop.

\section{From SEGs to ITSs}
\label{sect:its}

Once we have a complete symbolic execution graph for the program under consideration, we
extract
\emph{integer transition systems} (ITSs) from its maximal cycles (i.e., from its
\emph{strongly connected components} (SCCs)\footnote{Here, $\GG$ is considered to be an
  \emph{SCC} if it is a maximal subgraph such that for all nodes $A, A'$ in $\GG$, $\GG$
  contains a non-empty path from $A$ to $A'$. So in contrast to the standard definition of
  SCCs, we also require that there must be a non-empty path from every node to itself.})
and apply existing techniques to prove their termination.
An ITS is a graph whose nodes are abstract states and whose edges are \emph{transitions}.
A transition is labeled with conditions that are required for its application.
We use the set $\Vsym$ to denote symbolic variables before applying a transition, and we
let the set
$\Vsym' = \{v' \mid v \in \Vsym\}$ denote the values of symbolic variables after the application of the
transition.
Note that in our SEGs, for all edge types except generalization edges, the
same variable occurring in two consecutive states denotes the same value.
Hence, in the ITSs resulting from SEGs, $v' = v$ holds for all transitions except those
that are obtained from generalization edges.

We use the same translation of symbolic execution graphs into ITSs that was presented in
\cite{LLVM-JAR}, since all new edge types introduced in this paper can be translated in
the same way as evaluation edges:
A non-generalization edge from $s$ to $\overline{s}$ in the SEG is transformed into a
transition with the condition $v' = v$
for all variables $v \in \Vsym(s)$.
In contrast, a generalization edge
from $s$ to $\overline{s}$
with the instantiation $\mu$ is transformed into a
transition with the condition $v' = \mu(v)$ for all
  $v \in \Vsym(\overline{s})$
to take the renaming of variables by $\mu$ into account.
Moreover, whenever a transition results from an edge from $s$ to $\overline{s}$,
we add $\stateformula{s}$ to the condition of the transition.

The only cycle of the SEG of  $\code{f}$
is from $\GraphFNodeOne$ to $\GraphFNodeThirteen$ back to $\GraphFNodeOne$
(see Fig.~\ref{fig:full-graph-function-f}), which corresponds to the recursive call of $\code{f}$.
The generalization edge from $\GraphFNodeThirteen$ to $\GraphFNodeOne$ results in a
condition $v_\codevar{*p}' = v_{\codevar{dec}}$, denoting that the value at the address $\code{p}$ is
decremented prior to each recursive call.
Due to $\models \stateformula{\GraphFNodeTen} \implies
v_\codevar{*p} \ge 0 \wedge v_\codevar{dec} = v_\codevar{*p} - 1$, existing termination techniques
easily show that the ITS corresponding to this cycle terminates.
This  implies termination for all \LLVM states that are represented in the SEG of
Fig.\ \ref{fig:full-graph-function-f}, i.e., this proves termination of the function
\code{f}.

Our new modular approach does not only allow us to re-use the SEGs for auxiliary functions
like \code{f} when they are called by other functions like \code{main}, but we also
benefit from this
modularity when extracting ITSs from the SCCs of the symbolic execution graph.
In the SEG for \code{main}, we have a path from the call state $V$ to the SEG of
\code{f}, but there is no path back from \code{f}'s SEG to \code{main}'s SEG  (see Fig.~\ref{fig:graph-main}).
Hence, the  SCCs of \code{main}'s graph do not contain any part of \code{f}'s
graph.\footnote{In contrast,  in
our previous technique for termination analysis of \LLVM from \cite{LLVM-JAR}, one
would obtain an SCC which contains both the cycles of \code{f}'s and of \code{main}'s SEG
and thus,  the ITS
corresponding to \code{f}'s SEG would have to be regarded again when proving termination of
\code{main}.}

Consequently, the resulting ITS for  \code{main} does not contain any rules of the ITS
for \code{f}, but just a rule that corresponds to the intersection edge from $V$ to
$Y$. This rule summarizes how $\KB$, $\AL$, and $\PT$ are affected by executing \code{f}.

Hence, if one has shown termination of \code{f} before, then to prove termination of
\code{main},
one just has to consider the only cycle  of \code{main}'s
SEG (from $\GraphMainNodeCall$ over $Y$
to $\GraphMainNodeAfterStore$ and back).
On the path from $\GraphMainNodeAfterStore$ back to $\GraphMainNodeCall$
 there is a generalization edge with an instantiation
$\tilde{\mu}$ such that
$\tilde{\mu}(v_\codevar{i}) = v_\codevar{idec}$
(i.e., the corresponding transition in the ITS has the conditions
$v_\codevar{i}' = v_\codevar{idec}$ and $\stateformula{\GraphMainNodeAfterStore}$).
Since we have $\models \stateformula{\GraphMainNodeAfterStore} \implies
v_\codevar{idec} < v_\codevar{i} \wedge v_\codevar{i} > 0 $, termination of the resulting ITS
is again easy to show by standard termination techniques.

As in \cite[Thm.\ 13]{LLVM-JAR},
our construction ensures that termination of the resulting
ITSs implies termination of the original program:

\setcounter{thm-termination}{\value{theorem}}

\begin{theorem}[Termination]
\label{thm:termination}
Let
$\Prog$ be an \emph{\LLVM} program with a complete symbolic execution graph
$\GG$ and let \(\I_1,\ldots,\I_m\) be the ITSs resulting from the SCCs of \(\GG\).
If all ITSs \(\I_1,\ldots,\I_m\) terminate,
then $\Prog$ also terminates for all concrete states $c$
that are represented by  a state of $\GG$.
\end{theorem}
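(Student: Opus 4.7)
The plan is to prove the contrapositive: if $\Prog$ admits an infinite \LLVM evaluation starting from some concrete state $c$ represented by a state $s_0$ of $\GG$, then at least one of the ITSs $\I_1,\ldots,\I_m$ admits an infinite run. First I would invoke \cref{thm:soundness-graph-construction-main-text} to obtain, from the infinite evaluation $\pi = c_0 \ccto c_1 \ccto \dots$, a corresponding infinite sequence of abstract states $s_0, s_1, s_2, \dotsc$ in $\GG$ together with indices $0 = i_0 \leq i_1 \leq \dots$ such that $c_{i_j} \repby s_j$ for each $j$. Since $\GG$ is complete, it contains no $\ERROR$ node (by \cref{cor:memory-safety}-style reasoning) and is finite, so by a pigeonhole argument some state occurs infinitely often. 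Dropping a finite prefix, I can assume that all $s_j$ lie in one strongly connected component $\GG_k$ of $\GG$, which yields the ITS $\I_k$.

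Next I would construct an infinite run of $\I_k$ that witnesses its non-termination. For each transition of $\I_k$ coming from an edge $s_{j-1} \to s_j$ in $\GG_k$, I need to exhibit a concrete instantiation of $\Vsym(s_{j-1}) \cup \Vsym(s_j)'$ that satisfies the transition's guard. The instantiation is obtained from the concrete states $c_{i_{j-1}}$ and $c_{i_j}$: since $c_{i_{j-1}} \repby s_{j-1}$ and $c_{i_j} \repby s_j$, \cref{def:weak-representation,def:representation-same-stack-size} furnish concrete instantiations $\sigma_{j-1}$ and $\sigma_j$ that model $\stateformula{s_{j-1}}_\SL$ and $\stateformula{s_j}_\SL$ respectively, hence also $\stateformula{s_{j-1}}$ and $\stateformula{s_j}$. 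For non-generalization edges, symbolic variables shared between $s_{j-1}$ and $s_j$ represent the same underlying values (this is part of the symbolic execution rules recapitulated from \cite{LLVM-JAR} and extended to intersection and call abstraction edges in \cref{Intersecting States,Abstracting the Call Stack}), so taking $v' := \sigma_j(v) = \sigma_{j-1}(v)$ satisfies the guard $v' = v$. For a generalization edge with instantiation $\mu$, condition (c) of the generalization rule together with the fact that $s_{j-1}$ refines $\mu(\overline{\KB})$ ensures $\sigma_{j-1} \circ \mu$ is compatible with $\sigma_j$ on $\Vsym(s_j)$, validating the guard $v' = \mu(v)$.

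The main obstacle will be handling the interaction between call abstraction edges, intersection edges, and the indices $i_j$ jumping across finite sub-computations of $\pi$. Specifically, when $s_{j-1} \to s_j$ is a call abstraction edge, the corresponding concrete computation step is trivial (just discarding lower stack frames at the abstract level), so $i_j = i_{j-1}$ and the instantiation argument is routine. When it is an intersection edge, $\cref{thm:soundness-graph-construction-main-text}$ lets $i_j$ skip a whole terminating sub-run of a function call, but this skip is harmless because the intersection's $\KB^i$, $\AL^i$, $\PT^i$ (cf.\ \cref{def:intersection}) are by construction satisfied by the concrete pre- and post-call states, thanks to the $\CV$-tracking and the $\lostAL$-based allocation bookkeeping. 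The subtle point I would spend the most care on is arguing that the infinite tail of $\pi$ cannot be of the ``terminating function call'' form forever: if it were, every infinite abstract subsequence through intersection edges would collapse to a finite one, contradicting the infiniteness supplied by $\cref{thm:soundness-graph-construction-main-text}$.

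Finally, assembling these per-edge instantiations into a single infinite $\I_k$-run is straightforward because consecutive transitions share the variables $v'$ and $v$ consistently by construction. This infinite run contradicts termination of $\I_k$, completing the proof. The detailed verification of the edge-by-edge instantiation claim for intersection and merged-return-state edges, together with showing that the $\ccto$-trace's skipped sub-computations do not break weak representation, would be deferred to \cref{app:proofs}.
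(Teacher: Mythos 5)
Your proposal is correct and follows essentially the same route as the paper's proof: lift the infinite concrete run to an infinite abstract path via the soundness theorem, then satisfy each ITS transition's guard using the concrete instantiations furnished by the representation relation, with a case split between generalization edges (guard $v' = \mu(v)$) and all other edge types (guard $v' = v$). Your explicit pigeonhole step localizing the tail to a single SCC is a small extra precaution that the paper elides by working with the union $\I_1 \cup \dots \cup \I_m$, but it does not change the argument.
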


\section{Implementation, Related Work, and Conclusion}
\label{sect:eval}

We developed a technique 
for automated termination analysis of \sfC{} (resp.\ \LLVM)
programs which models the memory in a byte-precise way.
In this paper, we showed how our technique 
can be improved into a modular approach. In this way, every function is analyzed
individually and its termination does not have to be re-proved anymore when it
is called by another function. This improvement also allows us to extend our approach to
the handling of recursive functions.

We implemented our approach  in our tool
\aprove{} \cite{AProVE-JAR}.
In \cref{Implementation Details} we  present
implementation details which we developed in order to improve the
analysis of large programs. After briefly describing the approaches of the other main tools for termination
analysis of \sfC{} programs at \emph{SV-COMP} in \cref{sect:relwork}, \cref{Experimental
  Evaluation}
gives an experimental
comparison with \aprove{}  based on the tools' performance at \emph{SV-COMP}
and
discusses directions for future work.

\subsection{Implementation Details}\label{Implementation Details}

Our approach is especially suitable for programs where a precise modeling of the variable
and memory contents are needed to prove termination. However, a downside of this high
precision is that it often takes long
to construct symbolic execution graphs, since \aprove{} cannot give any meaningful
answer before this construction is finished.
The more information we try to
keep in the abstract states, the more time is needed in every
symbolic execution step when inferring knowledge for the next state.
This results in a larger runtime than
that of many other tools for termination analysis.
Before developing the improvements of the current paper,
this used to result in many timeouts
when analyzing large programs with many function calls,
even if termination of the functions was not hard to prove
once the graph was constructed. For every function call,
an additional subgraph of the SEG was computed in the non-modular approach of \cite{LLVM-JAR}. This
did not only prohibit the handling of
recursive functions but also an efficient treatment of programs with several calls
of the same function. For example, this is the reason why
\aprove's analysis failed on all programs from
the \emph{product-lines} set, which is a part of the benchmarks in the \emph{Termination} category of
\emph{SV-COMP} since 2017. All terminating programs in this set consist of 2500-3800
lines of \sfC{} code. The corresponding \LLVM{} programs have 4800-7000 lines of code.

However, the novel approach of the current paper to analyze functions modularly is a big
step towards scalability.
Moreover, we developed several new heuristics to improve \aprove's performance on large
programs further. In this way,
\aprove's ability to
analyze large programs has increased significantly from year to year, see \cref{Experimental Evaluation}.

In the following, we outline the most crucial heuristics that have been implemented in \aprove{} until
\emph{SV-COMP 2019} in order to improve the handling of large programs.

\paragraph{Adapting the Strategy for Merging}
In \cite{LLVM-JAR}, we presented a strategy to decide when to merge abstract states.
There, merging was used to ensure that programs with loops still yield a finite
SEG.
However, merging can also be seen as a means of reducing the complexity of symbolic execution.
Merging two branches of the SEG and continuing symbolic execution from only the merged state
onwards can reduce the remaining number of required abstract states
significantly.

Since branching instructions lead to an exponential blowup of the
state space, for programs with a particularly high number of such instructions,
we use a more aggressive merging strategy.
It weakens some conditions on when states can be merged and then forces merging
of states that satisfy these weaker conditions.
Thus, we trade precision of the analysis for performance, by trying to obtain
SEGs with fewer states and fewer entries in their components.

When using the aggressive merging strategy, we change the conditions on when
states can be merged as follows:

\begin{itemize}
    \item
      Our original strategy for merging in \cite{LLVM-JAR} required that
        two states $s$ and $\overline{s}$ can only be merged if there is a path from $s$ to $\overline{s}$ in the
        symbolic execution graph.
        The reason was that the intention of merging is to \emph{guess} during an
        infinite path how this path eventually evolves in such a way that we keep all
        knowledge that is valid along this path (e.g., in each iteration of a loop) but remove
        all knowledge that only holds for a segment of this path (e.g., in a single
        iteration). For states of different paths, we did not see an advantage of
        merging these states and possibly losing information that is crucial to prove
        termination for the individual paths.

        However, for excessively branching functions,
        we want to force merging of different branches of their subgraph,
        even if there is no path connecting the involved states.
                Therefore, for those functions we drop the requirement that  there must
                always be a path between merged
        states.
    \item
         Normally,
        our merging heuristic requires merging candidates to have the same program
        variables in the $\LV$ functions of their corresponding stack frames.

        For example, this ensures that one does not merge states $s$ and $s_1$
        whose program position is at the beginning of a loop, where
 $s$ has not entered the loop yet whereas
        $s_1$ has executed the first iteration of the loop.
        This is because usually, $s_1$ contains extra program variables introduced in the body of
        the loop, and can therefore not be merged with $s$.
        Instead, we only merge $s_1$ with a successor $s_2$ that has iterated the loop
        body twice and has the same set of program variables.
        Indeed, it is preferable to merge only $s_1$ and $s_2$ rather than $s$ and $s_1$, because
        this
        results in more
        information preserved in the resulting generalized state (and this
        information can be crucial in order to prove termination of the loop).

        However, if the program is very large, then
        for other states $s$ and $\overline{s}$ that are not connected by a path in the
SEG, we lift the restriction that merging is only possible if the domains of the $\LV$
functions coincide.
Instead, we then allow to merge abstract states with different program variables
by intersecting their sets of program variables.

Again, this may result in a
loss of precision.
So if there are
variables which are only defined in $s$, but not in $\overline{s}$ and
thus, also not in the
state resulting from merging $s$ and $\overline{s}$,
then the merged state might lack some knowledge about the connection of the values of the
current program variables to the program variables at other positions.
However, the change to this more liberal merging heuristic
does not affect the applicability of our symbolic execution
rules. In other words,
it is still ensured that all
program variables are defined that are needed to evaluate the remaining instructions of
the program.
The reason is that the compilation of \sfC{} programs only results in \emph{well-formed}
        \LLVM programs,
        where it is guaranteed that in all possible executions,
        the instruction defining a variable dominates (i.e., precedes) any instruction
        using it.
        In particular, if there are
        different abstract states at the same program position
        in the SEG, then
        only those program variables can be accessed during subsequent executions that
        were defined on all incoming paths to this position.\footnote{The only \LLVM{} instruction
        that may use variables that have not been defined on all paths to the current
        position is the \code{phi} instruction. However, in our symbolic execution,
        this instruction is evaluated in combination with branching instructions and is
        never the position of an abstract state in the SEG, see \cite{LLVM-JAR}.}
\end{itemize}

\paragraph{Enforcing Unique Entry and Exit of Functions}
       In large programs,
      for each function $\rfunc$, we enforce that there is only a single SEG
      by merging all of its entry states to a unique one. Of course, this can mean that an
      auxiliary function $\rfunc$ may have to be analyzed again if the entry state of its current
      SEG is not general enough to cover a new call of $\rfunc$ in some other
      function. But the effect of enforcing a unique entry state for $\rfunc$
      is that the analysis becomes
        slightly more general each time, until we (hopefully) reach a version that is
        general enough for future uses.
        Although this prohibits
        specialized analyses for individual function calls in different contexts, this
        results in positive effects for symbolic execution of large programs since the
        components of the entry state contain fewer
        entries, which speeds up symbolic execution considerably.
  
        In \cref{Complete SEGs}, we remarked
    that similar return states of
      recursive functions have to be merged to obtain a finite SEG, analogous to
      the merging of states involved in loops.
      For functions that are not recursive,
      this is not necessary.
      However, for large programs, we try to minimize the
        number of return states. For this purpose, we merge all 
        return states at the same program position if their sets of
        defined program variables are identical.
        This reduces the number of pairs of call and return states
        for which we have to construct an intersection.

\paragraph{Removal of Unreachable Information from States}
To increase the performance of symbolic execution, 
we use additional heuristics to detect if certain information in
a state is most likely unnecessary and could be removed.

To this end, we determine for each symbolic variable in an abstract state $s$
whether it is \emph{reachable}.
A variable is reachable if it occurs in the range
  of any of the state's
$\LV$ functions. If a reachable variable occurs as a bound of an entry from $\ALL(s)$,
or in an entry from $\KB$, all other variables in
the same entry are marked as reachable, too.
If the variable $v_1$  of an entry  $(v_1 \pointsto[\codevar{ty}] v_2) \in \PT$ is
reachable and lies within an allocation
with a reachable bound, then $v_2$  becomes reachable, too.
Based on this, we extend the notion of reachability from variables to atoms in abstract states.
We  call entries 
from $\KB$, $\ALL(s)$, and $\PT$
reachable if all their variables are reachable.
Moreover, an entry $v \corrvar w$ from $\CV$ is considered to be reachable if $w$ is reachable.

To reduce the amount of information in the abstract states, we delete all unreachable entries from call abstraction
states.
This is useful, because 
many entries of the call abstraction may only have been relevant for the
lower stack frames that are no longer present. Nevertheless, 
removing unreachable entries might lose information (e.g., if $\PT^s$
has the entries $v \pointsto w_1$ and $v \pointsto w_2$
where $v$ is unreachable but $w_1, w_2$ are reachable, then $\stateformula{s}$ contains $w_1 = w_2$, whereas this
information is lost when deleting these entries from $\PT^s$).
Therefore, for all other states besides call abstractions, we do not remove all
unreachable entries, but we use a
contrived heuristic that decides 
which of the unreachable entries to delete.

\subsection{Related Work}
\label{sect:relwork}

The general approach of \aprove{} is closely related to abstract interpretation \cite{AbstractInt}.
In contrast to many other abstract interpretation approaches, however,
our abstract states may include arbitrary arithmetic terms (e.g., they can contain any
arithmetic expression 
arising from the conditions in the program).
Therefore, our symbolic execution 
starts with a rather precise abstraction, which is then coarsened during generalization
steps and call abstraction steps. This can be seen as a fixpoint computation to generate
an over-approximation of all possible program runs.

Our work is inspired by our earlier approach for modular termination analysis of recursive
\tool{Java Bytecode} programs \cite{RTA11}. However, since \cite{RTA11} 
handles \tool{Java}, it cannot analyze memory safety,
explicit allocation and deallocation of memory, and pointer arithmetic. Thus, the current paper
shows how to adapt such an approach for modular symbolic execution of possibly recursive
programs
to a byte-precise
modeling of the memory, as required for the analysis of languages like \sfC{}
or \LLVM.

Moreover, there are several further differences between the current approach and the
technique of \cite{RTA11} which also result in improved modularity. 
Recall that in the current paper,
when analyzing termination of a function \code{main},
we connect call states like $V$ (where \code{main} calls an auxiliary function \code{f})
with intersection states like $Y$ (which results from intersecting the call state
$V$ with the
return state $K$ of \code{f}). Moreover, there are paths from the
call states in \code{main}'s SEG to the SEG of \code{f}. However, there is no edge back from
\code{f}'s SEG to the SEG of \code{main}. Hence, the SEG of \code{f} is not part of the
cycles of \code{main}'s SEG.

As explained in \cref{sect:its},
this means that if one has proved termination of the
auxiliary function \code{f} before, then the ITSs for
\code{f} do not have to be regarded anymore
when proving termination of \code{main}. In contrast, this modularity is lacking in
\cite{RTA11}, because there,
instead of edges from
the call states in \code{main}'s SEG to the intersection states, there would be
edges from the
return states of the auxiliary function
\code{f} to the intersection states in \code{main}'s SEG.
(So in the graph of Fig.\ \ref{fig:graph-main}, instead of the edge from $V$ to $Y$, there would be an
edge from $K$ to $Y$.)
Hence, there the SEG of \code{f}
would become part of cycles in the SEG of \code{main},
i.e., there would be one SCC that contains both the cycles of \code{f}'s and \code{main}'s
SEG. Thus,
 the ITSs
corresponding to \code{f}'s SEG would have to be regarded again when proving termination of
\code{main}.

There exist many approaches and tools for proving and disproving termination of \sfC{} programs, e.g., 
besides our own tool \aprove, the leading termination analysis tools at \emph{SV-COMP} 2014-2020 were
\tool{UltimateAutomizer} \cite{Ultimate},
  \tool{CPA-Seq} (based on \tool{CPAchecker}
\cite{CPA}), \tool{HIPTNT+} \cite{HIPTNT+}, \tool{SeaHorn} \cite{SeaHorn},
\tool{T2} \cite{T2}, and \tool{2LS} \cite{2LS}.
In the following, we give a brief overview of other termination analysis approaches,
in particular for handling modularity and recursion.

All of the tools mentioned above apply abstractions to reduce the state space when
analyzing (non-)termination. While our approach is based on a symbolic execution of the
program on abstract states,
\tool{UltimateAutomizer} uses an automata-based approach, whose key idea is to build
Büchi automata that accept all non-terminating traces of the program. Then, an
emptiness check either proves termination or yields an infinite trace that serves
as a (potentially spurious) counterexample for termination.
If spurious, a proof for its infeasibility is constructed using an inductive sequence
of interpolants from the error trace.
This proof is then generalized in order to exclude as many unfeasible traces as
possible. For an interprocedural analysis, so-called nested word automata are used,
which model the nesting of functions and use nested interpolants \cite{NestedInterpolants}
to exclude spurious traces. In this way, \tool{UltimateAutomizer} also handles recursion.

Counterexample-guided abstraction refinement is also used by
\tool{CPAchecker} but in a different setting. Here, an abstract reachability tree is
constructed, which unfolds the control flow graph. The edges of the tree correspond
to instructions of the program. The abstraction starts at a coarse level and is refined
whenever a spurious counterexample is found. To re-use effects of
functions that have already been analyzed before, \tool{CPAchecker} uses block
abstraction memoization, computing separate abstract reachability trees for
individual function bodies, if they are called. Whenever the same function is called
again, the function tree can be re-used if the
function's locally relevant variables are the same in the context of the current
abstract state.
Similar to \tool{UltimateAutomizer}, this approach has been extended to recursion using nested
interpolation for recursive function calls \cite{CPA-Rec}. While \aprove's strength is the
handling of programs whose termination depends on explicit heap operations, 
\tool{CPAchecker} is particularly powerful for large programs.

\tool{SeaHorn} incrementally synthesizes a ranking function candidate by asking a safety
verifier for counterexamples to non-termination. As long as terminating executions
are found that do not yet adhere to the candidate function, it is refined. Ultimately,
the candidate is either validated as an actual ranking function or non-termination is
implied.
To treat functions modularly, \tool{SeaHorn} constructs summaries for functions and
re-uses computed information.
To our knowledge, however, there is no support for recursive functions yet.

\tool{HIPTNT+} analyzes termination of the underlying program 
on a
per-method basis to obtain a modular analysis. Similar to our approach,
\tool{HIPTNT+} uses separation logic to express properties of the heap.
Each method is annotated with
a specification using predicates that is incrementally refined by case analyses. 
In this way, summaries of (non-)termination characteristics in the specification 
are derived and can be re-used every time a function is called within another
function.

\tool{T2} invokes an extended version of \tool{llvm2kittel} \cite{llvm2kittel} to translate
\sfC{} programs into ITSs. Then, termination of these ITSs is analyzed using techniques
that are also implemented in \aprove's back-end. 
While \aprove{} always tries to prove termination of \emph{all} runs of an ITS,
\tool{T2}
supports the termination analysis for ITSs where all runs begin with dedicated start
terms. For that reason, \tool{T2}
can also prove non-termination of ITSs (and therefore, \aprove{} uses \tool{T2} instead of
its own ITS-back-end  when trying to prove non-termination of \sfC{} programs). 
On the other hand, \tool{T2} does not
model the heap. Instead, it treats read accesses as loading non-deterministic values and
simply ignores write accesses. 

\tool{2LS} focuses on non-recursive programs with several functions.
It proves termination by an over-approximating forward analysis using templates
over bitvectors to synthesize linear lexicographic ranking functions. In order to handle
heap-allocated data structures, it uses a template domain for shape analysis.
Interprocedural summarization enables a modular analysis of
large programs that do not contain recursive functions.

\subsection{Experimental Evaluation and Future Work}
\label{Experimental Evaluation}

The focus of our approach is to analyze programs whose termination
depends on relations between addresses and memory contents, where the analysis requires
explicit low-level pointer arithmetic. \aprove{}'s  successful participation at
\emph{SV-COMP} and at the
\emph{Termination Competition}\footnote{\url{https://www.termination-portal.org/wiki/Termination_Competition}}
shows the applicability of our approach.

A command-line version of \aprove{} can be obtained from \cite{AproveWebsite}.
After installing all dependencies as described on this website, \aprove{} is invoked by the command
\begin{center}
	\code{java -ea -jar aprove.jar -m wst example.c}
\end{center}
to prove termination of the program \code{example.c}. Alternatively, \aprove{} can be accessed
via the web interface on the same website.
To run one of the versions submitted to \emph{SV-COMP}, the corresponding archive can be
downloaded from the competition website. Here, many of the dependencies are already included
in the archive. For example, for the version of 2019, only the \tool{Java Runtime
  Environment},
the \tool{Clang} compiler, and  \tool{Mono}  \cite{Mono} have to be installed.

\begin{figure}[t]
\hspace{0.4cm}
\begin{subfigure}[c]{0.36\textwidth}
\begin{tikzpicture}
\scriptsize
\begin{axis}[
    width=4.4cm,
    ymin=210,
    ymax=960,
    enlargelimits=0.4,
    bar width=5mm, y=0.0392mm,
    symbolic x coords={2017, 2018, 2019},
    xtick=data,
    x tick label style={rotate=45,anchor=east},
    nodes near coords align={vertical},
    legend style={draw=none,fill=none},
    legend columns=1,
    legend cell align={left},
    legend pos=north west,
    legend entries={\tool{UAutomizer},
                    \tool{CPA-Seq},
                    \tool{AProVE}},
    ]
\addlegendimage{only marks,line width=0.6pt,mark=o,mark size=2.2pt}
\addlegendimage{only marks,line width=0.6pt,mark=square,mark size=2.2pt}
\addlegendimage{only marks,line width=0.6pt,mark=x,mark size=2.2pt}
\addplot[ybar, nodes near coords, fill=black!10]
    coordinates {(2017,506) (2018,837) (2019,836)};
\addplot[line width=0.6pt,mark=x,mark size=2.2pt]
    coordinates {(2017,123) (2018,491) (2019,573)};
\addplot[line width=0.6pt,mark=o,mark size=2.2pt]
    coordinates {(2017,442) (2018,739) (2019,739)};
\addplot[line width=0.6pt,mark=square,mark size=2.2pt]
    coordinates {(2017,255) (2018,525) (2019,591)};
\end{axis}
\end{tikzpicture}
\vspace{-0.1cm}
\subcaption{\label{fig:eval-other}Programs of \emph{Termination-Other}}
\end{subfigure}
\hspace{0.7cm}
\begin{subfigure}[c]{0.5\textwidth}
\begin{tikzpicture}
\scriptsize
\begin{axis}[
    width=6.6cm,
    ymin=4,
    ymax=55,
    enlargelimits=0.15,
    bar width=5mm, y=0.8mm,
    symbolic x coords={2014, 2015, 2016, 2017, 2018, 2019},
    xtick=data,
    x tick label style={rotate=45,anchor=east},
    nodes near coords align={vertical},
    legend style={draw=none,fill=none},
    legend columns=1,
    legend cell align={left},
    legend pos=north west,
    legend entries={\tool{AProVE},
                    \tool{UAutomizer},
                    \tool{CPA-Seq},
                    \tool{SeaHorn},
                    \tool{HIPTNT+},
                    \tool{T2}},
    ]
\addlegendimage{only marks,line width=0.6pt,mark=x,mark size=2.2pt}
\addlegendimage{only marks,line width=0.6pt,mark=o,mark size=2.2pt}
\addlegendimage{only marks,line width=0.6pt,mark=square,mark size=2.2pt}
\addlegendimage{only marks,line width=0.6pt,mark=triangle,mark size=2.2pt}
\addlegendimage{only marks,line width=0.6pt,mark=star,mark size=2.2pt}
\addlegendimage{only marks,line width=0.6pt,mark=diamond,mark size=2.2pt}
\addplot[ybar, nodes near coords, fill=black!10]
    coordinates {(2014,22) (2015,28) (2016,35) (2017,55) (2018,55) (2019,55)};
\addplot[line width=0.6pt,mark=x,mark size=2.2pt]
    coordinates {(2014,0) (2015,28) (2016,32) (2017,33) (2018,40) (2019,39)};
\addplot[line width=0.6pt,mark=o,mark size=2.2pt]
    coordinates {(2014,8) (2015,22) (2016,24) (2017,37) (2018,36) (2019,37)};
\addplot[line width=0.6pt,mark=square,mark size=2.2pt]
    coordinates {(2017,0) (2018,0) (2019,13)};
\addplot[line width=0.6pt,mark=triangle,mark size=2.2pt]
    coordinates {(2016,0)};
\addplot[line width=0.6pt,mark=star,mark size=2.2pt]
    coordinates {(2015,23)};
\addplot[line width=0.6pt,mark=diamond,mark size=2.2pt]
    coordinates {(2014,17)};
\end{axis}
\end{tikzpicture}
\vspace{-0.1cm}
\subcaption{\label{fig:eval-rec}Recursive programs of other subcategories}
\end{subfigure}
\caption{\label{fig:eval}Number of termination proofs for leading tools in \emph{SV-COMP}}
\end{figure}
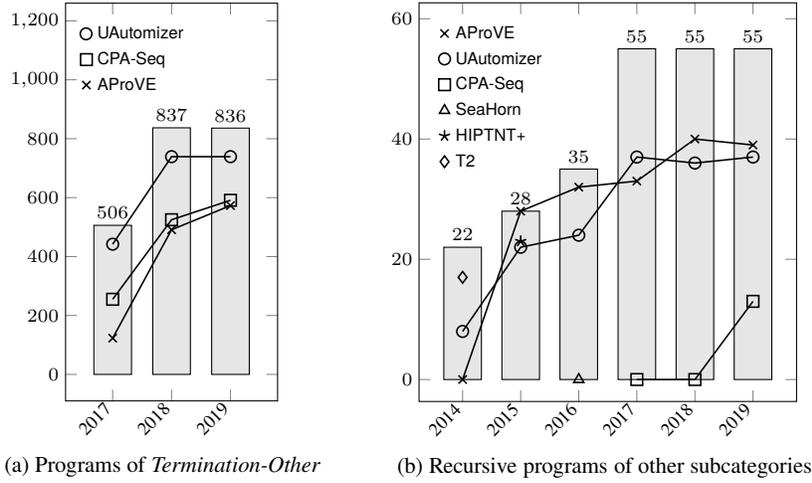

In the following, we evaluate the power of the new contributions of the paper.
To this end, we use the results that  \aprove{} and the other tools achieved at \emph{SV-COMP}.

\cref{fig:eval-other} shows the number of programs where termination was proved
for the three leading tools of the \emph{Termination} category of \emph{SV-COMP} in
\aprove{}'s weakest subcategory \emph{Termination-Other},
  which was introduced in 2017.
The bars in \cref{fig:eval-other} indicate the total number of terminating programs.
This subcategory mainly consists of large programs with significantly more function
calls and branching instructions than there are in the programs in the remaining two
subcategories. In particular, \emph{Termination-Other} includes
the \emph{product-lines} set, which contains 263 terminating
programs.
In 2017, \aprove{} already
performed well on smaller recursive programs, but this approach was not yet
generalized and optimized to use a modular analysis for non-recursive functions.
In the following two years, \aprove{} substantially reduced the relative gap to
the other leading tools for these kinds of examples.

\cref{fig:eval-rec} shows the number of recursive programs in the remaining two
subcategories of \emph{SV-COMP} where termination was proved.
Here, we give the numbers of successful proofs for the three leading tools of the
\emph{Termination} category per year. Again, the bars indicate the total number of
terminating recursive programs. Note that for most of the years, the set of programs
is a true superset of the set of programs of the previous year and the newly added
programs tend to be harder to analyze.
We see that first support to handle recursion
was already very successfully implemented in the \aprove{} version of 2015. In the
following years, this technique was further improved so that for most of the years,
\aprove{} was able to prove termination for more of these programs than the other
tools.

As mentioned, we could not submit \aprove{} to \emph{SV-COMP}  in
2020 and 2021 due to personal reasons, but we participated again in 2022 and 2025. The
three leading tools of 
the \emph{Termination} category of \emph{SV-COMP} 2020 were \tool{UAutomizer},
\tool{CPA-Seq}, and \tool{2LS}. However, \tool{UAutomizer} and \tool{CPA-Seq} did
not find more termination proofs for the programs in \cref{fig:eval-other}
and \cref{fig:eval-rec} than in 2019. \tool{2LS} was able to prove termination for
nearly as many programs as \tool{CPA-Seq} in \emph{Termination-Other}, but did not
find any termination proofs for the recursive programs in other subcategories.

Note that if we include \emph{non}-terminating recursive programs, \tool{UltimateAutomizer} is able
to give (non-)termination proofs for more recursive programs than \aprove{}.
The reason is that  although \aprove{} implements different approaches for disproving
termination, its focus is still on proving termination. The approach of
over-approximating all program runs using an abstraction that is suitable for analyzing large
programs often does not allow for an equivalent graph transformation where
non-termination of the resulting ITSs would imply non-termination of the original
program.

Apart from improving \aprove's capabilities for non-termination proofs, 
in future work we plan to extend our approach to handle recursive data structures.
Here, the main challenge is to create heap invariants that reason about the shape of
data structures and that abstract from their exact properties, but still
contain sufficient knowledge about the memory contents needed for the termination proof.
Similar to the approach in the current paper, this will require methods to remove and to restore
knowledge about allocations in the abstract states in order to validate memory safety.
Furthermore, these tasks have to combined with the handling of byte-precise pointer
arithmetic.

\begin{acknowledgement}
This research was partly funded by the Deutsche
  Forschungsgemeinschaft (DFG, German Research Foundation) - 
  235950644 (Project GI 274/6-2).
\end{acknowledgement}

\ethics{Competing Interests}{The authors have no conflicts of interest to declare that are relevant to the content of this chapter.}

\setcounter{section}{0}
\renewcommand{\thesection}{\Alph{section}}
\section{Separation Logic Semantics of Abstract States}
\label{app:semantics-sl}

In order to formalize which concrete states are represented by an abstract state $s$, we
introduced a \emph{separation logic}  formula $\stateformula{s}_{\SL}$ in \cite{LLVM-JAR}.
It extends $\stateformula{s}$ by further information about the memory, in order to
define which concrete states are represented by an (abstract) state.

 First, we define
the semantics of the fragment of separation logic used. In this fragment, first-order
logic formulas are extended by ``$\pointsto$'' for information from $\PT$.
We employ the usual semantics of the ``$*$'' operator, i.e., $\varphi_1 * \varphi_2$ means that
$\varphi_1$ and $\varphi_2$ hold for different parts of the memory. 

We use \emph{interpretations} $(\slassignment,\slmemory)$ to determine the semantics of
separation logic. Let  $\Idsi = \{ \code{x}_i \mid \code{x} \in \Ids, i \in \N_{> 0} \}$
be the set of all indexed program variables that we use to represent stack
\underline{fr}ames.
The function $\slassignment : \Idsi \to \Z$  assigns values to the
program variables, augmented with a stack index. The function 
$\slmemory : \N_{> 0} \partialfunctionmap \{0,\ldots,2^8-1\}$ describes the
memory contents at allocated addresses as unsigned bytes.
In the following, we also consider possibly non-concrete instantiations $\sigma : \Vsym \to
\mathcal{T}(\Vsym)$, where $\mathcal{T}(\Vsym)$ are all arithmetic terms containing only
variables from $\Vsym$.

\begin{definition}[Semantics of Separation Logic] 
\label{def:semantics-of-sl}
Let $\slassignment : \Idsi \to \Z$, $\slmemory : \N_{> 0} \partialfunctionmap \{0,\ldots,2^8-1\}$,
	and let $\varphi$ be a formula. 
Let $\slassignment(\varphi)$ result from replacing all $\code{x}_i$ in $\varphi$ by the
value $\slassignment(\code{x}_i)$. 
Note that by construction, local variables $\code{x}_i$ are never quantified in our formulas.
Then we define $(\slassignment,\slmemory) \models \varphi$ iff $\slmemory \models \slassignment(\varphi)$.

We now define $\slmemory \models \psi$ for formulas $\psi$ that may contain symbolic variables from $\Vsym$.
As usual, all free variables $v_1,\ldots,v_n$ in $\psi$ are implicitly universally
quantified, i.e., $\slmemory \models \psi$ iff  
$\slmemory \models \forall v_1,\ldots, v_n. \, \psi$.

The semantics of arithmetic operations and predicates as well as of first-order
connectives and quantifiers are as usual. 
In particular, we define $\slmemory \models \forall v. \, \psi$ iff $\slmemory \models
\sigma(\psi)$ holds for all instantiations $\sigma$ where $\sigma(v) \in \Z$ 
	and $\sigma(w) = w$ for all $w \in \Vsym \setminus \{v\}$.

The semantics of $\pointstosl$ and $*$ for variable-free formulas are as follows:
For $n_1,n_2 \in \Z$, let 
$\slmemory \models n_1 \pointstosl n_2$ hold iff $\slmemory(n_1) = n_2$.\footnote{We use 
  ``$\pointstosl$'' instead of ``$\mapsto$'' in separation logic, since 
  $\slmemory \models n_1 \mapsto n_2$ would imply that $\slmemory(n)$ is undefined for all 
  $n \neq n_1$. This would be inconvenient in our formalization, since $\PT$ 
  usually only contains information about a \emph{part} of the allocated 
  memory.
}

The semantics of $*$ is defined as usual in separation logic: For two partial 
functions $\slmemory_1,\slmemory_2 : \N_{>0} \partialfunctionmap \Z$, we write
$\slmemory_1 \bot \slmemory_2$ to indicate that  
the domains of $\slmemory_1$ and $\slmemory_2$ are disjoint. If $\slmemory_1 \bot \slmemory_2$, then
$\slmemory_1 \uplus \slmemory_2$ denotes the 
union of $\slmemory_1$ and $\slmemory_2$.
Now $\slmemory \models \varphi_1 * \varphi_2$ holds iff there 
exist $\slmemory_1 \bot \slmemory_2$ such that $\slmemory = \slmemory_1 \uplus \slmemory_2$ where 
$\slmemory_1 \models \varphi_1$ and $\slmemory_2 \models \varphi_2$.
We define the empty separating conjunction to be $\mathit{true}$, i.e.,
$\bigast\nolimits_{\varphi \in \AL} \, \stateformula{\varphi}_\SL \; = \; \mathit{true}$
if $\AL = \varnothing$.
\end{definition}

We now define the formula $\stateformula{s}_{\SL}$ for a state $s$.
In $\stateformula{s}_\SL$, the elements of $\AL$ are combined with the \emph{separating conjunction} ``$*$'' to
express that different allocated memory blocks are disjoint. In contrast, the elements of
$\PT$ are combined by the ordinary conjunction ``$\wedge$''. 
This is due to the fact that $\PT$ may contain entries $v_1 \pointsto[\codevar{ty_1}]
v_2$, $w_1 \pointsto[\codevar{ty_2}] w_2$ referring to overlapping parts of the
memory. Similarly, we also combine the two formulas resulting from $\AL$ and $\PT$ by
``$\wedge$'', as both express different properties of the same addresses. 
Recall that
we identify \emph{sets} of first-order formulas $\{\varphi_1,..., \varphi_n\}$
with their conjunction $\varphi_1 \wedge ... \wedge \varphi_n$ and $\CS$ with the
set resp.\ with the conjunction of the equations
$\bigcup_{1 \leq i \leq n} \{ \code{x}_i = \LV_i(\code{x}) \mid \code{x} \in \Ids, \LV_i(\code{x}) \text{ is defined}\}$.
As in  \cref{sect:seg}, for any type \codevar{ty}, $\sizeOf(\codevar{ty})$ denotes the size of \codevar{ty} in
bytes.

\begin{definition}[$\SL$ Formulas for States]
\label{def:StateFormula}
For $v_1,v_2 \in \Vsym$, let  $\stateformula{\alloc{v_1}{v_2}}_\SL =(\forall x. \exists
y. \; (v_1 \leq x \leq v_2) \implies (x \pointstosl y))$. 
In order to reflect the two's complement representation, for any \LLVM{} type $\codevarx{ty}$ we define
\mbox{\small $\stateformula{v_1 \hookrightarrow_{\codevarx{ty}} v_2}_\SL =$}
\vspace*{-2mm}
\[
\stateformula{v_1 \hookrightarrow_{\mathit{size}(\codevar{ty})} v_3}_\SL
\; \wedge \;
(v_2 \geq 0 \, \Rightarrow \, v_3 = v_2) \;
\wedge \;
(v_2 < 0 \, \Rightarrow \, v_3 = v_2 + 2^{8 \cdot \mathit{size}(\codevarx{ty})}),
\]
where $v_3 \in \Vsym$ is fresh. 
We assume a little-endian data layout (where least significant bytes are stored in the lowest address).
Hence, we let $\stateformula{v_1 \hookrightarrow_0 v_3}_\SL = \mathit{true}$ and
	 $\stateformula{v_1 \hookrightarrow_{n+1} v_3}_\SL = (v_1 \hookrightarrow (v_3 \;
\mathrm{mod} \; 2^8)) \; \wedge \; \stateformula{\,(v_1+1) \hookrightarrow_{n}
  (v_3\;\mathrm{div}\;2^8)\,}_\SL$. 

A state $s = (\CS,\KB,\AL,\PT)$  is then represented in separation logic by
\[\stateformula{s}_\SL = \stateformula{s} \; \wedge \; \CS \;  \wedge \;
(\bigast\nolimits_{\varphi \in \AL^*(s)} \; \; \stateformula{\varphi}_\SL)
\; \wedge \;
(\bigwedge\nolimits_{\varphi \in \PT} \; \; \stateformula{\varphi}_\SL).\]
\end{definition}

For any abstract state $s$ we have $\models \stateformula{s}_\SL \implies
\stateformula{s}$, i.e., $\stateformula{s}$ is a weakened version of $
\stateformula{s}_\SL$. 
As mentioned, we use $\stateformula{s}$ for the construction of the symbolic execution
graph, enabling standard first-order SMT solving to be used for all reasoning required in
this construction. The separation logic formula $\stateformula{s}_\SL$ is only needed to
define when a concrete state $c$ is \emph{represented} by an abstract state $s$. As
stated in Def.\ \ref{def:representation-same-stack-size} this is the case if 
 $(\slassignment^c,\slmemory^c)$ is a model of
$\sigma(\stateformula{s}_\SL)$ and for each allocation of $s$ there exists a corresponding allocation
in $c$ of the same size. Here, from every concrete state $c$ one can extract an
interpretation $(\slassignment^c,\slmemory^c)$ as follows.

\begin{definition}[Interpretations $\slassignment^c$, $\slmemory^c$]
Let $c \neq \ERROR$ be a concrete state.
For every $\codevar{x}_i \in \Idsi$ where $\codevar{x} \in \domain(\LVi^{c})$, let
$\slassignment^c(\codevar{x}_i) = n$ for the number $n \in \Z$ with $\models
\stateformula{c} \implies \LVi^c(\codevar{x}) = n$.

For $n \in \N_{>0}$, the  function $\slmemory^{c}(n)$ is defined iff there exists a $(w_1
\pointsto[\codevar{i8}] w_2) \in \PT^c$  such that $\models  \stateformula{c}  \implies  w_1
= n$.  
Let $\models \stateformula{c} \implies  w_2 = k$ for  $k \in [-2^7,2^7-1]$.
Then we have $\slmemory^c(n) = k$ if $k \geq 0$ and $\slmemory^c(n) = k + 2^8$ if $k < 0$.
\end{definition}

\section{Proofs}
\label{app:proofs}

This appendix contains all proofs for the results of the paper.


\setcounter{auxctr}{\value{theorem}}
\setcounter{theorem}{\value{thm-soundness-graph-construction-main-text}}

\begin{theorem}[Soundness of the Symbolic Execution Graph]
Let $\pi = c_0 \ccto c_1 \ccto c_2 \ccto \dots $ be a (finite resp.\ infinite)
\emph{\LLVM} evaluation of concrete states such that $c_0$ is represented by some 
        state $s_0$ in  a weakly complete SEG $\GG$. 
	Then there exists a (finite resp.\ infinite) sequence of states $s_{0},  s_{1},  s_{2},
        \dotsc$ where $\GG$ has an edge from $s_{j-1}$ to $s_j$ if $j > 0$,
        and  there exist $0 = i_0 \leq i_1 \leq  \ldots$ with $c_{i_j} \repby s_j$
        for all $j \geq 0$.
       Moreover, if $\pi$ is infinite then the  corresponding
        sequence of abstract states in $\GG$
        is infinite as well.
        In contrast, if $\pi$ is finite and ends at some concrete state $c$, then the
        sequence of states in $\GG$ ends at some state $s$ with $c 
        \repby s$.       
\end{theorem}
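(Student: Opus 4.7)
The plan is to construct the sequence $s_0, s_1, s_2, \ldots$ by induction on $j$, together with the indices $0 = i_0 \leq i_1 \leq \ldots$, maintaining the invariant $c_{i_j} \repby s_j$. The base case holds by hypothesis. For the inductive step, I would case-split on the type of outgoing edges of $s_j$ in $\GG$, and let the concrete continuation of $\pi$ after $c_{i_j}$ guide the choice. For edges that already appeared in \cite{LLVM-JAR}, the argument is essentially the one from there: an evaluation edge mirrors the concrete step $c_{i_j} \ccto c_{i_j+1}$ by soundness of the single-step rules, so I set $i_{j+1} = i_j+1$; a refinement edge is handled by picking the branch whose added $\KB$-constraint is realized in $c_{i_j}$ and keeping $i_{j+1} = i_j$; a generalization edge preserves $\repby$ directly by the conditions (a)--(f), also with $i_{j+1} = i_j$.

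The genuinely new case is when $s_j$ is a call state for some function $\rfunc$. By weak completeness (\cref{def-complete-seg}), $s_j$ has an outgoing call abstraction edge to $s_{ca}$ (with the adjoining generalization edge to the entry state $s_e$), and, for every return state $s_r$ reachable from $s_e$ along an execution path, an outgoing intersection edge to the intersection $s_i$ of $s_j$ and $s_r$. The choice is dictated by $\pi$: if the call of $\rfunc$ ever returns, let $k > i_j$ be the smallest index at which the top frame of $c_k$ is again the frame directly below $\FR_1^{c_{i_j}}$ (with the return value installed); then I would take the intersection edge to the $s_i$ built from the $s_r$ that weakly represents $c_{k-1}$, and set $i_{j+1} = k$. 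Otherwise I would take the call abstraction edge, set $i_{j+1} = i_j$, and continue inductively from $s_e$; this is justified because weak representation is closed under context abstraction, so $c_{i_j} \repby s_j$ implies $c_{i_j} \repby s_{ca}$ and hence $c_{i_j} \repby s_e$ via the generalization edge (Def.\ \ref{def:weak-representation}).

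The main obstacle is verifying $c_k \repby s_i$ in the returning sub-case. Using \cref{def:intersection}, I need to show that (i) the top frame of $c_k$, restricted to the $\renamed$-renamed data from $s_r$, matches what $s_r$ weakly represents at $c_{k-1}$ extended by the \code{ret}-step; (ii) the lower frames of $s_j$ still correctly represent the corresponding frames of $c_k$, since those frames were dormant during the call; and (iii) the equalities $\mu(v) = \renamed(w)$ added to $\KB^{s_i}$ for each $v \corrvar w \in \CV^{s_r}$ hold in $c_k$—this is precisely the semantic content of $\corrvar$ developed in \cref{Tracking Symbolic Variable Renamings} and enforced by condition (f) of the generalization rule along every execution path $s_e, \dotsc, s_r$. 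I would also have to justify that exactly the allocations and points-to atoms of $s_j$ that are $\lostAL$-lost in the generalization from $s_{ca}$ to $s_e$ survive to $c_k$: any such allocation cannot be freed inside $\rfunc$ without the symbolic execution from $s_e$ onward reaching $\ERROR$, which is excluded by weak completeness. For the final claims about trace length, the finite case is immediate (stop when $\pi$ ends), while for infinite $\pi$ I would argue that refinement, generalization, and call abstraction edges cannot chain indefinitely—by condition (a) of the generalization rule and the fact that a call abstraction step is always immediately followed by a single generalization into an entry state—so infinitely many evaluation or intersection edges must be taken, and hence the abstract trace is infinite.
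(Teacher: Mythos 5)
Your proposal follows essentially the same route as the paper's proof: it defers the evaluation/refinement/generalization cases to the argument of \cite{LLVM-JAR} (modulo the harmless $\CV$ component and the switch to weak representation), resolves call states by the same dichotomy between returning calls (take the intersection edge, verified via the semantics of $\corrvar$, condition (f), and the $\lostAL$ reasoning) and non-returning calls (take the call abstraction edge), and establishes infiniteness of the abstract trace by bounding the length of chains of non-evaluation edges. The only slight imprecision is attributing the non-deallocation of $\lostAL$-lost allocations to weak completeness "excluding" $\ERROR$ --- weak completeness does not exclude $\ERROR$; the correct reason, as in the paper, is that the simulating path from $s_e$ to $s_r$ ends at a return state rather than at $\ERROR$, so no \code{free} of such an address can occur along it.
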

\begin{proof}
  The
 corresponding theorem in \cite{LLVM-JAR} did not reason about paths
but about single concrete evaluation steps. It stated that for a concrete state $c$
that is represented by an abstract state $s$ in $\GG$, $c \cto \nxt{c}$
implies that there is a path from $s$ to an abstract state $\nxt{s}$ in $\GG$
such that $\nxt{c}$ is represented by $\nxt{s}$. Intuitively, each concrete
evaluation step is simulated by an evaluation edge during symbolic execution, while
generalization and refinement edges do not correspond to a concrete evaluation step.
Therefore, we argued that if $s$ has an outgoing evaluation edge,
then its direct successor
$\nxt{s}$ represents $\nxt{c}$.
In contrast, if $s$ has an outgoing generalization edge,
then the generalized state also represents $c$, and if $s$ has outgoing refinement
edges, then one of the direct successors of $s$ represents $c$. In the latter case,
the next step in the graph is an evaluation which yields a state $\nxt{s}$ that represents
$\nxt{c}$. In case of a
generalization, there may be a refinement step before $\nxt{s}$ is computed by
evaluating an instruction. This is illustrated in Fig.\ \ref{OldProofFig}. 

\newcommand{\nodedist}{1.5cm}
\newcommand{\vertdist}{-1.5cm}
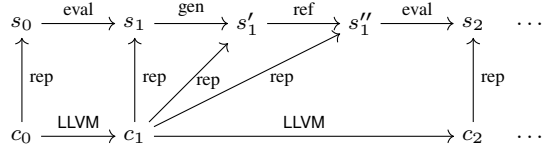
\begin{figure}[t]
\begin{center}
  \begin{tikzpicture}
    \node[] at (0,0) (s0) {$s_0$};
    \node[] at ($(s0)+(\nodedist,0)$) (s1) {$s_1$};
    \node[] at ($(s1)+(\nodedist,0)$) (s1') {$s_1'$};
    \node[] at ($(s1')+(\nodedist,0)$) (s1'') {$s_1''$};
    \node[] at ($(s1'')+(\nodedist,0)$) (s2) {$s_2$};
    \node[] at ($(s2)+(\nodedist/2,0)$) (sdots) {$\ldots$};
    
    \node[] at ($(s0)+(0,\vertdist)$) (c0) {$c_0$};
    \node[] at ($(c0)+(\nodedist,0)$) (c1) {$c_1$};
    \node[] at ($(c1)+(3*\nodedist,0)$) (c2) {$c_2$};
    \node[] at ($(c2)+(\nodedist/2,0)$) (cdots) {$\ldots$};
    
    \path[->] (s0) edge node[above] {\scriptsize{eval}} (s1);
    \path[->] (s1) edge node[above] {\scriptsize{gen}} (s1');
    \path[->] (s1') edge node[above] {\scriptsize{ref}} (s1'');
    \path[->] (s1'') edge node[above] {\scriptsize{eval}} (s2);
    
    \path[->] (c0) edge node[above] {\scriptsize{\LLVM}} (c1);
    \path[->] (c1) edge node[above] {\scriptsize{\LLVM}} (c2);
    
    \path[->] (c0) edge node[right] {\scriptsize{rep}} (s0);
    \path[->] (c1) edge node[right] {\scriptsize{rep}} (s1);
    \path[->] (c1) edge node[right] {\scriptsize{rep}} (s1');
    \path[->] (c1) edge node[right=0.2cm] {\scriptsize{rep}} (s1'');
    \path[->] (c2) edge node[right] {\scriptsize{rep}} (s2);
  \end{tikzpicture}
\end{center}
\caption{\label{OldProofFig}Relation between evaluation in \LLVM{} and paths in the SEG in \cite{LLVM-JAR}}
\end{figure}

In the present paper,
soundness of the evaluation rules, the generalization rule, and the refinement rule
follows from the proof in \cite{LLVM-JAR}. There are only two modifications that
we have to consider. First, we have the new state component $\CV$. However, this component
does not have any impact on the formula representation of states or on the representation
relation, and therefore it does not change the proof.
Second, we have the notion of \emph{weak} representation in our new approach and thus,
also in Thm.\ \ref{thm:soundness-graph-construction-main-text}. 
However, it is easy to see that this does not affect the proof:
\begin{itemize}
  \item For all evaluation rules except the \code{call} and the \code{ret} instruction,
    symbolic execution is only affected by the lower stack frames due to the allocations
    of those frames and the corresponding entries in $\PT$.
    However, \emph{which} frame an allocation belongs to
    has no effect on the symbolic execution.
    Furthermore, for $\PT$ entries, the states do not even contain the
    information on their corresponding stack frames. Therefore, for all instructions except \code{call} and \code{ret},
    applying our symbolic execution rules to a state and then creating its context
    abstraction of size $k$ results in the same result as first creating the context abstraction
    of size $k$ of the original state and then applying the symbolic evaluation rules to
    the context abstraction.
  \item Symbolically evaluating the \code{call} instruction on an abstract state $s$
    creates a new topmost stack frame corresponding to the new
     concrete stack frame that is created when evaluating \code{call} on the corresponding
     concrete state $c$. Again, the stack frame below the newly created frame
    is the only one that has an impact on the individual state components.
  \item The \code{ret} instruction pops the first stack frame. Thus, the second stack frame becomes
    the new topmost frame. Since the corresponding symbolic execution rule requires the second stack frame
    to be present in the abstract state, possibly missing stack frames due to context abstraction
    do not have an impact on the execution result.
\end{itemize}

Soundness of call abstraction follows from the fact that the call abstraction $s_{ca}$ of
an abstract state $s_c$ is more general than $s_c$, i.e., we do not have any additional
knowledge in $s_{ca}$ but instead we may lose knowledge from $s_c$ by abstracting from all but the
topmost stack frame. Therefore, it is trivial that any concrete state that is weakly
represented by $s_c$ is also weakly represented by $s_{ca}$.

Finally, we have to prove soundness of intersections. This is a special case since
intersection edges are the only edges that represent more than one concrete evaluation
step. The corresponding concrete steps are, however, represented by the path from the
call state $s_c$ to the return state $s_r$ that is used to create the intersection $s_i$.
This is illustrated in Fig.\ \ref{NewProofFig}.

\begin{figure}[t]
\begin{center}
  \begin{tikzpicture}
    \node[] at (0,0) (s) {$s$};
    \node[] at ($(s)+(\nodedist,0)$) (sc) {$s_c$};
    \node[] at ($(sc)+(\nodedist,0)$) (sca) {$s_{ca}$};
    \node[] at ($(sca)+(\nodedist,0)$) (se) {$s_e$};
    \node[] at ($(se)+(\nodedist,0)$) (sr) {$s_r$};
    \node[] at ($(sr)+(\nodedist,0)$) (si) {$s_i$};
    \node[] at ($(si)+(\nodedist/2,0)$) (sdots) {$\ldots$};
    \node[] at ($(s)-(\nodedist/2,0)$) (sdots) {$\ldots$};
    
    \node[] at ($(s0)+(0,\vertdist)$) (c) {$c$};
    \node[] at ($(c)+(\nodedist,0)$) (cc) {$c_c$};
    \node[] at ($(cc)+(3*\nodedist,0)$) (cr) {$c_r$};
    \node[] at ($(cr)+(\nodedist/2,0)$) (cdots) {$\ldots$};
    \node[] at ($(c)-(\nodedist/2,0)$) (cdots) {$\ldots$};
    
    \path[->] (s) edge node[above] {\scriptsize{\texttt{call}}} (sc);
    \path[->] (sc) edge node[above] {\scriptsize{call abs.}} (sca);
    \path[->] (sca) edge node[above] {\scriptsize{gen}} (se);
    \path[->] (se) edge[dotted] (sr);
    \draw[->,rounded corners=8] (sc) -- ($(sc)+(0,0.6)$) -- ($(si)+(0,0.6)$) -- (si);
    \node[] at ($(3*\nodedist,0.82)$) (text) {\scriptsize{intersection}};
    
    \path[->] (c) edge node[above] {\scriptsize{\LLVM}} (cc);
    \path[->] (cc) edge[dotted] node[above] {\scriptsize{\LLVM}} (cr);
    
    \path[->] (c) edge node[right] {\scriptsize{$\repby$}} (s);
    \path[->] (cc) edge node[right] {\scriptsize{$\repby$}} (sc);
    \path[->] (cc) edge node[right] {\scriptsize{$\repby$}} (sca);
    \path[->] (cc) edge node[right=0.2cm] {\scriptsize{$\repby$}} (se);
    \path[->] (cr) edge node[right] {\scriptsize{$\repby$}} (sr);
    \path[->] (cr) edge node[right] {\scriptsize{$\repby$}} (si);
  \end{tikzpicture}
\end{center}
\caption{\label{NewProofFig}Relation between evaluation in \LLVM{} and paths in the SEG for intersections}
\end{figure}
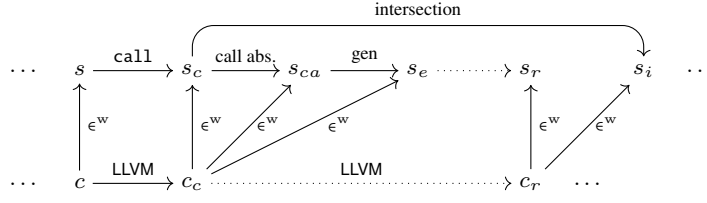

Hence, we now prove that if $c_c \repby s_c$ for a call state $s_c$,
the execution of the function in $c_c$'s topmost stack frame terminates in $c_r$,
 and  $c_r \repby s_r$
for a corresponding return state $s_r$, then we also have $c_r \repby s_i$ for the intersection
$s_i$ of $s_c$ and $s_r$. In the following, let $\conttop{c}_r$ be the context abstraction of size
$|s_i|$ of $c_r$.
To show that  $c_r \repby s_i$ holds, we prove that  $\conttop{c}_r$ is represented by
$s_i$. To this end, we have to check the requirements imposed by Def.\ \ref{def:representation-same-stack-size}. 

Since $c_r \repby s_r$,  the program position
and the domains of the local variables correspond to each other
in the topmost stack frame of $c_r$ and $s_r$. Therefore, they also correspond to each other 
in $\conttop{c}_r$ and $s_i$, since the program position and the domains of the local
variables are equal in $s_r$ and in the topmost stack frame of $s_i$.

All lower stack frames do not change between the concrete call state $c_c$  and the
concrete return state $c_r$ of the same
function since the topmost stack frame is never returned during this part of the
evaluation.
Therefore, due to $c_c \repby s_c$ we
have that all lower stack frames of $\conttop{c}_r$ (which are also lower stack frames of
$c_c$)
have the same program positions and the same domains of the local variables as
the
lower stack frames of $s_i$ (which are also lower stack frames of $s_c$).

For the third condition of Def.\ \ref{def:representation-same-stack-size}, 
since the allocation list of $s_i$'s topmost stack frame is empty by Def. \ref{def:intersection}, 
we do not require any
corresponding allocations in the topmost stack frame of $\conttop{c}_r$.
The stack allocations in the lower stack frames of $s_i$ are the same as the allocations
in the lower stack frames of $s_c$. Hence,  $c_c \repby s_c$ again implies that these
lower stack frames are also represented in $\conttop{c}_r$ (note that the context
abstraction can only increase the number of stack allocations in the stack frames).

Hence, to prove that  $\conttop{c}_r$ is represented by
$s_i$, it remains to show that the second condition of
Def.\ \ref{def:representation-same-stack-size} holds. So we have to show that
\begin{equation}
  \label{wr:model} \parbox{6.5cm}{$(\slassignment^{\conttop{c}_r},\slmemory^{\conttop{c}_r})$ is a model of
        $\sigma(\stateformula{s_i}_\SL)$\\for some concrete instantiation $\sigma : \Vsym \to
      \Z$.}
\end{equation}

To prove \eqref{wr:model}, we have to show that $\slmemory^{\conttop{c}_r}$ is a model of all of
the following subformulas.
\begin{enumerate}[(a)]
  \item $\slassignment^{\conttop{c}_r}(\sigma(\CS^{s_i}))$,
  \item $\sigma(\bigast\nolimits_{\varphi \in \AL^*(s_i)} \, \stateformula{\varphi}_\SL)$,
  \item $\sigma(\bigwedge \nolimits_{\varphi \in \PT^{s_i}} \, \stateformula{\varphi}_\SL)$,
  \item $\sigma(\KB^{s_i})$
  \item $\sigma(\{ 1 \leq v_1 \land v_1 \leq v_2 \mid \alloc{v_1}{v_2} \in \AL^*(s_i) \})$
  \item $\sigma(\{ v_2 < w_1 \lor w_2 < v_1 \mid \alloc{v_1}{v_2}, \alloc{w_1}{w_2} \in \AL^*(s_i), \; (v_1,v_2) \neq (w_1,w_2)\})$
  \item $\sigma(\{1 \leq v_1 \mid (v_1 \hookrightarrow_{\codevar{ty}} v_2) \in \PT^{s_i} \})$
  \item $\sigma(\{v_2 = w_2 \mid (v_1 \hookrightarrow_{\codevar{ty}} v_2), 
   (w_1 \hookrightarrow_{\codevar{ty}} w_2) \in \PT^{s_i} \mbox{ and } \models \, 
   \stateformula{s_i} \implies v_1 = w_1\})$
  \item $\sigma(\{v_1
  \neq w_1 \mid (v_1 \hookrightarrow_{\codevar{ty}} v_2),  
   (w_1 \hookrightarrow_{\codevar{ty}} w_2) \in \PT \mbox{ and } \models \, 
  \stateformula{s_i} \implies v_2 \neq w_2\})$
\end{enumerate}
We first define how to choose $\sigma$ and then show why $\slmemory^{\conttop{c}_r}$
is a model of the individual subformulas.
Since $c_r \repby s_r$, there exists an instantiation $\sigma_r$ that assigns a concrete value
to each symbolic variable in $s_r$ and thereby yields the context abstraction
$\conttop{c}'_r$
of size $|s_r| = 1$
of $c_r$
(i.e., $(\slassignment^{\conttop{c}'_r}, \slmemory^{\conttop{c}'_r}) \models \sigma_r(\stateformula{s_i}_\SL)$).
Similarly, since $c_c \repby s_c$, there exists an instantiation $\sigma_c$ with the
same property for $s_c$ and $c_c$. Then, we choose
$$\sigma = (\sigma_r \circ \renamed^{-1}) \circ \sigma_c\text{,}$$
where $\renamed$ is the function that renames symbolic variables from $s_r$ to create $s_i$.
Note that the domains of $(\sigma_r \circ \renamed^{-1})$ and $\sigma_c$ are disjoint since the range of
$\renamed$ only contains fresh variables.
\begin{enumerate}[(a)]
    \item We have $\CS^{s_i} = (\Pos_1^{s_r},\renamed(\LV_1^{s_r}),\emptyset) \cdot
      \widetilde{\CS}^{s_c}$,
where $\widetilde{\CS}^{s_c}$ is the call stack of $s_c$ without its topmost frame.       
For the topmost stack frame
of  $\CS^{s_i}$, we have the same assignment of program variables as in $s_r$
and we have $\sigma = 
\sigma_r \circ \delta^{-1}$ for variables in the range of $\delta$.
So since $c_r \repby s_r$, for every program variable $\code{x} \in \Ids$ where
$\LV_1^{s_r}$ is defined, we have $\slassignment^{\conttop{c}'_r}(\code{x}_1) =
\sigma_r(\LV^{s_r}_1(\code{x}))$. Thus, we also get
$\slassignment^{\conttop{c}_r}(\code{x}_1) =
\slassignment^{\conttop{c}'_r}(\code{x}_1) =
\sigma_r(\LV^{s_r}_1(\code{x})) = 
(\sigma_r \circ \delta^{-1}) (\delta(\LV^{s_r}_1(\code{x})))
= \sigma(\delta(\LV^{s_r}_1(\code{x})))$.

Similarly,  $c_c \repby s_c$ implies that for the corresponding context abstraction
$\conttop{c}_c$ of $c_c$ we have $\slassignment^{\conttop{c}_c}(\code{x}_i) =
\sigma_c(\LV^{s_c}_i(\code{x}))$ for $i \geq 2$. As the lower stack frames of $c_c$ are
not modified during the evaluation from $c_c$ to $c_r$,
we have
$\slassignment^{\conttop{c}_r}(\code{x}_i) =
\slassignment^{\conttop{c}_c}(\code{x}_i) =
\sigma_c(\LV^{s_c}_i(\code{x})) = 
\sigma(\LV^{s_c}_i(\code{x}))$ for $i \geq 2$.
      \item We have to show that if a concrete address in $\conttop{c}_r$ corresponds to an allocation
        from $\AL^*(s_i)$, then it is mapped to a value by $\slmemory^{\conttop{c}_r}$.
        In the topmost stack frame of $s_i$, there are no allocations. For allocations
        $\alloc{v_1}{v_2}$ from lower stack
        frames of $s_i$, the claim holds 
        since they are taken from $s_c$. Hence, $c_c \repby s_c$ implies
        $\slmemory^{\conttop{c}_c} \models \stateformula{\sigma_c(\alloc{v_1}{v_2})}_{SL}$
(where $\sigma_c(\alloc{v_1}{v_2}) = \sigma(\alloc{v_1}{v_2})$),
        and thus
        also    $\slmemory^{\conttop{c}_r} \models \stateformula{\sigma(\alloc{v_1}{v_2})}_{SL}$
as these stack frames are not
modified  during the evaluation from $c_c$ to $c_r$ and by the definition of the context
abstraction, all of these lower stack frames are still present in
$\conttop{c}_r$.\footnote{For that reason, we have $\slmemory^{\conttop{c}} =
  \slmemory^{c}$ for any context abstraction $\conttop{c}$ of any concrete state $c$.} 

Now we consider the allocations on the heap (i.e., from $\AL^{s_i}$). 
        Since $c_r \repby s_r$, all addresses within an allocation $\sigma_r(\alloc{v_1}{v_2})$ with
        $\alloc{v_1}{v_2} \in \AL^{s_r}$ are mapped to a value by $\slmemory^{\conttop{c}_r}$. 
        Hence, all addresses within an allocation $\sigma(\alloc{v_1}{v_2}) =
        \sigma_r(\renamed^{-1}(\alloc{v_1}{v_2}))$ with
        $\alloc{v_1}{v_2} \in \renamed(\AL^{s_r})$ are mapped to a value by
        $\slmemory^{\conttop{c}_r}$.
        
        Finally, since $c_c \repby s_c$, all addresses within an allocation
        $\sigma(\alloc{v_1}{v_2}) = \sigma_c(\alloc{v_1}{v_2})\linebreak[3]
        \in \AL^{s_c}$ are mapped to a  value by $\slmemory^{\conttop{c}_c}$. 
 For all addresses of those allocations in $\AL^{s_c}$ that are lost during
 the generalization from the call abstraction $s_{ca}$ to the entry state $s_e$
(i.e., where $\lostAL(s_{\mathit{ca}},s_e, \alloc{v_1}{v_2})$ holds),
 we know that they are not accessed
        (and modified) in the path to $s_r$ (else, this would yield the error state $\ERROR$).
        Therefore, since $c_r \repby s_r$, in $\slmemory^{\conttop{c}_r}$ these addresses are mapped to the
        same values.\label{item:alloc}
    \item Similar to \ref{item:alloc}, since $c_r \repby s_r$, for all entries
        $(v_1 \hookrightarrow_{\codevarx{ty}} v_2) \in \PT^{s_r}$, $\slmemory^{\conttop{c}_r}$ is a model of
      $\sigma_r(\stateformula{v_1 \hookrightarrow_{\codevarx{ty}} v_2}_\SL)$.
     So it is also a model of
        $\sigma_r(\renamed^{-1}(\stateformula{v_1 \hookrightarrow_{\codevarx{ty}} v_2}_\SL))$ for
      $(v_1 \hookrightarrow_{\codevarx{ty}} v_2)\in \renamed(\PT^{s_r})$
(where \mbox{\small $\sigma_r(\renamed^{-1}(\stateformula{v_1 \hookrightarrow_{\codevarx{ty}}
        v_2}_\SL)) = \sigma(\stateformula{v_1 \hookrightarrow_{\codevarx{ty}}
        v_2}_\SL)$}).
      
      Moreover, as argued in \ref{item:alloc},
 if an address
        corresponds to an allocation in $\AL^{s_c}$ that is lost during the generalization from the
        call abstraction $s_\mathit{ca}$ to the entry state $s_e$, then it is mapped to
        the same value by $\slmemory^{\conttop{c}_c}$ and  $\slmemory^{\conttop{c}_r}$.
        Hence, for all entries $(w_1 \hookrightarrow_{\codevarx{ty}} w_2) \in \PT^{s_c}$ where
        $\lostAL(s_\mathit{ca},s_e,\alloc{v_1}{v_2})$ holds for an allocation $\alloc{v_1}{v_2}$ that
        contains the address $w_1$, $\slmemory^{\conttop{c}_r}$ is a model of
        $\sigma_c(\stateformula{w_1 \hookrightarrow_{\codevarx{ty}} w_2}_\SL)$, i.e.,
of $\sigma(\stateformula{w_1 \hookrightarrow_{\codevarx{ty}} w_2}_\SL)$.\label{item:pointsto}
    \item With $c_r \repby s_r$ we know that $\sigma_r(\KB^{s_r})$ holds.
        Therefore, $\sigma_r(\renamed^{-1}(\renamed(\KB^{s_r})))$ (and hence $\sigma(\renamed(\KB^{s_r}))$)
        holds as well.
        Similarly, with $c_c \repby s_c$ we know that $\sigma_c(\KB^{s_c})$ and hence
        $\sigma(\KB^{s_c})$ holds, too.
        
        For each $\mu(v) = \renamed(w)$ in the third subset of $\KB^{s_i}$, note that
        $\sigma(\mu(v) = \renamed(w))$ is equal to
$\sigma_c(\mu(v)) = \sigma_r(\renamed^{-1}(\renamed(w)))$.
        Intuitively, $\sigma_c(\mu(v)) = \sigma_r(w)$ 
 holds for every $v \corrvar w \in \CV^{s_r}$ since in each symbolic execution
        step, we only add an entry to the component $\CV$ if during this step, the respective values are equal (and thus,
        in the corresponding concrete states, these symbolic variables have to be instantiated by the same
        values). For each entry $v \corrvar w \in \CV^{s_r}$, $v$ is a variable of $s_e$,
        and $\mu(v)$ is the corresponding variable in $s_c$. Thus, $\sigma_c(\mu(v))$ is equal to
        $\sigma_r(w)$.
    \item Since $\sigma_r(1 \leq v_1 \land v_1 \leq v_2)$ holds for all allocations
        $\alloc{v_1}{v_2} \in \AL^*(s_r)$, we also have
        $\sigma_r(\renamed^{-1}(1 \leq v_1 \land v_1 \leq v_2))$ for all
        $\alloc{v_1}{v_2} \in \renamed(\AL^*(s_r))$.
        Similarly, for all allocations $\alloc{v_1}{v_2} \in \AL^*(s_c)$,
        $\sigma_c(1 \leq v_1 \land v_1 \leq v_2)$ holds.
        Therefore, we have $\sigma(1 \leq v_1 \land v_1 \leq v_2)$ for all allocations of $s_i$.\label{item:alloc2}
    \item Since this condition holds for all pairs of allocations in $s_r$ resp. $s_c$,
      with the  reasoning as for \ref{item:alloc2}
            it also holds for all pairs of allocations in $s_i$ that originate from the
            same state.
            
        It remains to show for all pairs $\alloc{v_1}{v_2}, \alloc{w_1}{w_2} \in \AL^*(s_i)$ where
        $\alloc{v_1}{v_2} \in \renamed(\AL^*(s_r))$ and $\alloc{w_1}{w_2} \in \AL^*(s_c)$,
        that these allocations are
        disjoint. For stack allocations, this is trivial since the topmost stack frame of $s_i$ does
        not contain any allocations and the lower stack frames only contain allocations from $s_c$.

        Heap allocations are only added from $\AL^{s_c}$ if they have been removed in the generalization
        from the call abstraction $s_{\mathit{ca}}$ to the entry state $s_e$.
        In the concrete evaluation path from $c_c$ to $c_r$, allocation of already allocated areas is only
        possible if in the meantime, the area was freed.
        However, if \code{free} was invoked on an allocated area that is  lost during generalization, we would reach the
        error state $\ERROR$ during symbolic execution. Therefore, all allocations in $s_i$ that originate
        from $s_r$ are disjoint from those allocations in $s_c$ where
        $\lostAL(s_\mathit{ca},s_e,\alloc{w_1}{w_2})$ holds.\label{item:disjoint}
    \item We can follow the same line of reasoning as for \ref{item:alloc2}.
    \item For $(v_1 \hookrightarrow_{\codevar{ty}} v_2), (w_1 \hookrightarrow_{\codevar{ty}} w_2)
        \in \PT^{s_i}$, with \ref{item:pointsto} we have that $\slmemory^{\conttop{c}_r}$ is a
        model of $\sigma(\stateformula{v_1 \hookrightarrow_{\codevarx{ty}} v_2}_\SL \land
        \stateformula{w_1 \hookrightarrow_{\codevarx{ty}} w_2}_\SL)$.

        Recall that  $\slmemory^{\conttop{c}_r}$ is a model of
 $\sigma(\varphi)$ for all $\varphi \in \stateformula{s_i}$ that correspond to the cases
        (d)-(g). Let $\models \stateformula{s_i} \implies v_1 =
        w_1$ hold. If $v_1 = w_1$ is already implied by the subformulas  $\varphi \in \stateformula{s_i}$
from the cases (d)-(g), then  $\slmemory^{\conttop{c}_r}$ is also a model of $\sigma(v_1 =
w_1)$. Otherwise,         
        since $\stateformula{s_i}$ is the smallest set of formulas satisfying
        Def.\ \ref{def:StateFOLFormula}, one can use an inductive argument to show that       
        $\slmemory^{\conttop{c}_r}$ is also a model of $\sigma(v_1 = w_1)$. Thus,
    we have
          $\sigma(v_1) = \sigma(w_1) = n$ for some $n \in \Z$. Hence,
        $\slmemory^{\conttop{c}_r}$ is a model of
        $\stateformula{n \hookrightarrow_{\codevarx{ty}} \sigma(v_2)}_\SL \land
        \stateformula{n \hookrightarrow_{\codevarx{ty}} \sigma(w_2)}_\SL$, which implies
        $\sigma(v_2) = \sigma(w_2)$.\label{item:pointsto2}
    \item We can follow the same line of reasoning as for \ref{item:pointsto2}.
\end{enumerate}

Now we show that if the concrete \LLVM{} evaluation path $\pi$ is infinite,
then the corresponding sequence in $\GG$ is also infinite.
As stated above, a concrete evaluation step is represented by evaluation edges in the graph.
If there is an edge from $s_j$ to $s_{j+1}$ such that $c \repby s_j$ and $c \repby s_{j+1}$,
then this edge must be a call abstraction edge, a generalization edge, or a refinement edge,
for which we have the following application conditions:
\begin{itemize}
	\item A call abstraction is only performed after evaluation of a \code{call} instruction.
	\item A state may only be generalized if it has an incoming evaluation or call abstraction edge.
	\item Refinement is never performed on a state with an incoming refinement
          edge.
\end{itemize}
Therefore, the
longest possible sequence $s_j, s_{j+1}, s_{j+2}, \ldots$ in $\GG$ with $c \repby s_j$,
$c \repby s_{j+1}$, $c \repby s_{j+2}$, etc. has length 4, where $s_j$ and $s_{j+1}$ are connected
by a call abstraction edge, $s_{j+1}$ is generalized to $s_{j+2}$, and $s_{j+3}$ is a refinement
of $s_{j+2}$.

Hence, if the concrete \LLVM{} evaluation path $\pi$ is infinite, then
this can only be simulated
by an infinite symbolic execution $s_0, s_1, s_2, \ldots$ in $\GG$.
Here, 
each concrete \LLVM{} evaluation
step is represented by an evaluation edge in $\GG$, with only one exception: if a called
auxiliary function $\rfunc$  is
entered (in a state $c_e$) and returned (in a state $c_r$), then this path is summarized in the symbolic
execution graph by an intersection edge from a call state $s_c$ to an intersection state
$s_i$. Therefore, if we have an infinite number of concrete evaluation steps, then we also
have an infinite number of symbolic execution steps in the corresponding path in $\GG$.

On the other hand, if the concrete \LLVM{} evaluation path $\pi$ is finite and ends in a concrete
state
$c$, then one can simulate $\pi$
by a path in $\GG$ that ends in a state $s$ that weakly represents $c$. The reason is again
that  each concrete \LLVM{} evaluation
step is represented by an evaluation edge in $\GG$, with the exception of called
auxiliary functions $\rfunc$  that are
entered (in a state $c_e$) and returned (in a state $c_r$). Again, these paths are
summarized in the SEG  by an intersection edge from  $s_c$ to
$s_i$.
However, if the final state $c$ of $\pi$ is in the middle
of a call of an auxiliary function $\rfunc$, then the corresponding path in $\GG$ does not follow the
intersection edge, but it follows the call abstraction edge from the call state
$s_c$ to the call abstraction $s_{ca}$, and further via the generalization
edge to an entry state $s_e$ of $\rfunc$, and then stops in the middle of the path from
$\rfunc$'s entry state $s_e$ to its return state $s_r$.
\end{proof}


\setcounter{theorem}{\value{cor-memory-safety}}

\begin{corollary}[Memory Safety of \LLVM Programs]
 Let $\PP$ be a program with a complete
symbolic execution graph $\GG$. Then $\PP$ is memory safe for all states
represented by $\GG$.
\end{corollary}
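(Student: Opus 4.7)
The plan is to derive the corollary directly from \cref{thm:soundness-graph-construction-main-text} by a simple contradiction argument. Suppose, for contradiction, that $\PP$ is not memory safe for some concrete state $c$ represented by a state $s_0$ of $\GG$. By the definition of memory safety given before \cref{thm:soundness-graph-construction-main-text}, this means there exists a finite evaluation $\pi = c_0 \cto c_1 \cto \ldots \cto c_n$ with $c_0 = c$, $c_n = \ERROR$, and $n \geq 1$.

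Next, I would apply the soundness theorem to $\pi$. Since $c_0 \repby s_0$ and $\GG$ is weakly complete (being complete implies weakly complete by \cref{def-complete-seg}), the theorem yields a corresponding finite path $s_0, s_1, \ldots, s_m$ in $\GG$ ending at some state $s$ with $c_n \repby s$, i.e., $\ERROR \repby s$. By the last clause of \cref{def:weak-representation}, $\ERROR$ is only weakly represented by $\ERROR$ itself, hence $s = \ERROR$. So $\GG$ contains the node $\ERROR$.

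This, however, contradicts the hypothesis that $\GG$ is complete: by \cref{def-complete-seg}, a complete SEG is weakly complete and does not contain $\ERROR$. Hence no such evaluation $\pi$ exists, and $\PP$ is memory safe for every concrete state represented by $\GG$.

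The argument is essentially a one-line application of the soundness theorem together with the definition of completeness, so there is no real obstacle; the only subtlety is making sure that the soundness theorem is applicable in the ``finite and ending in $\ERROR$'' case, which is covered by the final clause of \cref{thm:soundness-graph-construction-main-text} stating that a finite evaluation ending at $c$ is mirrored by a path ending at some $s$ with $c \repby s$.
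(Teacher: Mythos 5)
Your proof is correct and follows essentially the same route as the paper's: apply Thm.~\ref{thm:soundness-graph-construction-main-text} to a hypothetical finite evaluation ending in $\ERROR$, conclude that $\GG$ would then contain $\ERROR$, and contradict completeness. Your version is just slightly more explicit about the step from $\ERROR \repby s$ to $s = \ERROR$.
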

\begin{proof}
  If
    $c_0$ is represented by a state $s_0$ in the SEG $\GG$, then $c_0
\cto^+ \ERROR$ implies that $\ERROR$ is the last state in a finite computation and 
 by Thm.\ \ref{thm:soundness-graph-construction-main-text}, 
there is a path from  $s_0$ to $\ERROR$ in $\GG$, which
contradicts the prerequisite that $\GG$ is complete. 
\end{proof}


\setcounter{theorem}{\value{thm-termination}}

\begin{theorem}[Termination]
Let
$\Prog$ be an \emph{\LLVM} program with a complete symbolic execution graph
$\GG$ and let  \(\I_1,\ldots,\I_m\) be the ITSs resulting from the SCCs of \(\GG\).
If all ITSs \(\I_1,\ldots,\I_m\) terminate,
then $\Prog$ also terminates for all concrete states $c$
that are represented
  by  a state of $\GG$.
\end{theorem}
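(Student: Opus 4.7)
The plan is to proceed by contradiction, reducing non-termination of $\Prog$ to non-termination of one of the ITSs $\I_i$ via the soundness result (Thm.~\ref{thm:soundness-graph-construction-main-text}). So suppose for contradiction that $\Prog$ admits an infinite \LLVM evaluation $\pi = c_0 \cto c_1 \cto c_2 \cto \dots$ where $c_0$ is weakly represented by some state $s_0$ of $\GG$. Applying Thm.~\ref{thm:soundness-graph-construction-main-text}, we obtain an infinite sequence of abstract states $s_0, s_1, s_2, \dots$ that form a path in $\GG$ and indices $0 = i_0 \leq i_1 \leq \dots$ with $c_{i_j} \repby s_j$ for all $j$. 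Since $\GG$ is finite, this infinite path must eventually be trapped inside some SCC $\mathcal{C}$ of $\GG$, say from index $j_0$ onward. Let $\I_i$ be the ITS extracted from $\mathcal{C}$.

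The next step is to turn the tail $s_{j_0}, s_{j_0+1}, \dots$ into an infinite run of $\I_i$. For this we interpret each $s_j$ as a state of $\I_i$ whose symbolic variables are instantiated by the concrete values witnessed by $c_{i_j}$ (via the instantiation $\sigma_j$ guaranteed by Def.~\ref{def:representation-same-stack-size} and the weak representation). For every edge of $\GG$ inside $\mathcal{C}$, we must argue that the corresponding transition of $\I_i$ fires correctly from these instantiated values:
\begin{itemize}
\item For evaluation, refinement, call-abstraction, and intersection edges the transition imposes $v' = v$ together with $\stateformula{s_{j-1}}$; both are satisfied because (i)~the underlying symbolic variables denote the same concrete value across two consecutive abstract states connected by such an edge, and (ii)~$\sigma_{j-1}(\stateformula{s_{j-1}})$ holds by $c_{i_{j-1}} \repby s_{j-1}$, since $\stateformula{s}$ is the first-order weakening of $\stateformula{s}_\SL$.
\item For generalization edges with instantiation $\mu$, the transition requires $v' = \mu(v)$, which exactly matches the renaming of the symbolic variables performed by the generalization rule; soundness of generalization (conditions (b)--(e)) ensures that the concrete values $\sigma_{j-1}(\mu(v))$ and $\sigma_j(v)$ coincide.
\end{itemize}
Putting these pieces together yields an infinite run of $\I_i$, contradicting termination of $\I_i$.

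The main obstacle is the middle bullet point above: making precise that the sequence of concrete instantiations $\sigma_j$ extracted along the abstract path is actually coherent with the transitions of $\I_i$, \emph{especially across intersection edges}, which summarise a whole terminating call of an auxiliary function in one step. Here one must exploit the construction of intersections (Def.~\ref{def:intersection}): the call state $s_c$ and the return state $s_r$ are linked through the variable-identity component $\CV$ and the instantiation $\mu$ of the generalization from the call abstraction to the entry state, and the knowledge base $\KB^{s_i}$ inherits precisely the equalities $\mu(v) = \renamed(w)$ for $v \corrvar w \in \CV^r$. These equalities (together with $\renamed(\KB^{s_r})$ and $\KB^{s_c}$) are exactly what is needed to show that the concrete values of $c_{i_{j-1}}$ and $c_{i_j}$ satisfy $\stateformula{s_i}$ under a suitable combined instantiation; this is essentially the content already established in the proof of Thm.~\ref{thm:soundness-graph-construction-main-text} for intersection edges, which we may therefore reuse directly. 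The remaining obligations are routine finite-graph and SCC arguments, closely mirroring \cite[Thm.~13]{LLVM-JAR}.
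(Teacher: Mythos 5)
Your proposal is correct and follows essentially the same route as the paper: invoke Thm.~\ref{thm:soundness-graph-construction-main-text} to lift the infinite concrete run to an infinite path in $\GG$, extract concrete instantiations from the representation relation, and verify by case analysis on edge types (generalization with $v'=\mu(v)$ versus all others with $v'=v$) that these instantiations satisfy the transition conditions, yielding an infinite ITS run. The only cosmetic difference is that you localize to the SCC in which the path is eventually trapped, whereas the paper works with the union $\I_1\cup\ldots\cup\I_m$; both variants are sound.
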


\begin{proof}
Let $\pi = c_0 \cto c_1 \cto c_2 \cto \dots $ be an infinite evaluation sequence of
concrete states such
that $c_0$ is represented by some state $s_0$ in $\GG$. 
By  \cref{thm:soundness-graph-construction-main-text} there exists an infinite sequence
of states $s_{0},  s_{1},  s_{2}, \dotsc$ where $\GG$ has an edge from $s_{j-1}$ to $s_j$
if $j > 0$, and  there exist $0 = i_0 \leq i_1 \leq  \ldots$ with $c_{i_j} \repby s_j$
for all $j \geq 0$.
For any $i_j$, let $\sigma_{i_j}$ be the concrete instantiation with
$(\slassignment^{\conttop{c}_{i_j}},\slmemory^{\conttop{c}_{i_j}}) \models
\sigma_{i_j}(\stateformula{s_j}_\SL)$ for the context abstraction $\conttop{c}_{i_j}$ of
$c_{i_j}$ with $|\conttop{c}_{i_j}| = |s_j|$.

Clearly, termination of the ITSs $\I_1,\ldots,\I_m$ is equivalent to termination of their
union $\I = \I_1 \cup \ldots \cup \I_m$. Since  $\GG$ has an edge from $s_{j}$
to $s_{j+1}$ for all $j$, $\I$ also has a transition from  $s_{j}$
to $s_{j+1}$ with some condition $\CON_j$. We now show that for all $j \geq 0$ we have
\begin{equation}
  \label{ITSnonterm} \models (\sigma_{i_j} \cup \sigma'_{i_{j+1}})(\CON).
  \end{equation}
Here, for any instantiation $\sigma$, let $\sigma'$ be the corresponding instantiation of
the post-variables $\Vsym'$, i.e., $\sigma'(v')$ is defined to be $\sigma(v)$. 
Then \eqref{ITSnonterm} implies that there is an infinite evaluation with the ITS $\I$,
i.e., that $\I$ is not terminating.

To prove  \eqref{ITSnonterm},
we perform a case analysis based on the type of the edge between $s_{j}$ and $s_{j+1}$ in $\GG$.
\begin{itemize}
\item \underline{Generalization Edge:}
In this case, by construction $\I$ has a transition from $s_{j}$ to $s_{j+1}$ with the
condition $\CON = \stateformula{s_{j}} \cup \{v' = \mu(v) \mid v \in \Vsym(s_{j+1})\}$.
Recall that $(\slassignment^{\conttop{c}_{i_j}},\slmemory^{\conttop{c}_{i_j}}) \models
\sigma_{i_j}(\stateformula{s_j}_\SL)$.
By $\stateformula{s_j} \subseteq \stateformulaSL{s_j}$ 
and the fact that there are no occurrences of program variables or $\hookrightarrow$ in 
$\stateformula{s_{j}}$, we obtain
$\models \sigma_{i_j}(\stateformula{s_{j}})$.

Moreover, since the edge from $s_{j}$ to $s_{j+1}$ is a generalization edge, we have
$\sigma_{i_{j+1}}(v) = \sigma_{i_j}(\mu(v))$ for all $v \in \Vsym(s_{j+1})$.
We therefore have $\models (\sigma_{i_j} \cup 
\sigma_{i_{j+1}}')(\{v' = \mu(v) \mid v \in \Vsym(s_{j+1})\})$.
Together, we obtain $\models (\sigma_{i_j} \cup 
\sigma_{i_{j+1}}')(\CON)$, i.e., \eqref{ITSnonterm} holds.
\smallskip
\item \underline{All Other Edge Types:}
By construction $\I$ has a transition from $s_{j}$ to $s_{j+1}$ with the
condition $\CON = \stateformula{s_{j}} \cup \{v' = v \mid v \in \Vsym(s_j)\}$.
Using the same reasoning as for generalization edges, we get
$\models \sigma_{i_j}(\stateformula{s_{j}})$.

Since the edge from $s_{j}$ to $s_{j+1}$ is not a generalization edge,
we have $\sigma_{i_{j+1}}(v) = \sigma_{i_j}(v)$ for all $v \in \Vsym(s_j)$.
We therefore obtain $\models (\sigma_{i_j} \cup 
\sigma_{i_{j+1}}')( \{v' = v \mid v \in \Vsym(s_j)\})$.
Together, we have $\models (\sigma_{i_j} \cup \sigma_{i_{j+1}}')(\CON)$,
i.e., \eqref{ITSnonterm} holds.
\end{itemize}
\end{proof}

\setcounter{theorem}{\value{auxctr}}

\bibliography{references}

\end{document}